\title{Deterministic Fully Dynamic SSSP and More}
\newcommand{\email}[1]{\href{mailto:#1}{#1}}
\date{\vspace{-5ex}}
\author{Jan van den Brand\thanks{School of Computer Science, Georgia Institute of Technology, USA. \email{vdbrand@gatech.edu}.}
\and Adam Karczmarz\thanks{University of Warsaw and IDEAS NCBR, Poland. \email{a.karczmarz@mimuw.edu.pl}. Partially supported by the ERC CoG grant TUgbOAT no 772346 and the National Science Centre (NCN) grant no. 2022/47/D/ST6/02184.}}
\theoremstyle{plain}
\newtheorem{problem}{Open Problem}
\crefname{problem}{open problem}{open problems}
\newtheorem{theorem}{Theorem}[section]
\newtheorem{lemma}[theorem]{Lemma}
\newtheorem{corollary}[theorem]{Corollary}
\newtheorem{remark}[theorem]{Remark}
\def\poly{\operatorname{poly}}
\def\polylog{\operatorname{polylog}}
\newcommand{\mA}{\mathbf{A}}
\newcommand{\mB}{\mathbf{B}}
\newcommand{\mL}{\mathbf{L}}
\newcommand{\mR}{\mathbf{R}}
\newcommand{\mM}{\mathbf{M}}
\newcommand{\mN}{\mathbf{N}}
\newcommand{\mI}{\mathbf{I}}
\newcommand{\mU}{\mathbf{U}}
\newcommand{\mV}{\mathbf{V}}
\newcommand{\F}{\mathbb{F}}
\newcommand{\Z}{\mathbb{Z}}
\newcommand{\Ot}{\ensuremath{\widetilde{O}}}
\newcommand{\eps}{\ensuremath{\epsilon}}
\newcommand{\dist}{\delta}
\newcommand{\wei}{w}
\begin{document}

\maketitle
\pagenumbering{roman}
\begin{abstract}
We present the first non-trivial fully dynamic algorithm maintaining exact single-source distances in unweighted graphs.
This resolves an open problem stated in~\cite{Sankowski05,BrandN19}.
Previous fully dynamic single-source distances data structures were all approximate, but so far, non-trivial dynamic algorithms for the exact setting could only be ruled out for polynomially weighted graphs~\cite{AbboudW14}. 
The exact unweighted case remained the main case
for which neither a subquadratic dynamic algorithm nor a quadratic lower bound was known.

Our dynamic algorithm works on directed graphs and is deterministic, and can report a single-source shortest paths tree
in subquadratic time as well.
Thus we also obtain the first deterministic fully dynamic data structure for reachability (transitive closure) with subquadratic update and query time. This answers an open problem from \cite{BrandNS19}.

Finally, using the same framework we obtain the first fully dynamic data structure maintaining
all-pairs $(1+\eps)$-approximate distances within non-trivial sub-$n^\omega$ worst-case update
time while supporting optimal-time approximate shortest path reporting at the same time. 
This data structure is also deterministic and therefore implies the first known non-trivial \emph{deterministic worst-case} bound for recomputing the transitive closure of a digraph.

\end{abstract}
\newpage
\tableofcontents
\newpage
\pagenumbering{arabic}

\section{Introduction}

The dynamic shortest path problem is a classic problem in graph algorithms that seeks to maintain the shortest path distances between pairs of vertices in a graph as the graph is subject to updates such as edge insertions, deletions, or weight changes.
This problem and its variants have been extensively studied in the theoretical computer science literature, and different algorithmic approaches have been proposed and analyzed depending on the nature of the updates (incremental \cite{AusielloIMN91,GutenbergWW20,HenzingerKN-ICALP13,Karczmarz0S22,KyngMG22,ChechikZ21}, decremental \cite{BaswanaHS07,EvaldFGW21,EvenS81,BernsteinC16,BernsteinGS20,BernsteinGW20,BernsteinR11,Bernstein16,ChuzhoyS21} or fully dynamic \cite{King99,DemetrescuI01,Thorup04,Thorup05,HenzingerK95,BrandFN22,BergamaschiHGWW21,Sankowski05,AlokhinaB23,AbrahamCG12}) and the type of queries ($st$ \cite{BrandNS19,HenzingerKNS15,AbboudW14}, single-source \cite{BrandN19,BernsteinC17,Bernstein17,ChuzhoyK19,Chuzhoy21,HenzingerKN14,GutenbergW20a,BernsteinGS21} or all-pairs \cite{DemetrescuI04,AbrahamCK17,GutenbergW20b,ChechikZ23,Mao23,DemetrescuI02,HenzingerKN13,Bernstein09,RodittyZ12,AbrahamCT14}).
Here, ``incremental'' refers to the setting with only edge insertions, ``decremental'' to only edge deletions, and ``fully dynamic'' to the general setting where a graph undergoes both insertions and deletions.

Among these different variants, single-source is one of the most well-studied.
It offers a good trade-off between generality and speed, as it maintains several distances at once while allowing much faster update times than the all-pairs variant. 
It also has a simple trivial solution by recomputing the shortest paths tree in $O(m)$ time ($O(n^2)$ time for dense graphs) via Dijkstra's algorithm after each update, which serves as a natural benchmark for all improvements.
As a result, dynamic single-source shortest paths is one of the most prolific problems in the dynamic graphs area with major results for different variants such as
incremental~\cite{GutenbergWW20,ChechikZ21,HenzingerKN-ICALP13,KyngMG22}, decremental~\cite{HenzingerKN14,GutenbergW20a,EvenS81,BernsteinC16,BernsteinC17,Bernstein17,BernsteinGS20,BernsteinGW20,BernsteinGS21,BernsteinR11,ChuzhoyK19},
and fully dynamic~\cite{RodittyZ11,BrandN19,BrandFN22,BergamaschiHGWW21,AlokhinaB23}.
Yet, despite being one of the most studied variants of dynamic shortest paths, the case of \emph{exact} fully dynamic single source shortest paths -- even in dense graphs with $\Theta(n^2)$ edges -- has still remained open and withstood all attempts to construct efficient data structures.
Even in the weakest graph model, i.e.~undirected and unweighted graphs,
no exact fully dynamic dynamic algorithm is known other than trivial recomputation of the shortest paths after each update via Dijkstra's algorithm.
Previously, progress has only been made for the $st$-case \cite{Sankowski05,BrandNS19,BrandFN22,KarczmarzS23} where an exact fully dynamic distance can be maintained on unweighted graphs in $O(n^{1.673})$ time (i.e., subquadratic in $n$ and sublinear in the number of edges for a sufficiently large density), but that approach does not extend to the single-source case.
It has been asked repeatedly to find a dynamic algorithm for dynamic single source shortest paths since almost two decades ago; see, e.g.,~\cite{DemetrescuI04,Thorup04}.
One reason why an efficient data structure has been elusive is that on weighted graphs (with sufficiently large positive integer weights allowed), it is impossible to obtain a dynamic algorithm with subquadratic update time unless the APSP conjecture is false \cite{RodittyZ11,AbboudW14}. 
So a natural research direction is to find an exact algorithm for the unweighted case (and any density) which was also explicitly stated 
as an open problem in \cite{Sankowski04,BrandN19}.
\begin{problem}[\cite{DemetrescuI04,Thorup04,Sankowski04,BrandN19}]\label{problem:sssp}
  Can one maintain fully dynamic exact shortest paths in subquadratic time (in the number of vertices $n$) on unweighted graphs?
\end{problem}

It was shown by Abboud and Vassilevska Williams \cite{AbboudW14} that dynamic $st$-reachability and $st$-shortest path require algebraic techniques (i.e.,~fast matrix multiplication) if one hopes to beat the naive $O(n^2)$ update time algorithm of running BFS/Dijkstra after each update.
So, an affirmative answer to \cref{problem:sssp} is only possible if we rely on algebraic techniques.
In previous work, the algebraic perspective requires dynamic algorithms to perform calculations involving very large numbers,
 and to reduce the bit-length of the numbers involved, past work relied on the randomized Schwartz-Zippel lemma (i.e.,~randomized polynomial identity testing) \cite{Sankowski04,BrandNS19,BrandS19} or performing all arithmetic modulo a random prime \cite{KingS02,DemetrescuI00}.
As a result, the vast majority of algebraic data structures for dynamic graph problems is randomized \cite{Sankowski04,Sankowski05,Sankowski07,SankowskiM10,BergamaschiHGWW21,BrandNS19,BrandN19,BrandS19,GuR21}.

Deterministic algorithms would be preferable because they allow for greater confidence in their correctness and efficiency, as opposed to randomized algorithms which may only offer probabilistic guarantees on correctness or running time that do not hold in all cases. Moreover, randomized algorithms may restrict the ``adversary'' creating the sequence of updates from being adaptive. In contrast, deterministic algorithms avoid these issues and allow for more predictable behavior.

For algebraic data structures, only in the case of shortest paths in the incremental setting \cite{Karczmarz0S22} or on undirected graphs in the approximate setting \cite{BrandFN22} has derandomization been possible, which naturally excludes fully dynamic reachability or fully dynamic shortest paths on directed graphs.
Finding a way to derandomize fully dynamic reachability has been asked in \cite{BrandNS19}.
\begin{problem}[{\cite{BrandNS19}}]\label{problem:deterministic}
	Can fully dynamic $st$-reachability be solved \emph{deterministically} in subquadratic time?
\end{problem}
In other dynamic graph settings, derandomization has received a lot of attention \cite{ChechikZ23,BrandFN22,Lacki11,BernsteinGS21,BernsteinGS20,Bernstein17,BernsteinC17,BernsteinC16,Chuzhoy21,AbrahamCK17,ChuzhoyS21,HenzingerKN13,Karczmarz0S22}, but for fully dynamic reachability with subquadratic update time the question has remained open.

In this work, we answer both \cref{problem:sssp,problem:deterministic}. We construct a fully dynamic algorithm that maintains \emph{exact} single-source shortest paths in subquadratic update time, thus providing an answer to \cref{problem:sssp}.
This dynamic algorithm is deterministic and works on directed graphs, and can thus solve fully dynamic $st$-reachability deterministically, solving \cref{problem:deterministic}.

Finally, the framework that we use is powerful and flexible enough to also yield
new worst-case update bounds for fully dynamic $(1+\eps)$-approximate all-pairs shortest paths.

\subsection{Our results}

\paragraph{SSSP} We present the first fully dynamic algorithm that can maintain \emph{exact} single source distances, answering \cref{problem:sssp}. 
\begin{theorem}\label{thm:intro:singlesource}
	Let $G=(V,E)$ be unweighted digraph. There exist the following data structures supporting
	fully dynamic single-edge updates to $G$ and answering queries about exact distances from an arbitrary query source vertex $s\in V$ to all other vertices in $G$.
	\begin{itemize}
    \item A deterministic data structure with $O(n^{1.969})$ worst-case update and query time. (Cf.~\Cref{t:ss-distances}) 
		\item A Monte Carlo randomized data structure with $O(n^{1.933})$ worst-case update and query time. (Cf.~\Cref{t:ss-distances-randomized})
	\end{itemize}
\end{theorem}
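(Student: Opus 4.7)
The plan is to deduce this theorem from two refined results — \Cref{t:ss-distances} for the deterministic bound and \Cref{t:ss-distances-randomized} for the randomized one — both specialized from a common algebraic framework. The framework has three pieces: an algebraic core that handles short distances, a hub-based reduction that handles long distances, and a parameter balance that yields the final exponents.

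\paragraph{Algebraic core.} Fix a threshold $d$ to be optimized later. Under the edge update sequence I would maintain a representation of the truncated matrix power series
\[ M \;=\; (I - zA)^{-1} \bmod z^{d+1} \;=\; \sum_{k=0}^{d} A^{k} z^{k}, \]
where $A$ is the $0/1$ adjacency matrix of the current graph. Since $(A^k)_{sv}$ counts directed walks of length $k$ from $s$ to $v$ with nonnegative multiplicities, $\dist(s,v)$ equals the smallest $k \leq d$ with $[z^k]\,M_{sv} \neq 0$ whenever $\dist(s,v) \leq d$; predecessor witnesses extracted the same way give an $s$-rooted shortest-path tree. Each edge update is a rank-$1$ perturbation of $I - zA$ over $\F[z]/(z^{d+1})$, and after a batch of $b$ updates a Woodbury-style identity reflects the batch into $M$ at a cost dominated by polynomial rectangular matrix products of shape $n \times n$ by $n \times b$. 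Rebuilding $M$ from scratch after every $b$ updates and spreading the rebuild work across the batch converts the amortized bound into a worst-case one.

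\paragraph{Handling far vertices and balancing.} For $v$ with $\dist(s,v) > d$, every shortest $s$-to-$v$ path has at least $d$ internal vertices, so a set $H$ of $\Ot(n/d)$ hubs suffices to hit all of them. I would maintain $s \to H$ and $H \to V$ distances — the latter by restricting $M$ to the rows indexed by $H$ — and combine them by a min-plus convolution at query time. Choosing $d$ and $b$ to balance the rebuild cost, the hub-maintenance cost, and the per-query cost, and plugging in the current (rectangular) matrix multiplication exponents, yields the sub-$n^{2}$ bounds. The randomized specialization samples $H$ uniformly and replaces $M$ by its scalar evaluation at a random point $z_0$ in a sufficiently large field, converting degree-$d$ polynomial arithmetic into field arithmetic and saving roughly a factor of $d$; correctness is certified by Schwartz--Zippel.

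\paragraph{Main obstacle.} The hardest step is the deterministic specialization. The algebraic subroutine must handle genuine truncated polynomial matrices instead of a single scalar evaluation, controlling bit-length and degree throughout; and the hub set must be constructed without randomness so that it provably hits every long shortest path that may arise during the adversary's update sequence. Solving both constraints at once while still beating $n^{2}$ is what forces the deterministic exponent $1.969$ above the randomized $1.933$.
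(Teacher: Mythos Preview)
Your proposal has a genuine structural gap that the paper explicitly identifies and works around. You propose to ``maintain $s\to H$ and $H\to V$ distances --- the latter by restricting $M$ to the rows indexed by $H$.'' But this means tracking $|H|\cdot n=\Theta(n^2/d)$ entries of the truncated inverse, and each entry is a degree-$d$ polynomial. Even with the randomized coefficient reduction, the per-update cost of keeping these entries current is at least $\Theta(d)\cdot\Theta(n^2/d)=\Theta(n^2)$; deterministically it is $\Theta(n^2 d)$. The paper states this obstruction directly (Section~3.1): maintaining $h$-bounded distances for the $\Theta(n^2/h)$ pairs $(H\cup\{s\})\times V$ is ``inherently quadratic using this approach.'' So the balance you describe cannot beat $n^2$ regardless of how $d$ and $b$ are chosen.

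What the paper does instead is quite different. It operates in phases; at the start of each phase it precomputes a \emph{static} collection $\Pi$ of $O(n^2)$ short paths together with a sublinear ``congested subset'' $C$ (Lemma~4.1/Theorem~5.1), guaranteeing that any $\leq 2\Delta$ adversarial vertex deletions destroy only $O(\Delta h\tau n)=o(n^2)$ of these paths. The algebraic data structure then maintains $h$-bounded distances only for a dynamic, \emph{unstructured} set $Y$ of subquadratic size --- the pairs through $C\cup D$ plus the affected pairs whose precomputed path was hit --- rather than for all of $H\times V$. The hitting set $H$ is used solely to \emph{sparsify} the surviving static paths in $\Pi$ at query time, not to index maintained distances. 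This decoupling of the hitting set from the maintained submatrix is the idea your proposal is missing.

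A second, smaller issue: for the randomized speedup you suggest evaluating $M$ at a random scalar $z_0$ via Schwartz--Zippel. That collapses the polynomial to a single field element, from which you cannot recover the smallest $k$ with $[z^k]M_{sv}\neq 0$; it tests reachability, not distance. The actual randomized saving comes from counting modulo a random $\Theta(\log n)$-bit prime, which keeps the degree-$d$ polynomial but shrinks each coefficient from $\widetilde{O}(d)$ bits to $\widetilde{O}(1)$ bits, shaving one factor of $d$ (not all of it).
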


The algorithm works on unweighted directed graphs but could also be easily generalized to handle small (e.g., sufficiently low-polynomial) integer weights $[1,W]$ at the cost of a $\poly(W)$ multiplicative time overhead.\footnote{Just like other algebraic data structures for exact distances~\cite{BrandFN22, Sankowski05, KarczmarzS23}. See e.g.~\cite{KarczmarzS23}.} 
Recall that an exact algorithm with subquadratic update time for graphs with integer weights in $[1,n^c]$, where $c>0$ is a sufficiently large constant, is impossible under the APSP conjecture \cite{AbboudW14}. 

Previous results could maintain single-source shortest paths (or single-source distances) only in the approximate setting \cite{BrandN19,BergamaschiHGWW21,BrandFN22}, or in the partially dynamic setting \cite{HenzingerKN14,GutenbergWW20,GutenbergW20a,HenzingerK95,HenzingerKN-ICALP13,EvenS81,BernsteinC16,BernsteinC17,Bernstein17,BernsteinGS20,BernsteinGW20,BernsteinGS21,BernsteinR11}.

The data structures behind~\Cref{thm:intro:singlesource} are algebraic, i.e.,~make use of fast matrix multiplication. Algebraic techniques are inevitable to obtain a non-trivial update time \cite{AbboudW14}.
That being said, using any subcubic matrix multiplication algorithm (e.g., Strassen's) would yield subquadratic bounds in~\Cref{thm:intro:singlesource}.
If $\omega=2$ (here $n^\omega$ is the number of operations required to multiply two $n\times n$ matrices; currently it is known that $\omega\leq 2.372$~\cite{fmm,WilliamsXXZ23}), the deterministic update bound becomes $\Ot(n^{2-1/30})$ and the randomized one becomes $\Ot(n^{2-1/14})$.

Both variants can be extended, deterministically in a black-box way, to support \emph{SSSP tree} reporting queries within slightly larger, albeit still subquadaratic time (the latter one against an adaptive adversary). Concretely, if $\omega=2$, the tree-reporting bounds are $\Ot(n^{2-1/60})$ and $\Ot(n^{2-1/28})$, respectively, while for current bounds on $\omega$ \cite{AlmanW21,GallU18,fmm} they are $O(n^{1.985})$ and $O(n^{1.967})$, respectively.
Note our dynamic algorithms work on directed graphs, so they can be also used to solve fully dynamic $st$-reachability deterministically. This answers \cref{problem:deterministic}, asked in \cite{BrandNS19}. 

\paragraph{All-pairs problems} 
By reusing and extending the framework used for fully dynamic single-source data structures, we also obtain a new data structure for fully dynamic approximate all-pairs shortest paths, captured by the following \Cref{t:tc-reporting}.
\begin{restatable}{theorem}{tcreporting}\label{t:tc-reporting}
	Let $G=(V,E)$ be an unweighted directed graph. Let $\eps\in (0,1)$. There exists a deterministic data structure explicitly maintaining 
	all-pairs $(1+\eps)$-approximate distances in $G$ under vertex updates in $O(n^{2.293}/\eps)$ worst-case time per update.
	
	The data structure can be extended to:
	\begin{itemize}
		\item enable reporting a $(1+\eps)$-approximately shortest path $P=s\to t$ in $G$
		for query vertices $s,t\in V$ in optimal $O(|P|)$ time.
		\item explicitly maintain \emph{reachability trees} from
		all the sources $s\in V$ within the same update bound.
	\end{itemize}
\end{restatable}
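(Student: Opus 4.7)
The plan is to extend the framework behind the single-source data structures of \Cref{thm:intro:singlesource}, which already maintains exact distances via a dynamic rectangular matrix product of the form $\mA_{V\times S}\cdot \mB_{S\times V}$, where $S$ is a hitting set on shortest paths of bounded length. For the $(1+\eps)$-approximate all-pairs setting, I would instantiate the framework over $O(\log_{1+\eps} n)=O(\eps^{-1}\log n)$ \emph{distance scales}. For each scale $d\in\{(1+\eps)^i\}$, use a hitting set $S_d\subseteq V$ of size $\Ot(n/d)$ that hits every shortest path of length at least $d$ in $G$. The approximate distance of any pair $(u,v)$ at scale $d$ is then obtained as a $(\min,+)$-style product $\min_{w\in S_d}\,\dist_{\le d}(u,w)+\dist_{\le d}(w,v)$, where the truncated ``halves'' $\dist_{\le d}(\cdot,w)$ and $\dist_{\le d}(w,\cdot)$ are maintained using the exact single-source machinery restricted to radius $d$ from each pivot.

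For the dynamic maintenance, at each scale I would plug into the same dynamic rectangular-matrix-product subroutine used for \Cref{thm:intro:singlesource}. Under a vertex update (a row/column change in the adjacency structure), each of the $O(\eps^{-1}\log n)$ rectangular products of dimensions $n\times |S_d|$ and $|S_d|\times n$ can be refreshed in worst-case time $\Ot(n^{\omega(1,\mu_d,1)})$ using a van den Brand-style lazy update plus periodic rebuild schedule, with the rebuild period $n^{\mu_d}$ balanced against $|S_d|=\Ot(n/d)$. Summing over the scales and invoking the current bounds on rectangular matrix multiplication should yield the claimed $O(n^{2.293}/\eps)$ worst-case per-update bound for explicitly maintaining the full distance matrix.

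For the extensions, I would store, together with each matrix entry, a ``midpoint witness'' $w\in S_d$ attaining the minimum at each scale; then a $(1+\eps)$-approximately shortest $s\to t$ path is decoded recursively by traversing to the witness and continuing at a smaller scale, giving overall reporting time $O(|P|)$ since each invocation either outputs an edge or halves the scale. Reachability trees from every source are obtained simultaneously by storing, per source row, a first-edge pointer extracted from the stored distance matrix, and refreshing these pointers along with each rebuild. Determinism of the hitting sets $S_d$ would be achieved via greedy set cover over a small collection of canonical short paths, refreshed in lockstep with the periodic matrix-product rebuilds so that patching $S_d$ in between rebuilds never dominates the update cost.

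The main obstacle, I expect, will be the derandomization. In the static setting, greedy hitting sets of size $\Ot(n/d)$ are standard, but here they must be maintained \emph{worst-case}, in time fitting inside the $n^{2.293}/\eps$ budget, and any change to $S_d$ must be absorbed by the rectangular-product data structure without breaking its worst-case schedule. The crux will be arguing that between two consecutive global rebuilds only a controlled number of new short paths can appear, so that a local greedy patch suffices; and simultaneously aligning the rebuild schedules across all $O(\eps^{-1}\log n)$ scales so that the dominating cost still comes from a single rectangular product and not from the compounded refresh of pivot sets.
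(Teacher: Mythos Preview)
Your proposal has a genuine structural gap: it rests on maintaining, for each scale $d$, the truncated distances $\dist_{\le d}(\cdot,w)$ from all pivots $w\in S_d$ via the algebraic machinery behind \Cref{thm:intro:singlesource}. But that machinery is deterministic path counting, and the data structure of \Cref{t:h-bounded-submatrix-batch} carries an unavoidable $\Theta(h^2)$ arithmetic overhead for $h$-bounded distances (the path counts have $\Theta(h)$-bit coefficients of degree-$h$ polynomials). So maintaining even a single row of $\dist_{\le d}$ costs at least $\Theta(d^2 n)$; for scales $d\ge n^{1/2}$ this is already $\ge n^2$, and summing over scales gives a cost far above the target. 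This is exactly the obstruction discussed in \Cref{sec:outline:randomizedtools}: one cannot afford to maintain current $d$-bounded distances for large $d$ deterministically.

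The paper's proof works precisely because it never maintains bounded distances at multiple scales nor maintains a hitting set of the \emph{current} graph. It fixes a single small $h$ (roughly $n^{0.1}$), and at the start of each phase of $\Delta$ updates runs the preprocessing of \Cref{l:collection-thres}/\Cref{t:collection} to produce a congested set $C$ of size $O(n/\tau)$ and a path collection $\Pi$ that is robust under deletions. During the phase it rebuilds a related collection $\Pi'$ only for the $O(\Delta h\tau n)$ \emph{affected} pairs using the predecessor-reporting data structure $\dsyb$ (\Cref{l:blocks2}). A hitting set $H$ is then computed \emph{deterministically and statically} on $\Pi'$, not on the current graph. The key structural lemma (\Cref{l:path-decomp}) shows that every shortest path is either a path in $\Pi'$ or visits $X:=C\cup D\cup H$ within each $h$-hop window, so all-pairs distances reduce to $O(\log n)$ rectangular min-plus products with one dimension $|X|\le n^\alpha<n$; the $(1+\eps)$-approximation enters only here, via Zwick's approximate product (\Cref{t:appr-dist-prod}). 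Path reporting and reachability trees come from the witnesses of these products combined with the explicitly stored $\Pi'$. None of these ingredients --- the robust path collection, the congested set, the single-$h$ design, and the static hitting set on precomputed paths --- appear in your plan, and the dynamic-hitting-set difficulty you flag is not patched but deliberately sidestepped by the paper.
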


As is typically the case for fully dynamic data structures maintaining all-pairs information
explicitly (e.g., \cite{Sankowski04,DemetrescuI04,Thorup04,AbrahamCK17,GutenbergW20b,ChechikZ23,Mao23}), the data structure supports more general \emph{vertex updates} that
can insert/delete any number of edges incident to a single vertex.

To the best of our knowledge, ours is the first fully dynamic APSP data structure for \emph{directed} graphs
allowing optimal path reporting whose worst-case update time breaks through
the $\Ot(n^\omega/\eps)\approx \Ot(n^{2.372})$ static $(1+\eps)$-approximate APSP bound of~\cite{Zwick02}.
In fact, this is the case regardless of whether randomization is used.
Previous results breaking the $\Ot(n^\omega)$-barrier~\cite{BrandN19} could not report the path.
Optimal path reporting is supported by the purely-combinatorial data structures~\cite{DemetrescuI04,Thorup04,AbrahamCK17,GutenbergW20b,ChechikZ23,Mao23} but
their worst-case update time is at least $\Omega(n^{2.5})$.
Better worst-case update bounds for path-reporting APSP have been obtained using algebraic techniques~\cite{BergamaschiHGWW21,AlokhinaB23} but with \emph{superlinear} (in $n$, let alone in the distance between the queried vertices) query time
and necessarily under less general single-edge updates.\footnote{This is also achieved by our SSSP data structures.}
Thus, prior to our work, no non-trivial worst-case bound for fully dynamic APSP (in directed graphs) with optimal-time path reporting has been known and recompute-from-scratch~\cite{Zwick02} could be considered the state-of-the-art solution (indeed,
the static $(1+\eps)$-approximate APSP algorithm of~\cite{Zwick02} can efficiently reconstruct paths).

The data structure of Theorem~\ref{t:tc-reporting} is deterministic. Consequently, it also constitutes the first \emph{deterministic} APSP data structure for \emph{directed graphs} with sub-$n^\omega$ worst-case update time maintaining all-pairs distance estimates \emph{explicitly}. The fully dynamic APSP data structure of~\cite{DemetrescuI04, Thorup04} is deterministic and has near-optimal $\Ot(n^2)$ amortized update bound, but the worst-case update bound is cubic. 
Using randomization, a matrix of $(1+\eps)$-approximate distances can be maintained
in $O(n^{2.045}/\eps^2)$ worst-case time per update~\cite{BrandN19}.
As deterministic variants of~\cite{AbrahamCK17, GutenbergW20b} have $\Omega(n^{2.5})$~update bound, again, recompute-from-scratch~\cite{Zwick02} has been the state-of-the-art in the deterministic regime.

The above remains true even for the easier problem of explicitly maintaining the transitive closure under fully dynamic updates.
Whereas an optimal $O(n^2)$ randomized worst-case update bound has been known~\cite{Sankowski04}, no \emph{deterministic} data structure with \emph{worst-case} update time beyond the $O(n^\omega)$-time recompute-from-scratch~\cite{Munro71} has been found. Only \emph{amortized} deterministic $O(n^2)$ update bounds have been known~\cite{DemetrescuI08,King99,Roditty08}.
Theorem~\ref{t:tc-reporting} implies that the transitive closure can be recomputed deterministically in $O(n^{2.293})$ time (worst-case) after a vertex update.
Additionally, within that bound, all-pairs reachability trees -- providing an explicit representation of paths between all pairs of vertices -- can also be constructed.

Perhaps surprisingly, for a sufficiently large improvement to $\omega$ (concretely, if $\omega\leq 2.12$), the update time of the data structure of Theorem~\ref{t:tc-reporting} (depending on the value of $\omega$) may become polynomially smaller than $O(n^\omega)$, i.e., slower than recomputing from scratch.
It is interesting to ask whether it is possible to reproduce any of the accomplishments of that data structure \emph{unconditionally}, that is, within sub-$n^\omega$ time as long as $\omega>2$.\footnote{Note that if $\omega=2$, then all-pairs reachability can be recomputed from scratch in $O(n^2)$ time~\cite{Munro71} which is already the output size, so in that case no non-trivial dynamic algorithm can exist.}
Using an entirely different approach, we show this is indeed possible
for deterministic fully dynamic transitive closure.

\begin{restatable}{theorem}{tcdeterministic}\label{t:tc-deterministic}
	Let $G=(V,E)$.
	Unless $\omega=2$, there exists a constant $\eps_\omega$, $0<\eps_\omega<\omega-2$, dependent on $\omega$,
	such that
	one can deterministically and explicitly maintain the transitive closure of $G$ under fully dynamic vertex updates
	in $O(n^{\omega-\eps_\omega})$ worst-case time per update.
\end{restatable}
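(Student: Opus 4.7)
My plan is a ``rebuild-plus-batched-refresh'' scheme parameterised by a batch size $B=n^\beta$. Every $B$ vertex updates, I recompute the transitive closure $R$ of the current graph from scratch in $O(n^\omega)$ time, and deamortize this rebuild across the next $B$ updates via two interleaved copies of the data structure, contributing $O(n^\omega/B)$ worst-case time per update. Between rebuilds I still must explicitly output the up-to-date $R$ after every single update, which I do by a fast refresh starting from the snapshot transitive closure $R_0$ of the last fully rebuilt graph $G_0$.

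After $i\le B$ updates since the rebuild, the current graph $G$ and $G_0$ differ only in edges incident to the set $S$ of vertices modified since the rebuild, $|S|=i$. Let $A$ be the adjacency matrix of $G$'s edges incident to $S$, and let $R^-$ be the transitive closure of the subgraph obtained from $G$ (equivalently, from $G_0$) by deleting all edges incident to $S$. The Kleene identity $(A+B)^*=B^*(AB^*)^*$ gives $R=R^-\cdot(AR^-)^*$. Because $A$ is supported only on the $|S|$ rows and $|S|$ columns indexed by $S$, the matrix $AR^-$ is an $O(|S|)$-dimensional object in the sense that $(AR^-)^*$ reduces to one Boolean transitive closure of size $O(|S|)$, costing $O(B^\omega)$, plus a pair of rectangular Boolean products of shapes $n\times|S|$ by $|S|\times n$, which take $O(n^{\omega(1,\beta,1)})\le O(n^2B^{\omega-2})$ by standard rectangular matrix multiplication bounds; assembling $R=R^-\cdot(AR^-)^*$ stays within the same order of time by exploiting that $(AR^-)^*-I$ also has $O(|S|)$-dimensional support.

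Balancing the two per-update costs $n^\omega/B$ and $n^2B^{\omega-2}$ gives $B=n^{(\omega-2)/(\omega-1)}$ and a worst-case update time $O(n^{\omega-\eps_\omega})$ with $\eps_\omega=(\omega-2)/(\omega-1)$; one verifies $0<\eps_\omega<\omega-2$ precisely when $\omega>2$, matching the statement.

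The main obstacle I anticipate is producing the base transitive closure $R^-$ within this budget, deterministically. As $S$ grows by one vertex per update during a batch, each new entrant to $S$ triggers the deletion of its incident edges from the base graph, which is a decremental operation; algebraic inversion tricks (e.g.~Sherman--Morrison--Woodbury) are not deterministically available over the Boolean semiring. My plan is to maintain $R^-$ by a combinatorial vertex-deletion update: each time a vertex $v$ enters $S$, recompute the reachability lost by deleting $v$'s incident edges using one rectangular Boolean product whose dimensions are governed by the current $|S|$, so that the per-step cost stays within $O(n^2|S|^{\omega-2})$. Turning this maintenance into a worst-case (rather than amortized) guarantee, and interleaving it cleanly with the background rebuild, is the most delicate step and is ultimately where the sub-$n^\omega$ saving arises.
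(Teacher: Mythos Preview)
Your scheme has a genuine gap at exactly the point you flag as ``the most delicate step'': maintaining $R^-$, the transitive closure of $G_0$ with all edges incident to $S$ removed, under the growth of $S$. You assert that when a new vertex $v$ enters $S$, the resulting decremental step can be handled by ``one rectangular Boolean product whose dimensions are governed by the current $|S|$'', but you give no such product, and none is known. Over the Boolean semiring there is no subtraction, so the Kleene identity $(A+B)^*=B^*(AB^*)^*$ only lets you go from $R^-$ to $R^-_{\text{old}}$, not the other way; ``undoing'' a vertex requires determining, for every pair $(u,w)$ with $R^-_{\text{old}}[u,w]=1$, whether \emph{some} $u\to w$ path avoids $v$, and that is precisely a fresh transitive closure query on $G_0-S$. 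The only generic deterministic tool that would make this a low-rank update is Sherman--Morrison--Woodbury over a field, which you correctly rule out, and no combinatorial substitute with worst-case $O(n^2|S|^{\omega-2})$ cost per vertex deletion is known. In other words, your outer shell (batching, Kleene recomposition, balancing $n^\omega/B$ against $n^2B^{\omega-2}$) is fine, but it rests on a decremental primitive that is itself as hard as the theorem you are trying to prove.

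For contrast, the paper's proof avoids any decremental maintenance of reachability. It keeps, via deterministic path counting (Theorem~\ref{t:h-bounded-submatrix-batch}), the full matrix of $h$-bounded distances under vertex updates in $\Ot(h^2n^2)$ worst-case time; this handles all pairs at distance $\le h$. For pairs at distance $>h$, it computes after each update a \emph{relaxed} hitting set $H^*$: not a hitting set of shortest paths, but one that hits the $h$-hop prefix of \emph{some} $s\to t$ path whenever $\dist_G(s,t)>h$. The key new idea is a size/time trade-off for such sets: by partitioning the SCCs into $O(n/d)$ blocks of size $\le d$ (plus single large SCCs) and applying Lemma~\ref{l:weak-hitting-set} inside each block, one gets $|H^*|=\Ot(n^2/(dh))$ in $\Ot(n^2+nd^{\omega-1})$ time. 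Reachability through $H^*$ is then recomputed by $O(\log n)$ rectangular Boolean products with one dimension $|H^*|$. Balancing $h^2n^2$, $nd^{\omega-1}$, and $n^{\omega(1,\alpha,1)}$ with $n^\alpha=|H^*|$ yields a bound strictly below $n^\omega$ for every $\omega>2$. No step here requires deleting a vertex from a previously computed closure.
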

The drawback of the above data structure is that it cannot report paths in optimal time (in their length) but
merely in $O(n)$ time per query.

\subsection{Further related work}
All-pairs shortest paths (in directed graphs) and transitive closure have also been studied extensively in partially dynamic
settings~\cite{BaswanaHS07, Bernstein16, EvaldFGW21, FrigioniMNZ01, Italiano86, KarczmarzL19, KarczmarzL20,DoryFNV22}.
Other variants related to distances are dynamic diameter and extreme distances \cite{AnconaHRWW19,BrandN19}.
There also exists work with much larger approximation ratio, e.g.~\cite{ForsterGNS23,AbrahamCT14,ForsterGH21}.
In addition to upper bounds, dynamic distance problems have also been studied from a (conditional) lower bound perspective, see e.g.~\cite{AbboudW14,HenzingerKNS15,AbboudBF22,AbboudBKZ22,AnconaHRWW19,BrandNS19,GutenbergWW20,JinX22,RodittyZ11}.

\subsection{Technical ideas}

Our results are obtained via a combination of (inevitable) algebraic methods and a wide range of purely combinatorial techniques. The fully dynamic single-source shortest distances data structures follow the high-level approach of the data structures for obtaining worst-case update bounds for exact fully dynamic \emph{all-pairs shortest paths}~\cite{AbrahamCK17, GutenbergW20b, Thorup05, ChechikZ23, Mao23}.
These purely combinatorial data structures are, however, super-quadratic in many different aspects and in fact seem to not be able to break through the $\Omega(n^{2.5})$ barrier for the problem they solve.

We observe that a majority of computational bottlenecks therein can be delegated to the algebraic data structures based on dynamic matrix inverse~\cite{BrandNS19, Sankowski04} -- again, with a few purely combinatorial extensions applied on top, such as the recent path-reporting techniques~\cite{Karczmarz0S22}.
The power of algebraic data structures is also exploited to a greater extent than before.
Previous work relied on maintaining either the entire matrix inverse (e.g.~\cite{Sankowski04}), querying single cells or rows (\cite{BrandN19, BrandNS19, Sankowski04}) or 
maintaining well-structured (e.g., square) submatrices~\cite{Sankowski05, BrandFN22}.
On the other hand, we crucially need maintaining \emph{unstructured and dynamic} subsets of cells of the inverse of subquadratic size and adapt the algebraic tools to this scenario.

Finally, a common theme of all our deterministic algorithms
is using weaker notions of hitting sets of ``long'' shortest paths between pairs of vertices whose distance exceeds some threshold~$h$. In previous dynamic shortest paths data structures, $\Ot(n/h)$-sized hitting sets are typically randomly sampled (once) or deterministically recomputed (after an update) to hit paths in the \emph{current graph}. 

Our single-source distances data structure, on the other hand, only employs a (deterministically computed) hitting set to sparsify an explicit \emph{static} collection of $\Theta(n^2)$ paths
that is reinitialized not very often. As a result, we completely avoid the bottleneck of all previous dynamic algebraic distance data structures for directed graphs: maintaining the \emph{current} exact $h$-bounded distances between all pairs of vertices from the hitting set.
For example, it is not clear how this component can be derandomized while keeping the update time subquadratic.

In the data structure of Theorem~\ref{t:tc-deterministic}, a different relaxation of hitting sets (specific to reachability, where hitting \emph{shortest paths} is not required for vertices far apart) is exploited. In particular, we describe a deterministic trade-off between the size of such a relaxed hitting set and the time needed to compute it, which we believe may be of independent interest.

\subsection{Organization}
We start by giving preliminaries in \Cref{sec:prelim} and then give a more detailed technical outline of our algorithms in \Cref{sec:outline}.
Our algorithms rely on constructing hitting sets that require low-hop shortest paths.
We describe in \Cref{s:blocks} how to extract these required paths from a dynamic bounded-distances algorithm. 
Next, in \Cref{s:collection} we describe a subcubic preprocessing step that, using the algebraic path-reporting data structures, finds a collection of paths that is robust (in a sense) in the decremental setting.
Finally, we employ that preprocessing of \Cref{s:sssp} to obtain our fully dynamic SSSP data structure (\Cref{thm:intro:singlesource}). 
The extension to fully-dynamic path-reporting APSP (\Cref{t:tc-reporting}) is described in \Cref{s:tc-reporting}. The deterministic fully dynamic transitive closure data structure beating the $n^\omega$-barrier for all $\omega>2$ (\Cref{t:tc-deterministic}) is given in \Cref{s:tc-det}.

\section{Preliminaries}
\label{sec:prelim}
Let $G=(V,E)$ be a directed graph. For $D\subseteq V$, we denote by $G-D$ the graph $G$ with all edges incident
to the vertices $D$ removed.

For any $s,t\in V$ and $k\geq 0$,
$\dist_G^k(s,t)$ is the minimum length of an $s\to t$ path in $G$ with at most $k$ edges (hops).
If no such path exists, $\dist_G^k(s,t)=\infty$.
A~\emph{shortest $k$-bounded} $s\to t$ path in $G$ is an $s\to t$ path with at most $k$ hops and length $\dist_G^k(s,t)$.
The $s,t$-distance $\dist_G(s,t)$ is defined as $\dist_G^n(s,t)$.

\section{Technical outline}
\label{sec:outline}
This section focuses, in the first place, on giving a technical overview of the deterministic
fully dynamic SSSP data structure. We then briefly describe the changes needed for
path-reporting $(1+\eps)$-approximate APSP, and the possible improvements using randomization.

\subsection{Algebraic toolbox and previous approaches}
\label{sec:outline:randomizedtools}
Recall that by the conditional lower bound of~\cite{AbboudW14}, using algebraic fast matrix multiplication-based techniques is inevitable even for fully dynamic $st$-reachability.
In the following we review how the state-of-the art randomized data structures for $st$-distances and reachability use algebraic techniques, explain the limitations of their approaches, and introduce the algebraic data structure that we use.

Let $h\in [1,n]$ be a threshold parameter setting the boundary between ``short'' shortest paths with at most $h$
edges, and ``long'' shortest paths with more than $h$ edges.
Just like all previously described algebraic data structures maintaining shortest paths, we leverage dynamic matrix inverse data structures
for maintaining the $h$-bounded distances between certain pairs of vertices.

\paragraph{Path counting} 
The main idea behind algebraic $h$-bounded distances data structures is counting (possibly implicitly) paths
of given lengths $0,\ldots,h$ for all source-target pairs. Note that the number of length-$k$ paths between
a given pair of vertices of an unweighted graph is no more than $n^k$,
i.e., could take $\Ot(k)$ bits to be represented.
As a result, storing the required path counts for a given pair might require
up to $\Ot(h^2)$ bits.
It is the most convenient to represent the counts as a degree-$h$-polynomial matrix, where
the coefficient of $x^k$ of the polynomial in the cell $(s,t)$ equals the count of $s\to t$ paths of length $k$.
As originally observed by~\cite{Sankowski05}, such a matrix equals the inverse
of the matrix $I-x\cdot A$ modulo $x^{h+1}$, where $A$ is the adjacency matrix of $G$.
Since the single-edge updates to $G$ correspond to single-element updates to $I-x\cdot A$,
maintaining the required path counts reduces to maintaining matrix inverse under
single-element updates. This enables using dynamic matrix inverse data structures~\cite{Sankowski04, BrandNS19, BrandFN22}
with $\Ot(h^2)$ arithmetic operation cost incurred by working with $h$-degree polynomials with $\Ot(h)$-bit coefficients.

For deciding whether a path of a given length exists, it is enough to be able to test whether a path count is non-zero.
By operating modulo a random $\Theta(\log{n})$-bit prime number, whether the individual counts are non-zero can be maintained by operating on $\Ot(1)$-bit polynomial coefficients instead~(see, e.g.,~\cite{KingS02, DemetrescuI00}).
This saves a factor of $h$ in the time cost, but inevitably leads to a Monte Carlo randomized algorithm correct with high probability.
The previous fully dynamic reachability data structures in DAGs~\cite{KingS02, DemetrescuI00} and exact $st$-distances~\cite{Sankowski05, BrandFN22} crucially relied on this idea. In particular, the data structures tailored to reachability~\cite{KingS02, DemetrescuI00} do not keep track of path lengths and do not differentiate between short and long paths; they do apply path counting though (with counts of order $n^n$) and thus this trick saves a factor $n$ there.

\paragraph{Previous dynamic distances data structures} The fully dynamic $st$-distances data structures~\cite{Sankowski05, BrandFN22} apply path counting in combination with the well-known hitting set trick~\cite{UY91}.
If one samples a subset $H\subseteq V$ of size $\Theta((n/h)\log{n})$, then for all $u,v\in V$ with $\dist_G(u,v)\geq h$, some shortest $u\to v$ path contains a vertex of $H$ w.h.p. Another view on this property is that for any $s,t\in V$, some shortest $s\to t$ path in $G$ can be decomposed into subpaths of length at most $h$ with both endpoints
in $H\cup \{s,t\}$. Therefore, given $\dist^h_G(u,v)$ for all $u,v\in H\cup \{s,t\}$,
we can find a shortest $s\to t$ path in $G$ using Dijkstra's algorithm on a digraph on $H\cup \{s,t\}$
with $O(|H|^2)=\Ot((n/h)^2)$ edges. Maintaining the $\Ot((n/h)^2)$ pairwise
$h$-bounded distances between the vertices $H\cup \{s,t\}$ requires at least $\Ot((n/h)^2\cdot h)=\Ot(n^2/h)$ bits when using path counting modulo a prime.

This simple approach seems hopeless if we either wanted to make it work for the more
general single-source distances maintenance, or to have a deterministic dynamic $st$-distance data structure. In the former case, we would need to maintain $h$-bounded distances between $O(n^2/h)$ pairs ${(H\cup\{s\})\times V}$. But the current path counting techniques assume at least $\Theta(h)$ ``arithmetic'' overhead for maintaining one value $\dist^h_G(u,v)$ even when counting modulo a random prime. This makes the single-source case inherently quadratic using this approach.
Similarly, deterministic maintenance of a single value $\dist^h_G(u,v)$ has $\widetilde{\Theta}(h^2)$ ``arithmetic'' overhead, which makes the approach
of~\cite{Sankowski05, BrandFN22} inapplicable not only for maintaining $st$-distance deterministically,
but also for fully dynamic deterministic $st$-reachability which is likely an easier problem.

\paragraph{Our $h$-bounded distances data structure}
At the end of the day, we do not overcome the aforementioned inherent arithmetic overheads
of either $\widetilde{\Theta}(h)$ (randomized) and $\widetilde{\Theta}(h^2)$ (deterministic). Instead, we rely, in a crucial way, on the ability of the dynamic matrix inverse data structure~\cite{Sankowski04} to maintain an arbitrary \emph{dynamic} $k$-cell \emph{submatrix of interest} of the inverse, possibly with no structure whatsoever, using a subquadratic number of field operations, unless $k=n^{2-o(1)}$ or significantly more than $\Theta(n)$ cells of interest are added per element update.
Specifically, we show the following:

\begin{restatable}[{Modification to~\cite{Sankowski05}, proof in \Cref{s:algebraic}}]{theorem}{algebraic}\label{t:h-bounded-submatrix-batch}
	Let $G=(V,E)$ be a fully dynamic unweighted digraph. Let $h\in [1,n]$.
	Let ${Y\subseteq V\times V}$ be a dynamic set.
	For any $\mu,\delta \in [0,1]$, there exists a deterministic data structure maintaining the values $\dist_G^h(s,t)$ for all $(s,t)\in Y$
	and supporting:
	\begin{itemize}
		\item single-edge updates to $G$ in $\Ot(h^2\cdot (n^{\omega(1,\mu,1)-\mu}+n^{1+\mu}+|Y|))$ worst-case time,
		\item batches of at most $n^\delta$ \emph{vertex updates} to $G$ in $\Ot(h^2n^{\omega(1,\delta,1)})$ worst-case time,
		\item single-element insertions to $Y$ in $\Ot(h^2n^\mu)$ time,
		\item removing elements from $Y$ in $O(1)$ time.
	\end{itemize}
\end{restatable}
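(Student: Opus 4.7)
The plan is to reduce the problem to maintaining a dynamic set of entries of the inverse of a polynomial matrix, and then plug this into a Sankowski-style lazy inverse data structure. Following~\cite{Sankowski05}, work in the ring $R=\Z[x]/(x^{h+1})$ and set $\mM(x)=\mI-xA\in R^{n\times n}$, where $A$ is the adjacency matrix of $G$. Since $x$ is nilpotent, $\mM$ is a unit of $R^{n\times n}$ with $\mM^{-1}=\sum_{k=0}^{h}x^k A^k$, so the coefficient of $x^k$ in the $(s,t)$-entry of $\mM^{-1}$ equals the number of length-$k$ walks from $s$ to $t$ in $G$, and $\dist_G^h(s,t)$ is read off (at $O(h)$ extra ring-ops overhead) as the smallest $k\le h$ for which that coefficient is non-zero. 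Each arithmetic operation in $R$ is implemented deterministically in $\Ot(h^2)$ bit operations (degree-$\le h$ polynomials with coefficients of bit-length $O(h\log n)$, via FFT-based multiplication and reduction modulo $x^{h+1}$), which is the source of the uniform $h^2$ factor. A single-edge update to $G$ becomes a single-element update to $\mM$, and a batch of $n^\delta$ vertex updates becomes a rank-$n^\delta$ correction that replaces that many rows/columns of $\mM$; crucially, every such correction is proportional to the nilpotent $x$.

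Next, I plug this into the standard lazy representation of the inverse from~\cite{Sankowski04,BrandNS19}, namely $\mM^{-1}=\mB-(\mB\mU)\mathbf{K}(\mV^T\mB)$, where $\mB=\mM_0^{-1}$ is the inverse at the last reset, $\mU,\mV\in R^{n\times\ell}$ buffer the intervening rank-$1$ corrections ($\ell\le n^\mu$), and $\mathbf{K}=(\mI+\mV^T\mB\mU)^{-1}$ is the $\ell\times\ell$ middle inverse; $\mathbf{K}$ is well-defined because $\mV^T\mB\mU$ is a multiple of the nilpotent $x$, so $\mI+\mV^T\mB\mU$ is automatically a unit of $R^{\ell\times\ell}$. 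The cost estimates of the framework, scaled by the $\Ot(h^2)$ bit overhead, are: single-entry query of $\mM^{-1}$ in $\Ot(n^\mu)$ ring ops, single-row or single-column extraction in $\Ot(n^{1+\mu})$ ring ops, rank-$1$ buffer append (including the bordered update of $\mB\mU$, $\mV^T\mB$ and $\mathbf{K}$) in $\Ot(n^{1+\mu})$ ring ops, and a reset that folds the whole buffer into a fresh $\mB$ via one rectangular product of shape $n\times\ell$ by $\ell\times n$ in $\Ot(n^{\omega(1,\mu,1)})$ ring ops, amortized $\Ot(n^{\omega(1,\mu,1)-\mu})$ per single-edge update; the standard two-copy background-rebuild trick of~\cite{Sankowski04} converts the last bound to a worst-case guarantee.

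To maintain the dynamic set $Y$, I store the current value of $(\mM^{-1})_{s,t}$ for every $(s,t)\in Y$. Insertion triggers one single-entry query, costing $\Ot(n^\mu)$ ring ops; deletion is a dictionary removal, $O(1)$. Under a single-element update $\mM\mapsto\mM+\alpha e_ie_j^T$, Sherman--Morrison yields
\[
  (\mM^{-1}_{\text{new}})_{s,t}-(\mM^{-1})_{s,t}\;=\;\beta\cdot (\mM^{-1})_{s,i}\cdot (\mM^{-1})_{j,t},\qquad \beta=\frac{-\alpha}{1+\alpha\,(\mM^{-1})_{j,i}},
\]
so the update handler extracts the column $(\mM^{-1})_{\cdot,i}$ and the row $(\mM^{-1})_{j,\cdot}$ once (two full row/column queries, $\Ot(n^{1+\mu})$ ring ops), computes $\beta$, performs the rank-$1$ buffer append ($\Ot(n^{1+\mu})$ more), and finally refreshes each stored $(\mM^{-1})_{s,t}$ with one multiply-add in $O(1)$ ring ops, contributing the $|Y|$ summand. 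For a batch of $n^\delta$ vertex updates I skip the per-cell incremental route entirely: absorb the batch as one rank-$n^\delta$ Woodbury correction that is immediately folded into $\mB$ by one rectangular $n\times n^\delta$ by $n^\delta\times n$ product in $\Ot(n^{\omega(1,\delta,1)})$ ring ops, then refresh each $(s,t)\in Y$ by a direct lookup in the new $\mB$; the $O(|Y|)\le O(n^2)\le O(n^{\omega(1,\delta,1)})$ refresh cost is subsumed.

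The main obstacle I anticipate is making the $O(1)$-per-$Y$-cell maintenance genuinely worst-case while a background rebuild is in flight. Two copies of the data structure must run in parallel -- one serving queries, the other being rebuilt over the next $n^\mu$ single-edge updates -- and both must be kept consistent with the live streams of updates to $G$ \emph{and} insertions/deletions to $Y$; when the rebuilt copy takes over, the stored value for every cell still in $Y$ must already be correct against the copy that has just become live. The basic machinery is the standard two-copy de-amortization of~\cite{Sankowski04,BrandNS19}, but its interaction with a dynamic $Y$ is the genuinely new ingredient and is where the proof has to spell out the bookkeeping in detail. A secondary point, verified once and for all, is that every middle matrix $\mathbf{K}$ arising in the execution is a unit of $R^{\ell\times\ell}$ -- which, again, follows from all corrections to $\mM$ being multiples of the nilpotent $x$.
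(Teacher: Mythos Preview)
Your proposal is correct and follows essentially the same route as the paper: reduce $h$-bounded distances to the inverse of $\mI-xA$ over a ring of truncated polynomials, then maintain that inverse in a Sankowski-style lazy representation with a buffer of $\le n^\mu$ rank-$1$ corrections, updating the stored $Y$-cells by the Sherman--Morrison increment in $O(1)$ ring ops each, folding the buffer every $n^\mu$ steps via one $n\times n^\mu$ by $n^\mu\times n$ product, and handling batch vertex updates by a direct Woodbury correction to the base matrix.

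Two small presentational differences worth noting. First, the paper works over $\Z_p[x]/(x^{h+1})$ for a fixed prime $p\in[n^h,2n^h]$ rather than over $\Z[x]/(x^{h+1})$; this cleanly splits the $\Ot(h^2)$ overhead as $\Ot(h)$ for degree-$h$ polynomial arithmetic times $\Ot(h)$ for $O(h\log n)$-bit field arithmetic, and avoids any worry about coefficient growth across nested Sherman--Morrison/Woodbury steps. Your Kronecker-substitution argument over $\Z$ also works, but the modular route is a bit tidier. Second, the paper keeps the buffer in the ``already-scaled'' form $\mM^{-1}=\mN-\mL\mR^\top$ (each appended column pair absorbs the Sherman--Morrison scalar), rather than your $\mB-(\mB\mU)\mathbf{K}(\mV^T\mB)$ with an explicit middle inverse; both are standard and equivalent, and the unit-ness of the middle matrix follows exactly from the nilpotency of $x$, as you observe. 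Your concern about the two-copy de-amortization interacting with a dynamic $Y$ is legitimate bookkeeping, but the paper simply defers it to the standard technique in~\cite{BrandNS19}; there is no new obstacle specific to $Y$, since each copy maintains its own $Y$-values and the pending insertions/deletions are replayed on the rebuilt copy before the swap.
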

Crucially, for small-polynomial values of $h$, sets $Y$ of size $n^{2-O(1)}/h^2$, and appropriately chosen~$\mu$, say $\mu=1/2$, the single-edge update time bound of this data structure is subquadratic in $n$.

To the best of our knowledge, previous algebraic dynamic graph data structures required
only very structured submatrices $Y$ of the inverse to operate. For example, the aforementioned
$st$-distance data structures~\cite{Sankowski05, BrandFN22} use a square submatrix
with rows and columns
${H\cup \{s,t\}}$. In fact,~\cite{BrandFN22} show
a more efficient (than Theorem~\ref{t:h-bounded-submatrix-batch}) data structure
for square inverse submatrix maintenance: the dependence on $n$ in the single-edge update bound
of Theorem~\ref{t:h-bounded-submatrix-batch} is at least $n^{1.529}$ ($n^{1.5}$ if $\omega=2$), whereas~\cite{BrandFN22} achieve the dependence of $n^{1.406}$ ($n^{1.25}$ if $\omega=2$).

\subsection{Combinatorial plug-ins to the algebraic data structure}\label{s:algebraic-combinatorial}
On top of the data structure Theorem~\ref{t:h-bounded-submatrix-batch}, we apply
two crucial black-box enhancements.

\paragraph{Induced subgraph operations or vertex switching}
First, we will need to maintain \linebreak $h$-bounded distances not only in the graph $G$,
but also in an \emph{induced subgraph} $G[W]$, where the set $W\subseteq V$
is also subject to updates. This functionality models temporarily switching some vertices off,
or, equivalently, removing all their incident edges. 

A very simple but powerful observation is that in directed graphs, switching vertices
can be modeled using single-edge updates. This follows by applying a standard reduction
of splitting a vertex $v$ into an in-copy $v^+$ and an out-copy $v^-$, connected
by an edge $v^+v^-$ switched on iff $v\in W$.
As a result, even though switching a vertex off forbids using up to $n$
edges, modeling it costs no more than a single-edge update to the data structure
maintaining the graph. Recall that the single-edge update cost in Theorem~\ref{t:h-bounded-submatrix-batch} is subquadratic if the parameters are appropriately set.

\paragraph{Path reporting via grouping into blocks} A well-known shortcoming
of path-counting \linebreak $h$-bounded distances data structures is that they do not
allow, out-of-the-box, reporting the shortest $h$-bounded paths.
The data structures we develop internally require the path-reporting functionality (against an adaptive adversary) even for maintaining the single-source distances only.

To this end, we make use of a recent technique of~\cite{Karczmarz0S22} used therein for DAGs to convert
a fully dynamic single-source reachability data structure into a fully dynamic single-source \emph{reachability tree} reporting data structure. We apply this technique
in the case of $h$-bounded distances.

The overall idea is as follows. Having fixed a parameter $b\in [1,n]$, the \emph{block size}, the vertex set~$V$ is split into $\ell\leq \lceil n/b\rceil$ blocks $V_1,\ldots,V_\ell$ of no more than $b$ vertices. 
Via vertex-splitting, we can construct $\ell$ related graphs $G_1,\ldots,G_\ell$ such that
a shortest $h$-bounded $s\to t$ path in $G$ is preserved in $G_i$ iff
the penultimate vertex is in the blocks $V_1$ through $V_i$.

Suppose we maintain the $h$-bounded distances data structures (of Theorem~\ref{t:h-bounded-submatrix-batch}) on the graphs $G_1,\ldots,G_\ell$. Then, given a source $s\in V$ and single-source $h$-bounded distances $\dist^h_{G_i}(s,\cdot)$ for all~$i$, we can
identify, for each $t\in T$, at most $b$ potential predecessors $p$ for which
some shortest $h$-bounded $s\to t$ path has a penultimate vertex $p$.
Given those, the single-source shortest $h$-bounded paths tree can be constructed in $O(nb)$ time
by simply running BFS on an $O(nb)$-sized subgraph of $G$.

Overall, if the dynamic $h$-bounded distances data structure has update time $U(n,h)$ and
can compute single-source $h$-bounded distances in $Q(n,h)$ time, then grouping
into blocks yields a data structure with update time $O\left(\frac{n}{b}\cdot U(n,h)\right)$
reporting single-source $h$-bounded shortest paths in $O\left((n/b)Q(n,h)+nb\right)$ time.
In Section~\ref{s:blocks} we also describe a refinement of this reduction that can produce better running times; see Lemma~\ref{l:blocks2}.

Although we need path reporting even internally for the sole purpose of maintaining single-source distances, note that grouping into blocks allows for converting -- in a \emph{deterministic} and black-box way -- any \emph{truly subquadratic} single-source distances data structure into a SSSP-tree reporting data structure. For example, if the single-source distances data structure (i.e., $h=n$) has $U(n,h),Q(n,h)\in O(n^{1.98})$, it is enough to set $h=n$ and $b=n^{0.99}$ to get $O(n^{1.99})$-time SSSP tree reporting.
Therefore, in the remaining part of this outline we only discuss maintaining single-source distances deterministically.

\subsection{Maintaining most $h$-bounded shortest paths under deletions}
Overall, our fully dynamic single-source distances data structure follows
the high-level idea of the known \emph{combinatorial} fully dynamic APSP data structures with subcubic \emph{worst-case} update time, first shown by~\cite{Thorup05} and subsequently developed in~\cite{AbrahamCK17, ChechikZ23, GutenbergW20b, Mao23}.

The data structure operates in phases of $\Delta= n^\delta$ updates, where $\delta\in (0,1)$.
At any moment of time, let $G$ denote the current graph (with all updates issued so far applied), and let $G_0$ reflect how $G$ looked like when
the current phase started.
While the phase proceeds, let $D$ denote the set of $\leq 2\Delta$ \emph{affected vertices}, i.e.,
those constituting the endpoints of all the edges inserted or deleted in the current phase.
Note that we have $G-D=G_0-D\subseteq G$ at any time.

Each phase starts with a relatively costly preprocessing step
whose purpose is to find a certain collection
of $O(n^2)$ paths $\Pi$ in $G_0$ accomplishing the following informal goals:
\begin{enumerate}[(a)]
  \item $\Pi$ contains all-pairs $h$-bounded shortest paths in a \emph{large} subgraph of $G_0-D$.
  \item removing $D$ from $G$ keeps \emph{most} of the paths in $\Pi$ intact.
\end{enumerate}
More precisely, we use the following lemma inspired by the preprocessing in~\cite{GutenbergW20b}.

\newcommand{\dstree}{\mathcal{T}}
\newcommand{\dsy}{\mathcal{A}}

\begin{restatable}{lemma}{lcollectionthres}\label{l:collection-thres}
  Let $G=(V,E)$ be a directed graph. Let $h,\tau\in [1,n]$.
  Assume one has constructed a \emph{deterministic} data structure storing $G$ and a subset $W\subseteq V$ initially equal to $V$, and supporting:
  \begin{enumerate}[(1)]
    \item updates removing a vertex from $W$,
    \item queries computing shortest $h$-bounded paths in $G[W]$ from a query vertex $s$ to \emph{all} $v\in V$.
  \end{enumerate}
  Suppose one can perform updates in $U(n,h)$ \emph{worst-case} time, and queries in $Q(n,h)$ time. 

  In $O(n^2h + (n/\tau)\cdot U(n,h) + n\cdot Q(n,h))$ time one can compute a set $C\subseteq V$ of
  size $O(n/\tau)$, and a collection of paths $\Pi$ in $G$, containing at most one path $\pi_{s,t}$
  for all $s,t\in V$, such that:
  \begin{enumerate}[(a)]
    \item For any $s,t\in V$, if $\dist_{G-C}^h(s,t)\neq\infty$, then $\pi_{s,t}\in \Pi$ exists and is an $s\to t$ path in $G$ of length at most $\dist_{G-C}^h(s,t)$.
    \item For any $u\in V$, there exist at most $O(hn\cdot \tau)$ paths $\pi_{s,t}\in \Pi$ with $u\in V(\pi_{s,t})$.
  \end{enumerate}
  After the algorithm completes, the assumed data structure is reverted back to the state when $W=V$.
\end{restatable}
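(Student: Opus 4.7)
The plan is to process the sources $s\in V$ one at a time in an arbitrary order, exploiting the \emph{monotonicity} of $W$ (which only shrinks during the algorithm) to guarantee condition~(a) without ever re-querying. Concretely, I would initialize $\Pi=\emptyset$, $C=\emptyset$, counters $c_u=0$ for all $u\in V$, and fix a threshold $\theta=\Theta(hn\tau)$ whose constant is pinned down in the analysis. For each source $s$ in turn: (i)~issue one query to obtain, for every $t$, the shortest $h$-bounded $s\to t$ path $\pi_{s,t}$ in the \emph{current} $G[W]$; (ii)~insert each such $\pi_{s,t}$ into $\Pi$ and increment $c_v$ for every vertex $v$ lying on it; (iii)~whenever some $u\in W$ has $c_u>\theta$, move $u$ from $W$ into $C$ via one call to the update operation. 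After all $n$ sources have been handled, restore the data structure to $W=V$ by performing $|C|$ inverse (reinsertion) updates.

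For~(a), the key point is that because $W$ only shrinks, at the moment source $s$ is queried the current set $W_s$ satisfies $W_s\supseteq V\setminus C$ with $C$ the \emph{final} set. Hence $G[W_s]\supseteq G-C$, so the returned $\pi_{s,t}$ has length $\dist_{G[W_s]}^h(s,t)\leq \dist_{G-C}^h(s,t)$ and is a path in $G[W_s]\subseteq G$. This is where the plan departs from the naive strategy of first fixing $C$ and then computing paths: by building $C$ lazily on top of paths already committed, one avoids re-querying after every removal and keeps the total query count at exactly $n$.

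For~(b), any $u$ remaining in $W$ at the end has $c_u\leq \theta$ by construction of step~(iii); and once $u$ joins $C$, its counter can no longer grow because subsequent queries run in a $G[W]$ that no longer contains $u$. The only overshoot happens within a single source's processing, during which $c_u$ increases by at most $n$, giving $c_u\leq \theta+n=O(hn\tau)$ even for $u\in C$. For $|C|=O(n/\tau)$, a charging argument suffices: total counter-increments are $\sum_{s,t}|V(\pi_{s,t})|\leq n^2(h+1)$, and each vertex that joins $C$ absorbs more than $\theta$ of this total, yielding $|C|\leq O(n^2h/\theta)=O(n/\tau)$. The time bound then splits cleanly into $n\cdot Q(n,h)$ for the queries, $O((n/\tau)\cdot U(n,h))$ for the $O(n/\tau)$ removals and their reinsertions, and $O(n^2h)$ bookkeeping for enumerating path vertices and maintaining counters.

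The step demanding the most care, and what I expect to be the main obstacle, is making the monotonicity argument for~(a) watertight — in particular, spelling out that $\pi_{s,t}$ is allowed to pass through vertices \emph{later} moved into $C$, because~(a) only demands that $\pi_{s,t}$ be a path in $G$ of length at most $\dist^h_{G-C}(s,t)$, not a path in $G-C$ itself. A secondary detail worth flagging is that the reversion step implicitly relies on the concrete data structure plugged into the lemma (e.g.\ that of \Cref{t:h-bounded-submatrix-batch}) supporting reinsertion through its broader vertex-update interface, even though the lemma statement nominally mentions only removals.
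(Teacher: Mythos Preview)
Your proposal is correct and matches the paper's proof almost exactly: process sources one by one, query in the current $G[W]$, add the returned paths to $\Pi$, maintain per-vertex congestion counters, and move over-threshold vertices into $C$; correctness of~(a) via monotonicity of $W$, the $|C|=O(n/\tau)$ bound via the charging argument, and the $O(n^2h)$ bookkeeping are all exactly as in the paper (the only cosmetic difference is that the paper checks the threshold \emph{before} each source's query rather than after, which makes no material difference).

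On the reversion step you flagged: the paper does not assume reinsertion support but instead uses the standard observation that, because the update bound is \emph{worst-case}, one can log all memory writes during the $O(n/\tau)$ removals and undo them afterwards in the same total time --- so you need not appeal to the concrete data structure's broader interface.
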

We call $\tau$ and $C$ in Lemma~\ref{l:collection-thres} the \emph{congestion threshold} and the \emph{congested subset}, respectively.
The preprocessing of Lemma~\ref{l:collection-thres}, applied to $G_0$ with a small-polynomial~$\tau$ and $h$, achieves the goal~(a) since the obtained collection $\Pi$ contains paths at least as
short as shortest $h$-bounded paths in $G_0-(C\cup D)$. Note that $G_0-(C\cup D)$ forms a
\emph{large} subgraph of $G_0-D$ because the obtained set of congested vertices $C$ has sublinear size.
It also achieves the goal~(b) since the $O(hn\cdot \tau\cdot \Delta)$ paths in $\Pi$ containing
a vertex from $D$ (let us denote them by $\Pi_D$) constitute a relatively small portion of the entire collection
$\Pi$ containing typically $\Theta(n^2)$ paths.

Using purely combinatorial data structures, the preprocessing of Lemma~\ref{l:collection-thres} takes $\Theta(n^3)$ time for dense graphs, which is prohibitive in our case since it is meant to be applied
once per $\Delta$ updates.
However, by plugging in the data structure $\dstree$ of Theorem~\ref{t:h-bounded-submatrix-batch}, extended with induced subgraph operations and path-reporting capabilities (with an appropriately chosen block size $b\in [1,n]$, as explained in Section~\ref{s:algebraic-combinatorial}), we can manage to compute the pair $(\Pi,C)$ in subcubic time upon the start of the phase.
For large enough $\Delta$, this incurs a subquadratic \emph{amortized} update cost.
However, we remark that the heavy super-quadratic preprocessing (which constitutes the only source of amortization here) is scheduled precisely every $\Delta$ updates and consequently the data structure can be deamortized using a standard technique; see, e.g.,~\cite{AbrahamCK17, BrandNS19, Thorup05}.

\subsection{Maintaining $h$-bounded distances beyond the preprocessed pairs}
The preprocessing of Lemma~\ref{l:collection-thres} applied at the beginning of a phase guarantees that $h$-bounded distances in $G_0-(C\cup D)=G-(C\cup D)$ are maintained for all $(s,t)$ such that either $\dist_{G_0-C}(s,t)=\infty$ holds, or $\pi_{s,t}\in\Pi\setminus \Pi_D$,
that is, we have $\dist_{G_0-(C\cup D)}^h(s,t)=|\pi_{s,t}|$.

We delegate maintaining $h$-bounded distances (in $G$) either corresponding to paths going \emph{through} $C\cup D$ and/or for the \emph{affected pairs} $(s,t)$ such that $\pi_{s,t}\in \Pi_D$ to a separate data structure
$\dsy$ of Theorem~\ref{t:h-bounded-submatrix-batch} maintaining the current graph $G$.
More specifically, we define the set $Y\subseteq V\times V$ (identified with the submatrix $Y$ of interest in $\dsy$, see Theorem~\ref{t:h-bounded-submatrix-batch})
to contain, at any time:
\begin{itemize}
  \item all pairs $((C\cup D)\times V) \cup (V\times (C\cup D))$,
  \item all pairs $(s,t)$ such that $\pi_{s,t}\in \Pi_D$.
\end{itemize}
Note that the size of $Y$ is bounded by $O(n^2/\tau+n\Delta+\Delta hn\tau)$ at all times,
which incurs a per-update cost of $O(h^2n^2/\tau+\Delta h^3n\tau)$
for maintaining $\dist_G^h(s,t)$ for all $(s,t)\in Y$ in the data structure~$\dsy$.
Moreover, we crucially rely on the low-recourse property of the set $Y$: it grows by at most $O(hn\tau)$
elements (worst-case) per update, as $D$ grows by at most two elements per update.
This allows to augment the submatrix of interest in $\dsy$ in $O(h^3n^{1+\mu}\tau)$ time per update.
We point out that it is the affected pairs that cause the submatrix
of interest $Y$ to be of a rather arbitrary shape beyond our control;
the remaining part of $Y$ is composed of some $O(n/\tau+\Delta)$ full rows and columns.

\subsection{Speeding up queries by hitting set-based sparsification}
Let $s\in V$ be some query vertex. Suppose we want to compute single-source
distances from $s$ in $G$.
First of all, using the data structure $\dsy$ we can temporarily add
all pairs $(s,t)$, $t\in V$, to the set $Y$ so that we can also access
single-source $h$-bounded distances from $s$.
This costs $\Ot(h^3n^{1+\mu})$ time.

With such an augmented set $Y$, one can prove (see Lemma~\ref{l:correct}) that for each $t$, there exists a shortest path $P=s\to t$ in~$G$ obtained by stitching paths of one of the following kinds:
\begin{enumerate}[(i)]
  \item a shortest $h$-bounded path $u\to v$, where $(u,v)\in Y$,
  \item a path $\pi_{u,v}\in \Pi\setminus \Pi_D$ such that $|\pi_{u,v}|\geq h/2$.
\end{enumerate}
Consequently, we could build a graph $X$ with edges $(u,v)\in Y$ of weight $\dist_G^h(u,v)$ (maintained by~$\dsy$) and edges $(u,v)$ for all non-affected pairs with $\pi_{u,v}\in \Pi\setminus \Pi_D$, and run Dijkstra's algorithm on $X$ to obtain single-source distances from $s$ in $G$.
The number of edges of the former kind is $O(n^2/\tau+hn\Delta\tau)$, i.e., subquadratic.
However, the number of edges of the latter kind can be still $\Theta(n^2)$.
Fortunately, these edges represent relatively long $\Theta(h)$-hop paths, and this can be exploited as follows.
During the preprocessing at the beginning of a phase, after computing the collection $\Pi$,
we compute an $\Ot(n/h)$-sized hitting set $H$ of the paths in $\Pi$
composed of at least $h/2$ edges. This can be done deterministically in $\Ot(n^2h)$
time using a folklore greedy algorithm. For each of these paths $\pi_{u,v}$,
we pick one $z_{u,v}\in H\cap V(\pi_{u,v})$ as a representative of the hitting set
for that path.

To be able to efficiently compute single-source distances upon query,
we sparsify the lengths of the $O(n^2)$ paths of type~(ii) using only $\Ot(n^2/h)$ edges.
For each $(z,v)\in H\times V$, we use an edge $zv$ of weight equal to the
minimum length of the $z\to v$ suffix of a path $\pi_{u,v}\in \Pi\setminus \Pi_D$ such that $z_{u,v}=z$. Similarly, for each $(u,z)\in V\times H$, 
we use an edge $uz$ of weight equal to the minimum length of the $u\to z$ prefix
of a path $\pi_{u,v}\in\Pi\setminus \Pi_D$ with $z_{u,v}=z$.
These auxiliary edges can be efficiently maintained subject to insertions to $D$
(equivalently, deletions from $\Pi\setminus \Pi_D$).
Using the auxiliary edges in place of explicit edges for all the non-affected
paths decreases the size of the graph~$X$ to $O(n^2/\tau+hn\Delta\tau+n^2/h)$
which is subquadratic for a proper choice of parameters.

We remark that our usage of hitting sets is conceptually different compared to previous dynamic
shortest paths algorithms.
There, hitting sets are typically used to hit long-hop paths in the \emph{current} graph,
most commonly by using randomization to ensure a once-sampled hitting
set is valid for $\poly(n)$ (future) versions of the graph (e.g., in~\cite{AbrahamCK17, Sankowski05, BrandFN22}).
In our case, we use it merely for the sake of sparsification of some statically precomputed
set of paths and do not require maintaining the current all-pairs $h$-bounded distances between the vertices of the hitting set.

\subsection{Fully dynamic $(1+\eps)$-approximate APSP with non-trivial worst-case update bound and optimal path-reporting}
In order to obtain the deterministic all-pairs data structures supporting optimal-time path reporting, we build upon the framework that we have used for single-source distances.
First of all, to be competitive with the known all-pairs reachability and APSP data structures, we need to support more complex \emph{vertex updates}. Fortunately, the SSSP framework can be seen to
allow decremental vertex updates within a phase, and only relies on the single-edge updates assumption (i.e., that every updated edge has \emph{both} endpoints in $D$)
in the query procedure.
Moreover, the data structure of Theorem~\ref{t:h-bounded-submatrix-batch} supports batch-vertex updates, which we leverage when advancing to the next phase.

In our all-pairs data structures, the single-source query procedure is replaced by \emph{explicitly rebuilding}
a collection $\Pi'$ (containing, in particular, the unaffected paths $\Pi\setminus \Pi_D$) of $h$-bounded paths $\pi'_{u,v}$ for the $O(hn\Delta\tau)$ affected pairs $(u,v)$ (with $\pi_{u,v}\in \Pi_D$) satisfying similar guarantees wrt. the $h$-bounded distances in $G-(C\cup D)$.
This is achieved via another application of the data structure of Theorem~\ref{t:h-bounded-submatrix-batch}, enhanced as described in Section~\ref{s:algebraic-combinatorial}.
Importantly, this allows for optimal-time reporting of $h$-bounded paths
in $G$ that are at least as short as those appearing in $G-(C\cup D)$.

With the shortest $h$-bounded paths $\Pi'$ improving upon those in $G-(C\cup D)$ rebuilt,
we deterministically compute an $\Ot(n/h)$-sized hitting set $H$ of
the $\Theta(h)$-hop paths in $\Pi'$.
One can show that for every pair $(s,t)$, either $\pi'_{s,t}\in \Pi'$
is a shortest $s\to t$ path in $G$, or there is a shortest path $P=s\to t$
in $G$
with a vertex in $C\cup D\cup H$ appearing
on each $h$-hop segment of $P$.
If $|C\cup D\cup H|\leq n^\alpha$, for some $\alpha<1$,
then this allows computing all-pairs shortest paths of the latter kind
using $O(\log{n})$ min-plus products of matrices whose at least one dimension
is bounded by $O(n^\alpha)$.
Such products can be computed polynomially faster than products of full $n\times n$
matrices, even without resorting to non-trivial rectangular matrix multiplication algorithms~\cite{GallU18, HuangP98}.
Sadly, in our case, using exact min-plus product for this task does not yield
an improved worst-case update bound over the known exact APSP data structures~\cite{AbrahamCK17, GutenbergW20b} with subcubic worst-case update bounds.

However, if we use the $(1+\eps)$-approximate min-plus product~\cite{Zwick02} (where $\eps\in (0,1]$)
we can perform this computation approximately
in sub-$n^\omega$ time $\Ot(n^{\omega(1,\alpha,1)}/\eps)$.
The approximate min-plus product can also produce matrix multiplication
witnesses within the same bound deterministically.
Those, in turn, can be used to recover, for any query pair $(s,t)$, some
$s\to t$ path of length at most $(1+\eps)\cdot \dist_G(s,t)$
in optimal $O(\dist_G(s,t))$ time.

For the special case of all-pairs reachability (i.e., any
finite approximation of shortest paths), another application
of matrix multiplication witnesses can yield a stronger
path-reporting interface. Namely, within $O(n^{\omega(1,\alpha,1)})$ time
we can also explicitly recompute all-pairs reachability trees.

Finally, we remark that, at the end of the day, due to the overheads of the preprocessing and rebuilding
short-hop paths, the worst-case update time of this data structure
is faster than the recompute-from-scratch bound $\Ot(n^\omega/\eps)$ only
conditionally, if $\omega\geq 2.12$.

\subsection{Deterministic fully-dynamic transitive closure with an unconditional sub-$n^\omega$ update bound}
For the fully dynamic transitive closure problem, we show that the deterministic sub-$n^\omega$ worst-case update time can be attained unless $\omega=2$.
We achieve that using a different and simpler approach.

Recall that a randomly sampled hitting set of size $\Ot(n/h)$ hits some polynomial
number of fixed $\geq h$-hop \emph{shortest paths} with high probability.
Roughly speaking, given any such hitting set $H$ of size at most $n^\alpha$, $\alpha<1$,
we can separately compute reachability through $H$ using rectangular
matrix multiplication in $O(n^{\omega(1,\alpha,1)})$ time, and delegate
maintaining all-pairs $h$-bounded distances to the deterministic data structure of Theorem~\ref{t:h-bounded-submatrix-batch} with $\Ot(h^2n^2)$ worst-case update time per vertex update.
However, whereas such a hitting set $H$ is trivial to obtain using randomization,
it is not clear whether it can be computed deterministically
even in $O(n^\omega)$ time.

Our main idea is to use a weaker -- both qualitatively and quantitatively -- notion of a hitting set of $\geq h$-hop paths.
First of all, for the purposes of reachability, it is enough that for any $s,t\in V$ with $\dist_G(s,t)\geq h$, the hitting set
$H$ hits an $h$-edge prefix of \emph{some} (not necessarily shortest) $s\to t$ path.
Such a qualitatively weaker hitting set of size $\Ot(n/h)$ can be computed based on all-pairs reachability trees, computable in $\Ot(n^\omega)$ deterministic time~\cite{AlonGMN92}.

Second, a larger (but still sublinear) hitting set with the same guarantee can be computed more efficiently. Specifically, we show that if we are willing
to accept a hitting set of size $\Ot((n/d)\cdot (n/h))$, for a parameter $d\in [1,n]$, where $d\cdot h=\Omega(n)$, then it can be constructed deterministically in $\Ot(n^2+nd^{\omega-1})$ time. We show that for any value of $\omega>2$,
the parameters $h$ and $d$ can be set so that the worst-case update time
of the data structure is polynomially smaller than $n^\omega$.

A drawback of this data structure is that it cannot report
$s\to t$ paths in time linear in the length of the reported path, let
alone in optimal $O(\dist_G(s,t))$ time.
That being said, paths can still be reported in $O(n)$ time
using a path reporting procedure of~\cite{Karczmarz0S22}.

\subsection{Randomized improvements}
Using randomization, we can slightly improve the update bounds
of both the SSSP data structure and the $(1+\eps)$-approximate
APSP data structure. First of all, in both cases, using
path counting modulo a prime shaves an $h$-factor from
most of the terms in the update bounds.
Then, the obtained data structures become Monte Carlo randomized
(correct with high probability) but can still work against
an adaptive adversary since the path counts are never revealed.

With randomization, the update/query bound of the fully dynamic SSSP data structure can be improved even further.
Using standard $\Ot(n/h)$-sized
randomly sampled hitting sets, one can
replace the $O(n^2)$-sized path collection of Lemma~\ref{l:collection-thres}
with a different $O(n^2/h)$-sized collection
as used in the fully dynamic APSP data structure of~\cite{AbrahamCK17} (see Lemma~\ref{l:collection-rand}).
Overall, the powerful sampled hitting set makes it possible
to avoid two components crucial in the deterministic data structure:
(a) the explicit sublinear congested subset, and (b) sparsifying
the surviving paths from the collection (since the collection has subquadratic size).
It is worth noting that since the distances output by the data
structure are exact (and thus uniquely determined by the graph),
the data structure still works against an adaptive adversary.
See Section~\ref{s:randomized} for details.

\section{Reporting paths via grouping into blocks}\label{s:blocks}
In this section we state a reduction, based on a technique of~\cite{Karczmarz0S22} (with some adjustments and refinements of ours), for converting data structures
maintaining distances
into path-reporting algorithms. The reduction is based on grouping vertices into blocks of size $b$, where $b$ is a parameter in $[1,n]$.
\cite{Karczmarz0S22} used
this technique for converting a subquadratic dynamic single-source reachability data structure~\cite{Sankowski04} into a subquadratic (albeit polynomially slower)
fully-dynamic
single-source \emph{reachability tree reporting} data structure against an adaptive adversary. We apply the technique to $h$-bounded shortest paths instead. Moreover, by using vertex blocks in a slightly different way, we obtain a generalization (Lemma~\ref{l:blocks2}) that is tailored to our use cases.
The proofs from this section can be found in Appendix~\ref{s:omitted}.

\newcommand{\qpair}{Q_{\mathrm{pair}}}
\newcommand{\qsource}{Q_{\mathrm{source}}}
\newcommand{\upair}{U_{\mathrm{pair}}}
\newcommand{\usource}{U_{\mathrm{source}}}
\newcommand{\ipair}{I_{\mathrm{pair}}}
\newcommand{\isource}{I_{\mathrm{source}}}

  Let $G=(V,E)$ be a directed graph. Let $h\in [1,n]$. Suppose in $\isource(n,h)$ time one can construct a data structure $\mathcal{D}(G)$
  maintaining a graph $G$ under some type of updates (say, single-edge or vertex updates) and supporting single-source queries computing $\dist^h_G(s,t)$ for
  a given query source~$s$ and  all ${t\in V}$. Suppose $\mathcal{D}(G)$ has $\usource(n,h)$ \emph{worst-case} update time and
  $\qsource(n,h)=\Omega(n)$ query time,
  for some non-decreasing functions $\isource,\usource,\qsource\in O(\poly(n,h))$.

  Our goal is to have a data structure capable of handling the following \emph{submatrix predecessor queries}.
  Given query sets $S\subseteq V$ and $Y\subseteq S\times V$,
  compute for each $(s,t)\in Y$ the predecessor vertex on some shortest $h$-bounded $s\to t$ path (if one exists).
  
  Observe that an immediate application of a submatrix predecessor query $(S,S\times V)$ is computing single-source $h$-bounded
  shortest paths from $s$ to $v$ for all $s\in S$ and $v\in V$ such that $\dist_G(s,v)\leq h$.

\begin{restatable}{lemma}{lblocks}\label{l:blocks}
  For any $b\in [1,n]$, there exists a dynamic data structure maintaining $G$ and supporting submatrix predecessor queries $(S,Y)$
  in
  \begin{equation*}
    O\left(\frac{n}{b}\cdot |S|\cdot\qsource(n,h)+|Y|\cdot b\right)
  \end{equation*}
  time.
  The update and initialization times are 
  $O\left(\frac{n}{b}\cdot \usource(n,h)\right)$ and $O\left(\frac{n}{b}\cdot \isource(n,h)\right)$ resp.
\end{restatable}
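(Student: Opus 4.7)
The plan is to realize the block-grouping technique of \cite{Karczmarz0S22} via a selective vertex split that restricts which vertex may serve as the penultimate vertex of a path. Fix an arbitrary partition $V=V_1\sqcup\cdots\sqcup V_\ell$ with $\ell=\lceil n/b\rceil$ and $|V_i|\le b$. For each $i$, define the companion graph $G_i$ on vertex set $V\cup V^*$, where $V^*=\{v^*:v\in V\}$ is a disjoint copy of~$V$, by including (a)~all edges $u\to v$ of $G$ and (b)~a ``terminal'' edge $u\to v^*$ for every $uv\in E(G)$ with $u\in V_1\cup\cdots\cup V_i$. Since each $v^*$ is a sink in $G_i$, every $s\to t^*$ walk decomposes as an $s\to v_{k-1}$ walk in $G$ followed by a single terminal edge, which is present iff $v_{k-1}\in V_1\cup\cdots\cup V_i$. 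Hence $\dist_{G_i}^{h}(s,t^*)$ equals the minimum length of an $h$-bounded $s\to t$ path in $G$ whose penultimate vertex lies in $V_1\cup\cdots\cup V_i$; in particular, $\dist_{G_\ell}^{h}(s,t^*)=\dist_{G}^{h}(s,t)$ for $s\neq t$.

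I instantiate $\mathcal{D}(G_i)$ for each $i$. A single-edge update $uv$ in $G$ translates into the same update in all $G_i$, together with a terminal update $u\to v^*$ in those $G_i$ with $u\in V_1\cup\cdots\cup V_i$; this is a constant number of updates per $G_i$, yielding worst-case update time $O((n/b)\cdot \usource(n,h))$ and initialization time $O((n/b)\cdot \isource(n,h))$. Vertex updates translate analogously via $O(1)$ vertex updates per $G_i$ (one at $v$, one at $v^*$). I also maintain the adjacency matrix of $G$ to answer edge-existence queries in $O(1)$ time.

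To answer a submatrix predecessor query $(S,Y)$, I first invoke the single-source query on every $\mathcal{D}(G_i)$ from every $s\in S$, caching $\dist_{G_i}^{h}(s,v^*)$ for all $v,i$ in $O(|S|\cdot(n/b)\cdot \qsource(n,h))$ total time. For each $(s,t)\in Y$ with $d:=\dist_G^h(s,t)<\infty$, I binary-search over $i$ for the smallest $i^*$ with $\dist_{G_{i^*}}^{h}(s,t^*)=d$, certifying that some shortest $h$-bounded $s\to t$ path has its penultimate vertex in $V_{i^*}$. I then scan $p\in V_{i^*}$ and return any $p$ satisfying $p\to t\in E(G)$ and $\dist_G^{h}(s,p)=d-1$ (the latter read off from the cached $\dist_{G_\ell}^{h}(s,p^*)$ values); the scan costs $O(b)$ per pair and $O(|Y|\cdot b)$ overall, matching the claimed bound.

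The main subtlety is correctness of the predecessor criterion used in the scan. Existence of at least one valid $p\in V_{i^*}$ follows from the minimality of $i^*$: for a penultimate vertex $p_0\in V_{i^*}$ realising a shortest $h$-bounded $s\to t$ path, $\dist_G^{h}(s,p_0)$ must equal exactly $d-1$, for otherwise a strictly shorter $s\to p_0$ prefix combined with the edge $p_0\to t$ would contradict $\dist_G^h(s,t)=d$. Conversely, any candidate $p$ meeting both criteria yields such a shortest path because $G$ is unweighted, so length equals hop count: a witness of $\dist_G^h(s,p)=d-1$ has exactly $d-1\le h-1$ hops, and concatenating the edge $p\to t$ produces a length-$d$, at most $h$-hop $s\to t$ path through $p$. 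The degenerate case $s=t$ (for which no predecessor is meaningful) is handled separately.
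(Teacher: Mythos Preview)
Your proof is correct and follows essentially the same approach as the paper: both define the prefix-filtered auxiliary graphs $G_i$ (with sink copies $t^*$ of every vertex, carrying only terminal edges from $V_1\cup\cdots\cup V_i$), maintain $\mathcal{D}(G_i)$ for all $i$, locate the block index $i^*$ containing a valid penultimate vertex via the cached single-source distances, and then scan the $\le b$ vertices of $V_{i^*}$ using the criterion $pt\in E(G)$ and $\dist_G^h(s,p)=d-1$. The only cosmetic difference is that you binary-search for $i^*$ whereas the paper scans linearly (both are absorbed in the stated bound since the values are already cached); the binary search is in fact what the paper reserves for Lemma~\ref{l:blocks2}.
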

Now suppose that in addition to $\mathcal{D}(G)$, there is a data structure $\mathcal{D}'(G)$ supporting 
queries about $\dist^h_G(s,t)$ for arbitrary pairs $s,t\in V$ in $\qpair(n,h)$ time
and maintaining a graph $G$ subject to the same type of updates as $\mathcal{D}(G)$ with worst-case update
time $\upair(n,h)$ and initialization time $\ipair(n,h)$.
We can then improve the bound of Lemma~\ref{l:blocks}.

\begin{restatable}{lemma}{lblockstwo}\label{l:blocks2}
  For any $b\in [1,n]$, there exists a dynamic data structure maintaining $G$ and supporting submatrix predecessor queries $(S,Y)$
  in
  \begin{equation*}
    O\left(|S|\cdot \qsource(n,h)+|Y|\cdot \qpair(n,h)\cdot \log{n}+|Y|\cdot b\right)
  \end{equation*}
  time.
  The update and initialization times are
  $O\left(\usource(n,h)+\frac{n}{b}\cdot \upair(n,h)\right)$ and \linebreak $O\left(\isource(n,h)+\frac{n}{b}\cdot \ipair(n,h)\right)$ respectively.
\end{restatable}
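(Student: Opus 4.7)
\textbf{Proof plan for Lemma~\ref{l:blocks2}.}
The plan is to keep the block decomposition and the auxiliary graphs $G_1,\ldots,G_\ell$ of Lemma~\ref{l:blocks}, but to replace the expensive ``query all $G_i$'s with $\mathcal{D}$'' step by a single single-source query in $G$ plus a binary search over $i$ using $\mathcal{D}'$. First I would partition $V$ into $\ell\le\lceil n/b\rceil$ blocks $V_1,\ldots,V_\ell$ of size at most $b$, and build the vertex-split graphs $G_1,\ldots,G_\ell$ exactly as in Lemma~\ref{l:blocks}, so that $\dist^h_{G_i}(s,t)$ equals the length of the shortest $h$-bounded $s\to t$ path in $G$ whose penultimate vertex lies in $V_1\cup\cdots\cup V_i$; in particular this sequence is monotonically non-increasing in $i$, and $\dist^h_{G_\ell}(s,t)=\dist^h_G(s,t)$. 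In parallel I would maintain one instance $\mathcal{D}(G)$ of the single-source data structure, and one instance $\mathcal{D}'(G_i)$ of the pair-query data structure for every $i$, plus an $n\times n$ adjacency bitmap of $G$ for $O(1)$ edge lookups. A single edge update to $G$ triggers $O(1)$ updates in $\mathcal{D}(G)$ and $O(1)$ updates in each $\mathcal{D}'(G_i)$, giving worst-case update cost $O(\usource(n,h)+(n/b)\cdot\upair(n,h))$, and similarly for initialization.

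To answer a submatrix predecessor query $(S,Y)$, for every $s\in S$ I would issue a single single-source query to $\mathcal{D}(G)$ to obtain the vector $d_s(\cdot)=\dist^h_G(s,\cdot)$, paying $O(|S|\cdot\qsource(n,h))$ overall. Then, for each $(s,t)\in Y$ with $d_s(t)<\infty$, I would binary search for the smallest index $i^*\in\{1,\ldots,\ell\}$ such that $\dist^h_{G_{i^*}}(s,t)=d_s(t)$, using $\mathcal{D}'(G_{i})$ for each of the $O(\log n)$ probes; monotonicity in $i$ is what justifies the binary search, and the construction of $G_{i^*}$ certifies the existence of a shortest $h$-bounded $s\to t$ path whose penultimate vertex lies in $V_{i^*}$. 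Having localised the predecessor to a block of size at most $b$, I would then scan $V_{i^*}$ and return any vertex $v\in V_{i^*}$ with $(v,t)\in E$ and the appropriate distance-consistency condition (using the precomputed vector $d_s$ and the adjacency bitmap for $O(1)$ checks per candidate). The scan contributes $O(|Y|\cdot b)$ and the binary searches contribute $O(|Y|\cdot\qpair(n,h)\log n)$, summing to the claimed query bound.

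The step I expect to require the most care is the last one: certifying that the candidate $v\in V_{i^*}$ extracted by the $O(b)$ scan really is a valid predecessor on an $h$-bounded shortest path, given only the length information produced by $\mathcal{D}(G)$ and $\mathcal{D}'(G_{i^*})$. The subtlety is the usual hop-budget bookkeeping when mixing $\dist^h$ and $\dist^{h-1}$; fortunately, the vertex-split construction of $G_i$ inherited from Lemma~\ref{l:blocks} is precisely designed so that the penultimate-in-$V_{i^*}$ characterisation translates into a local $O(1)$-checkable condition on $v$. I would therefore invoke the construction and analysis of Lemma~\ref{l:blocks} in a black-box way for this certification, changing only the control flow (one single-source query plus binary search with $\mathcal{D}'$, instead of $n/b$ single-source queries with $\mathcal{D}$), which is exactly what produces the improved bounds of Lemma~\ref{l:blocks2}.
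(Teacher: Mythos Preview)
Your proposal is correct and matches the paper's own proof essentially line for line: maintain one instance $\mathcal{D}(G)$ together with $O(n/b)$ instances $\mathcal{D}'(G_i)$ on the auxiliary block graphs from Lemma~\ref{l:blocks}, answer a query by first pulling the single-source $h$-bounded distances from $\mathcal{D}(G)$ and then, for each $(s,t)\in Y$, binary-searching over $i$ via pair queries to $\mathcal{D}'(G_i)$ (using the monotonicity $\dist^h_{G_1}(s,t')\ge\cdots\ge\dist^h_{G_q}(s,t')$) to localise the predecessor's block before scanning it. The only cosmetic slip is that in the construction inherited from Lemma~\ref{l:blocks} the monotone quantity is $\dist^h_{G_i}(s,t')$ for the copy vertex $t'\in V'$, not $\dist^h_{G_i}(s,t)$ (which always equals $\dist^h_G(s,t)$), but since you explicitly invoke that construction black-box this does not affect the argument.
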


\begin{remark}
  The data structures of Lemmas~\ref{l:blocks}~and~\ref{l:blocks2} are deterministic if the assumed
  data structures $\mathcal{D}(G)$ and $\mathcal{D}'(G)$ are deterministic. 
  More generally, if the data structures $\mathcal{D}(G)$ and $\mathcal{D}'(G)$ produce correct
  answers within the assumed bounds against an adaptive adversary, so do the obtained submatrix predecessor
  data structures.
\end{remark}

\paragraph{Supporting induced subgraph queries.}
Later on we will also require reporting shortest $h$\nobreakdash-bounded paths in an induced subgraph
$G[W]$ of a fully dynamic $G$, where $W\subseteq V$ is also dynamic. This can be achieved via a simple yet powerful reduction described
in the following.
\begin{restatable}{lemma}{linducedsubgraph}\label{l:induced-subgraph}
  Let $h\in [1,n]$.
  Let $\mathcal{D}(G)$ be a data structure maintaining a fully dynamic unweighted digraph $G=(V,E)$ subject to single-edge updates in $U(n,h)$ time per update,
  and possibly also vertex updates, and supporting submatrix predecessor queries $(S,Y)$
  in $Q(|S|,|Y|)$ time.

  Let $W\subseteq V$ be a dynamic subset.
  Then there exists a data structure $\mathcal{D}'$ supporting the same set of updates to $G$
  and supporting submatrix predecessor queries $(S,Y)$ in the induced subgraph $G[W]$
  in $O(Q(|S|,|Y|))$ time.
  $\mathcal{D}'(G)$ processes updates to $G$ within asymptotically the same time 
  that $\mathcal{D}(G)$ needs for processing the respective updates to $G$.
  Moreover, a single-element update to~$W$ can be performed in $O(U(n,h))$ time.
\end{restatable}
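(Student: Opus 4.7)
The plan is to reduce induced-subgraph queries on the dynamic pair $(G,W)$ to ordinary queries on a single, fixed-vertex-set graph $G'$ via the standard vertex-splitting gadget. For each $v\in V$, create two copies $v^{\mathrm{in}}, v^{\mathrm{out}}$, and let $G'$ be the graph on $2n$ vertices containing the arc $u^{\mathrm{out}}v^{\mathrm{in}}$ for every $uv\in E$, and additionally containing the \emph{internal arc} $v^{\mathrm{in}}v^{\mathrm{out}}$ exactly when $v\in W$. I will maintain $\mathcal{D}(G')$ with hop threshold $2h$; since $|V(G')|=2n$ and $U,Q$ are polynomial, the factor-$2$ blowup in vertex count and hop budget is absorbed in the stated bounds.

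The key correspondence is that for $s,t\in W$, an $s\to t$ walk of length $k$ in $G[W]$ lifts uniquely to an $s^{\mathrm{out}}\to t^{\mathrm{in}}$ walk of length $2k-1$ in $G'$ obtained by interleaving the internal arcs of the $k-1$ intermediate vertices (all in $W$, so their internal arcs exist in $G'$), and conversely every $s^{\mathrm{out}}\to t^{\mathrm{in}}$ walk in $G'$ must alternate original arcs with internal arcs and thus projects back to an $s\to t$ walk in $G[W]$ whose length equals the number of original arcs used. Consequently, shortest $h$-bounded $s\to t$ paths in $G[W]$ correspond bijectively (and length-preservingly, up to the $2k-1$ scaling) to shortest $(2h-1)$-bounded $s^{\mathrm{out}}\to t^{\mathrm{in}}$ paths in $G'$, and the penultimate vertex of such a path in $G'$ is necessarily some $p^{\mathrm{out}}$, where $p$ is a valid predecessor of $t$ on the corresponding path in $G[W]$.

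Updates are then translated as follows. A single-edge update $uv$ to $G$ becomes a single-edge update on $u^{\mathrm{out}}v^{\mathrm{in}}$ in $G'$; a vertex update on $v\in V$ becomes at most two vertex updates in $G'$ (on $v^{\mathrm{in}}$ and $v^{\mathrm{out}}$); and adding or removing $v$ to/from $W$ toggles the presence of the single internal arc $v^{\mathrm{in}}v^{\mathrm{out}}$, i.e., is a single-edge update to $G'$ costing $O(U(n,h))$. A submatrix predecessor query $(S,Y)$ in $G[W]$ is answered by trivially discarding, in $O(|Y|)$ time, all pairs with an endpoint outside $W$, forwarding the translated query $\bigl(\{s^{\mathrm{out}}:s\in S\cap W\},\,\{(s^{\mathrm{out}},t^{\mathrm{in}}):(s,t)\in Y,\ s,t\in W\}\bigr)$ to $\mathcal{D}(G')$, and projecting each returned predecessor $p^{\mathrm{out}}$ back to the corresponding $p\in V$.

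The only real thing to verify carefully is the walk-correspondence — in particular that the ``internal arc present iff $v\in W$'' trick exactly encodes the constraint that intermediate vertices lie in $W$ without imposing any spurious restriction on the endpoints — together with the choice of hop budget $2h$. Everything else is bookkeeping, and I do not anticipate any serious obstacle; the construction is essentially the same ``vertex-switching as single-edge updates'' reduction already advertised in Section~\ref{s:algebraic-combinatorial}, applied uniformly on top of $\mathcal{D}$.
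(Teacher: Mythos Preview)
Your proposal is correct and takes essentially the same approach as the paper: both use the standard vertex-splitting gadget where each $v$ becomes an in/out pair connected by an internal arc present iff $v\in W$, so that toggling membership in $W$ is a single-edge update. The only cosmetic difference is that you query from $s^{\mathrm{out}}$ to $t^{\mathrm{in}}$ (yielding lifted paths of length $2k-1$ and requiring your explicit endpoint filter), whereas the paper queries from $s^+$ to $t^+$ (yielding length $2k$, with the $s^+s^-$ arc enforcing $s\in W$ automatically); both variants recover the correct predecessor in the same way.
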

The details of the reduction can be found in Appendix~\ref{s:omitted}.

\section{Maintaining most $h$-bounded shortest paths under deletions}\label{s:collection}
In this section we show that with the help of algebraic fully dynamic distance data structures storing $G$, one can preprocess $G$ in subcubic
time so that most of the $h$-bounded shortest paths in a very
large subgraph of $G$ are intact by any sufficiently small adversarially chosen
set of deleted vertices.

We start with the following lemma inspired by the construction of the fully dynamic APSP algorithms of~\cite{GutenbergW20b}.

\lcollectionthres*
\begin{proof}
  Initially we put $C=\emptyset$ and $\Pi=\emptyset$. We will gradually extend $C$ and $\Pi$ to make the condition (a) satisfied for more and more pairs $(s,t)$.
  We maintain the invariant that the set $W$ in the assumed data structure equals $V\setminus C$.
  That is, whenever a new element is added to $C$, this is reflected in the data structure by a deletion to $W$.

  Moreover, let $\alpha(v):=|\{\pi\in \Pi:v\in V(\pi)\}|$.
  Initially all values $\alpha(\cdot)$ are zero.
  We maintain an invariant that $\alpha(v)\leq 2nh\tau$ for all $v\in V$.
  By this invariant, (b) will follow when we are done.

  We process the sources $s\in V$ one by one. For a fixed source $s$, we 
  start by moving to $C$ all the vertices $v\in V$ not yet in $C$
  such that $\alpha(v)>nh\tau$.
  Next, we compute
  shortest $h$-bounded paths from $s$ to all $v\in V$ in $G[V\setminus C]$ in $Q(n,h)$ time
  and add them as the respective paths $\pi_{s,t}$ (if they exist) to $\Pi$.
  We also update the counters $\alpha(\cdot)$.
  Every $\alpha(v)$, for $v\in V\setminus C$, grows by at most $n$ when processing the source $s$, and thus
  afterwards we have $\alpha(v)\leq nh\tau+n\leq 2nh\tau$, so the invariant is satisfied.
  Since the set $C$ may only grow in the process, and thus the values $\dist_{G-C}^h(\cdot,\cdot)$ may only increase, the paths already in $\Pi$ satisfy (a).
  Moreover, after processing the source $s$, we ensure that the condition (a) is satisfied also
  for all pairs $(s,t)$, $t\in V$.

  Finally, to see that $|C|=O(n/\tau)$, observe that for every source $\sum_{v\in V}\alpha(v)$
  grows by at most $n(h+1)$, as each path added to $\Pi$ has at most $h+1$ vertices.
  As a result, at the end $\sum_{v\in V}\alpha(v)\leq n^2(h+1)\leq 2n^2h$.
  Since each $c\in C$ satisfies $\alpha(c)>nh\tau$, we obtain $|C|\leq 2n/\tau$.

  Once we have completed computing $(C,\Pi)$, we revert all the changes
  to the assumed data structure, so that $W$ is again equal to $V$. This can be done
  within the same bound as the the described computation since
  the assumed update/query bounds are worst-case.
\end{proof}

We will repeatedly use the following deterministic algebraic data structure maintaining \linebreak
$h$-bounded distances of a fully dynamic graph $G$ subject to single-edge updates and/or vertex updates.
This data structure encapsulates and extends various earlier developments of~\cite{Sankowski04,Sankowski05,BrandFN22}.
The full proof is given in \Cref{s:algebraic}.

\algebraic*

\paragraph{The path-reporting data structure $\dstree$.} Let us apply the construction of Lemma~\ref{l:blocks2} on top of the the data structure of \Cref{t:h-bounded-submatrix-batch},
and subsequently augment it with induced subgraph operations as described in Lemma~\ref{l:induced-subgraph}.
This way, we obtain a data structure $\dstree$
required by Theorem~\ref{l:collection-thres}.

More specifically, in terms of Lemma~\ref{l:blocks2}, the data structure of Theorem~\ref{t:h-bounded-submatrix-batch}
satisfies
\begin{align*}
  \qsource(n,h)&=\Ot(h^2n^{1+\mu}),\\
  \qpair(n,h)&=\Ot(h^2n^\mu),\\
  \usource(n,h)=\upair(n,h)&=\Ot(h^2(n^{1+\omega(1,\mu,1)-\mu}+n^{1+\mu})).
\end{align*}
Consequently, for any chosen $b\in [1,n]$, the data structure 
$\dstree$ supports single-edge updates to $G$ and switching off a vertex (via its removal from $W\subseteq V$) in worst-case time
\begin{equation*}
  U(n,h)=O\left(\frac{h^2n^{1+\omega(1,\mu,1)-\mu}}{b}+\frac{h^2n^{2+\mu}}{b}\right).
\end{equation*}
It also supports reporting $h$-bounded shortest path from any single source $s\in V$ in
\begin{equation*}
  Q(n,h)=O\left(h^2n^{1+\mu}\log{n}+nb\right)
\end{equation*}
time, by issuing a submatrix predecessor query $(\{s\},\{(s,v):v\in V\})$.
By plugging these bounds in Lemma~\ref{l:collection-thres} and noting
that $n\cdot Q(n,h)=\Omega(n^2h)$, we conclude
with the following theorem which formally captures what we mean
by preserving \emph{most} $h$-bounded shortest paths in a large subgraph of $G$ under a bounded number of deletions.

\begin{theorem}\label{t:collection}
  Let $h,\tau,b\in [1,n]$ and $\mu\in [0,1]$.
  Assume $G$ is maintained in a data structure $\dstree$ of \Cref{t:h-bounded-submatrix-batch} extended
  as described in Lemmas~\ref{l:blocks2}~and~\ref{l:induced-subgraph} (for the given $b$). Then,
  in time
  \begin{equation*}
    \Ot\left(\frac{h^2n^{2+\omega(1,\mu,1)-\mu}}{b\tau}+\frac{h^2n^{3+\mu}}{b\tau}+n^2b\right),
  \end{equation*}
  one can deterministically compute a pair $(C,\Pi)$, where $C\subseteq V$ is of
  size $O(n/\tau)$, and $\Pi$ is a collection of paths in $G$, containing at most one path $\pi_{s,t}$
  for all $s,t\in V$, such that:
  \begin{enumerate}[(a)]
    \item For any $i=1,\ldots,k$ and $s,t\in V$, if $\dist_{G-C}^h(s,t)\neq\infty$, then $\pi_{s,t}\in \Pi$ exists and is an $s\to t$ path in $G$ of length at most $\dist_{G-C}^h(s,t)$.
    \item For any $u\in V$, there exist at most $O(hn\cdot \tau)$ paths $\pi_{s,t}\in \Pi$ with $u\in V(\pi_{s,t})$.
  \end{enumerate}
\end{theorem}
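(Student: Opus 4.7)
The plan is to derive Theorem~\ref{t:collection} as a direct application of Lemma~\ref{l:collection-thres} to the specific data structure $\dstree$ constructed in the paragraph above the theorem. All of the conceptual content — the greedy pruning argument guaranteeing $|C|=O(n/\tau)$, property~(a) with respect to $G-C$, and the $O(hn\tau)$ congestion bound of property~(b) — already lives inside Lemma~\ref{l:collection-thres}. What remains is to verify that $\dstree$ meets that lemma's hypotheses and to substitute its concrete update and query times.

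First I would check the interface. Lemma~\ref{l:collection-thres} demands a deterministic data structure maintaining $G$ together with a dynamic subset $W \subseteq V$ that supports vertex removals from $W$ and single-source shortest $h$-bounded path queries in $G[W]$. Instantiating both $\mathcal{D}(G)$ and $\mathcal{D}'(G)$ of Lemma~\ref{l:blocks2} with the deterministic algebraic oracle of Theorem~\ref{t:h-bounded-submatrix-batch} (for parameter $\mu$), and then wrapping the result with the induced-subgraph reduction of Lemma~\ref{l:induced-subgraph}, yields $\dstree$. A single-source $h$-bounded shortest path query in $G[W]$ is then realized by the submatrix predecessor query with $S=\{s\}$ and $Y=\{s\}\times V$, followed by a BFS on the $O(nb)$-edge subgraph recovered from the block structure. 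Determinism is preserved through all three layers by the explicit remarks in Section~\ref{s:blocks}.

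Second, I would plug the update and query times of $\dstree$ — namely the $U(n,h)$ and $Q(n,h)$ printed just above the theorem — into the cost $O(n^2 h + (n/\tau)\,U(n,h) + n\cdot Q(n,h))$ of Lemma~\ref{l:collection-thres}. The contribution $(n/\tau)\,U(n,h)$ gives the first two summands of the claimed bound; $n\cdot Q(n,h)$ gives the $n^2 b$ summand together with a byproduct $\Ot(h^2 n^{2+\mu})$ that is dominated by $\Ot(h^2 n^{3+\mu}/(b\tau))$ in the relevant parameter regime $b\tau\le n$; and the additive $n^2 h$ term of Lemma~\ref{l:collection-thres} is swallowed by $n\cdot Q(n,h)$ because $Q(n,h)\ge h^2 n^{1+\mu}\ge nh$. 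This reproduces exactly the stated three-term bound.

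I do not expect a genuine obstacle here — the theorem is essentially a bookkeeping corollary — but there are two small points I would double-check. First, that the induced-subgraph wrapping of Lemma~\ref{l:induced-subgraph} preserves both determinism and the \emph{worst-case} update guarantee required by the deterministic statement of Lemma~\ref{l:collection-thres} (needed for the bookkeeping trick at the end of that lemma's proof which reverts $\dstree$ back to $W=V$). Second, that the $O(nb)$ BFS overhead hidden inside $Q(n,h)$ is indeed correctly accounted for as the $n^2 b$ summand after accumulating over the $n$ source queries issued inside Lemma~\ref{l:collection-thres}.
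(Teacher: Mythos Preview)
Your proposal is correct and mirrors the paper's own derivation almost exactly: the paper states Theorem~\ref{t:collection} immediately after writing down $U(n,h)$ and $Q(n,h)$ for $\dstree$, and justifies it in one line by ``plugging these bounds in Lemma~\ref{l:collection-thres} and noting that $n\cdot Q(n,h)=\Omega(n^2h)$.'' Your additional care about the $\Ot(h^2 n^{2+\mu})$ byproduct being absorbed when $b\tau\le n$ is a detail the paper leaves implicit.
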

In our concrete applications, we will set $b$ to be slightly sublinear and both $h$ and $\tau$
to be small-polynomial so that $n-o(n)$ vertices of $G$ remain after removing the set $C$,
and a small-polynomial $\Delta$ adversarial vertex deletions
can destroy at most $O(\Delta h\tau n)=o(n^2)$ paths in $\Pi$.

\section{Fully dynamic single-source shortest paths}\label{s:sssp}
In this section we describe our deterministic fully dynamic SSSP data structure for unweighted graphs
with truly subquadratic update and query time.

\paragraph{Phases and global data structures.} Let $h,\tau,\Delta,b\in [1,n]$ and $\mu,\nu\in [0,1]$ be parameters to be set later.
The data structure operates in phases of $\Delta$ \emph{single-edge} updates. It also maintains
two \emph{global} dynamic data structures shared between the phases:
\begin{itemize}
\item A data structure $\dsy$ of \Cref{t:h-bounded-submatrix-batch} with $\mu:=\nu$ always storing the
  current graph $G$. Every edge update issued to $G$ is immediately passed to $\dsy$. 
  The dynamic set $Y\subseteq V\times V$ of vertex pairs of interest in $\dsy$, to be defined later, is initialized
    at the beginning of
    a phase. Moreover, the set $Y$ grows by at most $O(\tau hn)$ pairs per update in the worst-case.
    However, we will guarantee that the size of $Y$ is bounded by $O(n^2/\tau+\Delta\tau hn)$ at all times.
  By \Cref{t:h-bounded-submatrix-batch}, the worst-case update time of $\dsy$ is:
  \begin{equation*}
    \Ot(h^2\cdot (n^{\omega(1,\nu,1)-\nu}+n^{1+\nu}+|Y|)+h^2 \cdot \tau hn^{1+\nu})=\Ot\left(h^2n^{\omega(1,\nu,1)-\nu}+\frac{h^2n^2}{\tau}+h^3\tau n\cdot (\Delta+n^\nu)\right).
  \end{equation*}
\item A data structure $\dstree$, as required by Theorem~\ref{t:collection} and described in Section~\ref{s:collection},
  with parameters $h,b,\mu$.
  Every edge update issued to $G$ is immediately passed to $\dstree$. 
  Recall that the cost of updating $\dstree$ incurred by a single update to $G$ is:
  \begin{equation*}
  U(n,h)=O\left(\frac{h^2n^{1+\omega(1,\mu,1)-\mu}}{b}+\frac{h^2n^{2+\mu}}{b}\right).
  \end{equation*}
    While the phase proceeds, the set $W$ in the data structure $\dstree$ (describing the induced subgraph of interest)
    equals $V$
    and it is only manipulated when a phase initializes.
  Similarly, queries to $\dstree$ will only be issued at the beginning of a phase.
\end{itemize}

\paragraph{Initializing a phase.}
For a fixed current phase, let $G_0$ denote the graph $G$ at the beginning of that phase.
We always use $G$ to refer to the \emph{current graph}, i.e.,
with all updates issued so far applied.
Moreover, when a phase proceeds, let $D\subseteq V$ be the set of endpoints of the edges
affected in the current phase, that is, the edges that have been inserted or deleted during the current phase.
Note that by the definition of $D$, we have $|D|\leq \Delta$ and $G_0-D=G-D\subseteq G$ at all times.

At the beginning of a phase, we apply Theorem~\ref{t:collection} to obtain a pair $(C,\Pi)$,
where $C\subseteq V$ is a ``congested subset'' of size $O(n/\tau)$, and $\Pi$ is a collection of paths in $G_0$.
By Theorem~\ref{t:collection}, the amortized cost of such a preprocessing step, happening once per
$\Delta$ updates, is:
\begin{equation}\label{eq:collection-amortized}
  \Ot\left(\frac{n^2h}{\Delta}+\frac{h^2n^{2+\omega(1,\mu,1)-\mu}}{b\tau \Delta}+\frac{h^2n^{3+\mu}}{b\tau \Delta}+\frac{n^2b}{\Delta}\right).
\end{equation}

We next deterministically compute an $\Ot(n/h)$-sized hitting set $H$ of 
those paths $\pi_{s,t}\in \Pi$ \emph{that have at least $\lfloor h/2\rfloor$ edges}.
For this, we use the following folklore result (see, e.g.,~\cite{Zwick02}).
\begin{lemma}\label{l:hitting}
  Let $X$ be a ground set of size $n$ and let $\mathcal{Z}$ be a family of subsets of $X$, each with at least $k$ elements.
  Then, in $O\left(\sum_{Z\in\mathcal{Z}}|Z|\right)$ time one can deterministically compute a \emph{hitting set} $R\subseteq X$
  of size $O((n/k) \cdot\log{n})$ such that $R\cap Z\neq\emptyset$ for all $Z\in \mathcal{Z}$.
\end{lemma}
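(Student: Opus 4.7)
The plan is to use the classical greedy hitting-set algorithm: repeatedly pick an element $x^\ast\in X$ lying in the maximum number of currently uncovered sets, add it to $R$, mark all such sets as covered, and iterate until no uncovered set remains.

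For the size bound, I would use a standard averaging/pigeonhole argument. If $m$ sets are still uncovered, then since each has at least $k$ elements, the total number of incidences in the uncovered subfamily is at least $km$, so some element lies in at least $km/n$ uncovered sets. Hence each greedy step shrinks the number of uncovered sets by a factor of at most $1-k/n\le e^{-k/n}$, and after $O((n/k)\log|\mathcal{Z}|)$ iterations all sets are covered. Since the stated running-time budget $O(\sum_{Z}|Z|)$ forces $|\mathcal{Z}|\le \poly(n)$ in any meaningful application (and in the paper's concrete usage $|\mathcal{Z}|=|\Pi|=O(n^2)$), we have $\log|\mathcal{Z}|=O(\log n)$, which yields $|R|=O((n/k)\log n)$.

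The main technical point is implementing greedy in time $O(\sum_{Z}|Z|)$. I would build, in linear time, a bipartite incidence data structure with cross-pointers (for each set $Z$, a doubly-linked list of its elements; for each element $x$, a doubly-linked list of the sets containing it), together with a bucket array $B_0,B_1,\ldots$ where $B_i$ holds the elements of current degree exactly $i$, and a single pointer $p$ tracking the current maximum non-empty bucket. Each greedy step pops $x^\ast$ from $B_p$ and, for every still-alive set $Z$ containing $x^\ast$, marks $Z$ dead and walks its element list, decrementing each neighbour's degree by one and moving it one bucket down; the pointer $p$ is then slid downward until it reaches a non-empty bucket. The amortized analysis is: every incidence pair $(x,Z)$ with $x\in Z$ triggers $O(1)$ work exactly once -- at the moment $Z$ is killed and its elements' degrees are updated -- so the total degree-maintenance cost is $O(\sum_{Z}|Z|)$; and since degrees only decrease, $p$ is monotone non-increasing, so its total downward travel is bounded by the initial maximum degree $\le |\mathcal{Z}|\le \sum_Z|Z|$, absorbed into the same budget.

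The only subtle point I anticipate is ensuring bucket insertions and removals are genuinely $O(1)$, which is handled by storing, with each element, a back-pointer into its position in its bucket's linked list, and by lazily initializing bucket cells as they are first touched so as not to pay $\Theta(n)$ up front. Beyond this bookkeeping, there is no deeper obstacle: the argument is entirely in the spirit of the standard greedy set-cover analysis (cf.~\cite{Zwick02}) and does not require any new idea.
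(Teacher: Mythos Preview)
The paper does not prove this lemma; it simply cites it as folklore (referencing~\cite{Zwick02}) and moves on. Your greedy argument with the bucket-queue implementation is exactly the standard proof, and both the size and running-time analyses are correct. The one point you correctly flag---that greedy actually yields $|R|=O((n/k)\log|\mathcal{Z}|)$ rather than $O((n/k)\log n)$, so the lemma as stated implicitly needs $|\mathcal{Z}|=\poly(n)$---is a real caveat (indeed, for $k=n/2$ and $\mathcal{Z}$ equal to all $(n/2)$-subsets, any hitting set has size $>n/2$), but since the paper only ever applies the lemma to families of size $O(n^2)$ this is harmless.
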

The total size of $\Pi$ is $O(n^2h)$ and each path in $\Pi$ is a simple path; therefore, a desired set $H$
can be computed in $O(n^2h)$ time.
Again, this incurs an amortized cost of $O(n^2h/\Delta)$ per update.
which is asymptotically the same as the first term in~\eqref{eq:collection-amortized}.

With the hitting set $H$ in hand, for each $\pi_{s,t}$ with $|\pi_{s,t}|\geq \lfloor h/2\rfloor$, we arbitrarily pick a
single representative vertex $\beta(\pi_{s,t}) \in H\cap V(\pi_{s,t})$.
If $|\pi_{s,t}|<\lfloor h/2 \rfloor$, we put $\beta(\pi_{s,t}):=\perp$.

\newcommand{\fromh}[1]{{\ensuremath{\Psi^{\mathrm{from}}_{#1}}}}
\newcommand{\toh}[1]{{\ensuremath{\Psi^{\mathrm{to}}_{#1}}}}

Let $\Pi_D$ denote the subset of paths $\pi_{s,t}\in \Pi$ such that $V(\pi_{s,t})\cap D\neq\emptyset$.
Recall that at the beginning of the phase $D=\emptyset$, so $\Pi_D$ equals $\emptyset$ initially.
The set $\Pi_D$ (seen as pointers to the paths in $\Pi$) can be easily extended upon
insertions to $D$ (caused by updates to $G$) in optimal time if additional
$O(n^2h)$ vertex-path pointers are stored for the preprocessed collection $\Pi$.

For a simple path $p$ and $a,b\in V(p)$ let $p[a,b]$ denote the $a\to b$ subpath of $p$.
Throughout a phase, for each $z\in H$ and $v\in V$, we maintain two multisets:
\begin{align*}
  \toh{z}(v)&=\{|\pi_{s,v}[z,v]|:s\in V\text{ such that }\pi_{s,v}\in \Pi\setminus \Pi_D \text{ and }\beta(\pi_{s,v})=z\},\\
  \fromh{z}(v)&=\{|\pi_{v,t}[v,z]|:t\in V\text{ such that }\pi_{v,t}\in \Pi\setminus \Pi_D \text{ and }\beta(\pi_{v,t})=z\}.
\end{align*}
There are $O(|H|n)=\Ot(n^2/h)$ multisets $\toh{z}(v),\fromh{z}(v)$ and their total
size is bounded by $2|\Pi|$, i.e., $O(n^2)$.
At the beginning of the phase, these multisets can be initialized in $O(n^2)$ time,
which, again, is negligible (amortized) to the cost~\eqref{eq:collection-amortized}.
We store them in data structures allowing polylogarithmic deletions
and minimum maintenance, for example balanced binary search trees.

Finally, we are ready to define the set $Y$ of pair $(s,t)$ for which we maintain
values $\dist^h_G(s,t)$ in the data structure $\dsy$.
After the initialization step of a phase completes, at all times $Y$ contains:
\begin{itemize}
  \item all pairs $(u,v)\in (D\times V)\cup (V\times D)$,
  \item all pairs $(u,v)\in (C\times V)\cup (V\times C)$,
  \item all pairs $(s,t)\in V\times V$ such that $\pi_{s,t}\in \Pi_D$.
\end{itemize}
Since $D=\emptyset$ and $\Pi_D=\emptyset$ at the beginning of the phase,
a phase's initialization only potentially empties $Y$ in $\dsy$ and
subsequently inserts all $O(n^2/\tau)$ pairs $(C\times V)\cup (V\times C)$ into $Y$.
By \Cref{t:h-bounded-submatrix-batch}, this incurs an additional amortized update cost of
\begin{equation*}
  \Ot\left(\frac{h^2n^{2+\nu}}{\tau\Delta}\right).
\end{equation*}
\paragraph{Performing updates.}
While the phase proceeds, the sets $D$ and $\Pi_D$, and thus also $Y$, grow.
However, recall from Theorem~\ref{t:collection} that $|D|\leq \Delta$ implies that at all times we have:
\begin{equation*}
  |\Pi_D|=O(\Delta \cdot h\tau n),
\end{equation*}
which also yields the promised $O(n\Delta+\Delta h\tau n+n^2/\tau)=O(n^2/\tau+\Delta h\tau n)$ bound on the size
of the set $Y$. Also, more specifically, at most $O(h\tau n)$ new paths may be inserted
to $\Pi_D$ per update, which is also why $Y$ may grow by at most $O(h\tau n)$ pairs
per update in the worst-case.
The changes to the set $Y$ are immediately passed to the data structure~$\dsy$.

When a new path $\pi_{s,t}$ is added
to $\Pi_D$, and $\beta(\pi_{s,t})=z$, the two multisets
$\toh{z}(t)$ and $\fromh{z}(s)$ have to be updated by issuing a single-element deletion.
Therefore, maintaining the multisets costs $\Ot(hn\tau)$ worst-case time per update,
which is dominated by the update cost of the data structure~$\dsy$.

We are now ready to state the summarized amortized update cost of the data structure:
\begin{equation*}
  \begin{split}
    \Ot\left(\frac{n^2h}{\Delta}+\frac{h^2n(n^{\omega(1,\mu,1)-\mu}+n^{1+\mu})}{b}\cdot\left(1+\frac{n}{\tau\Delta}\right)+\frac{n^2b}{\Delta}+
  h^2n^{\omega(1,\nu,1)-\nu}+\frac{h^2n^2}{\tau}\right.\\\left.+h^3\tau n\cdot (\Delta+n^\nu)+\frac{h^2n^{2+\nu}}{\tau\Delta}\right).
  \end{split}
  \end{equation*}

\newcommand{\xdet}{\ensuremath{X}}
\paragraph{Answering queries.}
Suppose a query for distances from a source vertex $s\in V$ is issued.
We first temporarily add all pairs $(s,t)$, where $t\in V$, to the set $Y$ so that this
change is reflected in the data structure~$\mathcal{A}$.
Let $Y_s$ be the set $Y$ augmented this way.
Recall that the augmentation of $Y$ takes $\Ot(h^2n^{1+\nu})$ time, and the change
can be reverted within the same bound.

Next, we build an auxiliary graph $\xdet=(V,F)$ with the following edges:
\begin{itemize}
  \item an edge $uv$ of weight $\dist_G^h(u,v)$ for all $(u,v)\in Y_s$,
  \item an edge $zt$ of weight $\min(\toh{z}(t))$ for all $(z,t)\in H\times V$,
  \item an edge $sz$ of weight $\min(\fromh{z}(s))$ for all $(s,z)\in V\times H$.
\end{itemize}

Recall that the weights of edges of the former type are maintained in the data structure~$\mathcal{A}$.
The graph~$\xdet$ has $O(|H|n+|Y|)=\Ot(n^2/h+\Delta h\tau n+n^2/\tau)$ edges. The query procedure is to simply
run near-linear time Dijkstra's algorithm on the graph $\xdet$. The following lemma establishes correctness.

\begin{lemma}\label{l:correct}
  For all $t\in V$, $\dist_G(s,t)=\dist_{\xdet}(s,t)$.
\end{lemma}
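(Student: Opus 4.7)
The plan is to prove both inequalities $\dist_G(s,t) \leq \dist_{\xdet}(s,t)$ and $\dist_{\xdet}(s,t) \leq \dist_G(s,t)$. The first is a straightforward unfolding: by construction each edge of $\xdet$ equals the length of a concrete walk in $G$ — a shortest $h$-bounded $u \to v$ walk for kind (i) edges $(u,v) \in Y_s$; a $z \to v$ suffix of some path $\pi_{s',v} \in \Pi \setminus \Pi_D$ realizing $\min(\toh{z}(v))$ for kind (ii) edges $(z,v)$; and analogously a $u \to z$ prefix realizing $\min(\fromh{z}(u))$ for edges $(u,z)$. Concatenating these walks turns any $s \to t$ walk in $\xdet$ of total weight $W$ into an $s \to t$ walk in $G$ of length $W$, so $\dist_G(s,t) \leq W$ and hence $\dist_G(s,t) \leq \dist_{\xdet}(s,t)$.

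For the converse I fix an arbitrary shortest $s \to t$ path $P^*$ in $G$ and decompose it at the break points $B := \{s,t\} \cup (V(P^*) \cap (C \cup D))$ into consecutive segments $Q_1, \ldots, Q_k$ whose interiors avoid $C \cup D$. The definition of $Y_s$ — which contains the augmented pairs $\{s\} \times V$, all pairs in $(C \cup D) \times V$ and $V \times (C \cup D)$, and all pairs whose $\pi_{\cdot,\cdot}$ lies in $\Pi_D$ — makes it immediate that every segment-endpoint pair $(x_{j-1}, x_j)$ is in $Y_s$. For a short segment with $|Q_j| \leq h$, the subpath $Q_j$ of the shortest $P^*$ is itself a shortest $h$-bounded $x_{j-1} \to x_j$ path in $G$, so the kind (i) edge $(x_{j-1}, x_j)$ carries weight $\dist_G^h(x_{j-1}, x_j) = |Q_j|$ in $\xdet$ and can be used to replace $Q_j$ exactly.

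A long segment with $|Q_j| > h$ requires further decomposition. Peeling off the (at most two) endpoints that sit in $C \cup D$ leaves an inner sub-path $Q_j^\circ = P^*[u,v] \subseteq G - (C \cup D)$ of length at least $|Q_j| - 2 \geq h/2$ (assuming $h \geq 2$); the stripped boundary edges, if any, are single hops realized in $\xdet$ by kind (i) edges of weight $1$ (their pairs lie in $Y_s$ as one endpoint is in $C \cup D$). I split $Q_j^\circ$ into consecutive chunks whose lengths all lie in $[\lfloor h/2 \rfloor, h]$, which is feasible whenever the total length is at least $h/2$. For a chunk with endpoints $a,b \in V \setminus (C \cup D)$, since $P^*[a,b] \subseteq G - (C \cup D)$ has at most $h$ edges, the shortest-path property of $P^*$ gives $\dist_{G - (C \cup D)}^h(a,b) = |P^*[a,b]|$; then Lemma~\ref{l:collection-thres}(a) yields $\pi_{a,b} \in \Pi$ with $|\pi_{a,b}| \leq \dist_{G-C}^h(a,b) \leq |P^*[a,b]|$, while the reverse inequality $|\pi_{a,b}| \geq \dist_G(a,b) = |P^*[a,b]|$ is immediate, so $|\pi_{a,b}| = |P^*[a,b]|$.

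A two-case dichotomy now closes each chunk. If $\pi_{a,b} \in \Pi_D$, then $(a,b) \in Y \subseteq Y_s$ by the very definition of $Y$, so the kind (i) edge $(a,b)$ lies in $\xdet$ with weight $\dist_G^h(a,b) = |P^*[a,b]|$. Otherwise $\pi_{a,b} \in \Pi \setminus \Pi_D$ and $|\pi_{a,b}| \geq h/2$, so the hitting-set representative $z := \beta(\pi_{a,b}) \in H$ is defined; since $|\pi_{a,b}[a,z]| \in \fromh{z}(a)$ and $|\pi_{a,b}[z,b]| \in \toh{z}(b)$ by their very definitions, the two-hop walk $a \to z \to b$ in $\xdet$ has weight at most $|\pi_{a,b}[a,z]| + |\pi_{a,b}[z,b]| = |\pi_{a,b}| = |P^*[a,b]|$. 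Concatenating all the resulting $\xdet$-walks across chunks and segments produces an $s \to t$ walk in $\xdet$ of total weight $|P^*| = \dist_G(s,t)$, proving $\dist_{\xdet}(s,t) \leq \dist_G(s,t)$. The delicate point is precisely this chunk-level dichotomy: forcing each chunk into the length range $[\lfloor h/2 \rfloor, h]$ is exactly what guarantees that in the $\Pi \setminus \Pi_D$ branch the representative $\beta(\pi_{a,b})$ exists, while the explicit $\Pi_D$-part of $Y$ is exactly what backs up the complementary branch.
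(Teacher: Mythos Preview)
Your proof follows essentially the same route as the paper's: prove the easy direction, decompose a shortest $s\to t$ path at the vertices of $C\cup D$, further split long segments into chunks of length in $[\lfloor h/2\rfloor,h]$, and for each chunk use either a $Y_s$-edge or a two-hop walk through $H$ via $\beta(\pi_{a,b})$. Your extra step of stripping off the one or two segment endpoints lying in $C\cup D$ before chunking is a harmless variation; the paper instead absorbs those boundary cases into the dichotomy ``$(x_j,y_j)\in Y_s$ or not''.

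There are, however, two technical slips worth fixing. First, Lemma~\ref{l:collection-thres} was applied to $G_0$ (the graph at the start of the phase), so it gives $|\pi_{a,b}|\le\dist_{G_0-C}^h(a,b)$, not $|\pi_{a,b}|\le\dist_{G-C}^h(a,b)$ as you wrote; since $G-C$ and $G_0-C$ differ by the current phase's edge updates, your intermediate inequality is unjustified. The correct chain is $P^*[a,b]\subseteq G-(C\cup D)=G_0-(C\cup D)\subseteq G_0-C$, hence $\dist_{G_0-C}^h(a,b)\le|P^*[a,b]|$ and so $|\pi_{a,b}|\le|P^*[a,b]|$. Second, your claim that ``$|\pi_{a,b}|\ge\dist_G(a,b)$ is immediate'' is premature: $\pi_{a,b}$ is a path in $G_0$ and may use an edge deleted in the current phase, so it need not lie in $G$. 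This inequality only holds once you are in the $\Pi\setminus\Pi_D$ branch, where $\pi_{a,b}\subseteq G_0-D=G-D\subseteq G$; fortunately that is also the only branch where you use the equality $|\pi_{a,b}|=|P^*[a,b]|$ (in the $\Pi_D$ branch you correctly rely on $\dist_G^h(a,b)=|P^*[a,b]|$ instead). Moving the equality inside the $\Pi\setminus\Pi_D$ case makes the argument sound.
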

\begin{proof}
  First of all, note that for
  all $uv\in E(\xdet)$, the weight of an edge $uv$ in $\xdet$ equals the weight
  of some $u\to v$ path in $G$. As a result, $\dist_{G}(u,v)\leq \wei_X(uv)$, which establishes $\dist_G(s,t)\leq \dist_{\xdet}(s,t)$ easily.

  Let us now prove that $\dist_{\xdet}(s,t)\leq \dist_G(s,t)$.
  Consider any shortest $s\to t$ path $P$ in $G$.
  First, split~$P$ into maximal segments $P_1\cdot \ldots\cdot P_k$ not passing through any
  of the vertices $C\cup D$ as intermediate vertices. In other words, for each $P_i=u_i\to v_i$, we have $V(P_i)\cap (C\cup D)\subseteq \{u_i,v_i\}$
  and $u_i\in C\cup D\cup \{s\}$.
  We will prove that $\dist_{\xdet}(u_i,v_i)\leq \dist_G(u_i,v_i)$. By summing through $i$, $\dist_{\xdet}(s,t)\leq \dist_G(s,t)$ will follow easily.

  Consider some $P_i$. We have $|P_i|=\dist_{G}(u_i,v_i)$.
  If $|P_i|\leq h$, then $\dist_G(u_i,v_i)=\dist_G^h(u_i,v_i)$
  and by $u_i\in C\cup D\cup \{s\}$, $(u_i,v_i)\in Y_s$. As a result,
  there is an edge $u_iv_i$ of weight $\dist_G(u_i,v_i)$ in $\xdet$.

  Suppose $|P_i|>h$. Split $P_i$ arbitrarily into subpaths $Q_1,\ldots,Q_l$
  such that $\lfloor h/2\rfloor \leq |Q_j|\leq h$ for all $j=1,\ldots,l$.
  Consider some $j$ and
  let $Q_j$ be an $x_j\to y_j$ path.
  We will prove that $\dist_{\xdet}(x_j,y_j)\leq |Q_j|=\dist_G^h(x_j,y_j)$.
  It will follow that
  \begin{equation*}
    \dist_{\xdet}(u_i,v_i)\leq \sum_j\dist_{\xdet}(x_i,y_j)\leq \sum_j\dist_G^h(x_j,y_j)=\sum_j\dist_G(x_j,y_j)=\dist_G(u_i,v_i),
  \end{equation*}
  as desired.

  If $(x_j,y_j)\in Y_s$, then there is an edge $x_jy_j$ of weight $\dist_G^h(x_j,y_j)=\dist_G(x_j,y_j)$
  in $\xdet$ which immediately implies $\dist_{\xdet}(x_j,y_j)\leq \dist_G^h(x_j,y_j)$.
  Otherwise, we have $(x_j,y_j)\notin Y_s$.
  Then by the definition of $P_i$ and $Y_s$, we obtain that $V(Q_j)\cap (C\cup D)=\emptyset$.
  This means that $Q_j\subseteq G-(C\cup D)\subseteq G_0-C$ and thus $\dist_{G_0-C}^h(x_j,y_j)\leq \dist_G(x_j,y_j)$
  as $Q_j$ is a shortest path in $G$.
  Moreover, since $(x_j,y_j)\notin Y_s$ and $\dist_{G_0-C}^h(x_j,y_j)<\infty$, $\pi_{x_j,y_j}\in \Pi\setminus \Pi_D$, i.e.,
  $\pi_{x_j,y_j}$ exists and does not contain a vertex from $D$.
  As a result, $\pi_{x_j,y_j}\subseteq G$, i.e., $\dist_{G}(x_j,y_j)\leq |\pi_{x_j,y_j}|$.
  But we also have $|\pi_{x_j,y_j}|\leq \dist_{G_0-C}^h(x_j,y_j)$ by Theorem~\ref{l:collection-thres}, so we obtain:
  \begin{equation*}
    \dist_G(x_j,y_j)\leq |\pi_{x_j,y_j}|\leq \dist_{G_0-C}^h(x_j,y_j)\leq \dist_G(x_j,y_j).
  \end{equation*}
  All the inequalities above have to be in fact equalities. This
  proves that $\pi_{x_j,y_j}$ is a shortest $x_j\to y_j$ path in $G$.
  Finally, recall that $|\pi_{x_j,y_j}|=\dist_G(x_j,y_j)\geq \lfloor h/2\rfloor$. Consequently, there exists a vertex ${z=\beta(\pi_{x_j,y_j})\in H}$
  on the path $\pi_{x_j,y_j}\in \Pi\setminus \Pi_D$.
  We conclude $\dist_G(x_j,y_j)=|\pi_{x_j,y_j}[x_j,z]|+|\pi_{x_j,y_j}[z,y_j]|$.
  By the definition of the multisets $\fromh{z}(x_j)$ and $\toh{z}(y_j)$,
  $|\pi_{x_j,y_j}[x_j,z]|\in \fromh{z}(x_j)$ and $|\pi_{x_j,y_j}[z,y_j]|\in \toh{z}(y_j)$.
  Consequently, $\wei_{\xdet}(x_j,z)\leq |\pi_{x_j,y_j}[x_j,z]|$
  and $\wei_{\xdet}(z,y_j)\leq |\pi_{x_j,y_j}[z,y_j]|$, and thus
  there is a path $x_j\to y_j$
  of weight $\wei_{\xdet}(x_j,z)+\wei_{\xdet}(z,y_j)\leq \dist_G(x_j,y_j)$ in $\xdet$.
\end{proof}

\paragraph{Setting the parameters.} Let us now try to set the parameters so that the maximum of the update and query time is minimized. 
The only term incurred by the query procedure that is not dominated by the terms of the
previously obtained update bound is $\Ot(n^2/h)$.
Moreover, it can be easily verified that it is best to set $\mu:=\rho$, where $\rho\approx 0.529$ is the unique value
such that $\omega(1,\rho,1)=1+\rho$.

This way, we obtain the following combined updated/query bound:
\begin{equation*}
  O\left(\frac{n^2h}{\Delta}+\frac{h^2n^{2+\rho}}{b}\cdot \left(1+\frac{n}{\tau \Delta}\right)+\frac{n^2b}{\Delta}+
  h^2n^{\omega(1,\nu,1)-\nu}+\frac{h^2n^2}{\tau}+h^3\tau n\cdot (\Delta+n^\nu)+\frac{h^2n^{2+\nu}}{\tau\Delta}+\frac{n^2}{h}\right).
\end{equation*}
By setting the parameters $h=n^{0.031}$, $\Delta=n^{0.78}$, $t=n^{0.093}$, $b=n^{0.749}$, $\nu=\rho$
obtained using the online balancing tool~\cite{Complexity},
we get an $O(n^{1.969})$ worst-case update and query bounds.

It is also interesting to see what is the exact dependence on $\omega$ we can get.
One can prove that for large enough $\Delta$ it is optimal to
set $\nu:=\rho$ so that $n^\nu<\Delta$, and also $b:=\Delta/h$, and $\tau:=h^3$.
Then, we get the bound
\begin{equation*}
  \Ot\left(\frac{hn^{2+\rho}}{\Delta}+\frac{n^{3+\rho}}{\Delta^2}+\Delta h^6n+\frac{n^2}{h}\right)
\end{equation*}
By balancing the two last terms, we get $\Delta=n/h^7$.
Then the first term ($h^8n^{1+\rho}$) becomes dominated by the second ($h^{14}n^{1+\rho}$).
By further balancing the latter with $n^2/h$ we get $h=n^{(1-\rho)/15}$.
The final bound becomes (by $\rho\leq \frac{1}{4-\omega}$):
\begin{equation*}
  \Ot\left(n^{2-\frac{1-\rho}{15}}\right)\subseteq \Ot\left(n^{2-\frac{3-\omega}{60-15\omega}}\right).
\end{equation*}
This bound is subquadratic assuming any truly subcubic
matrix multiplication algorithm is used and equals $\Ot(n^{2-1/30})$ if $\omega=2$ (then, $\rho=0.5$).

\paragraph{Deamortization.}
The heavy super-quadratic preprocessing invoking Theorem~\ref{t:collection}, initializing the set $Y$, and constructing the hitting set (which are the only source of amortization here) happens in a coordinated
way, once every $\Delta$ updates.
Consequently, the data structure can be deamortized using a standard technique
of running two copies of the data structure in parallel and spreading out the reset computation over several updates. While one copy of the algorithm is performing the reset, the other copy is used to answer the queries.
For details, we refer to e.g.,~\cite{AbrahamCK17, BrandNS19, Thorup05}.

We have thus proved:

\begin{theorem}\label{t:ss-distances}
  Let $G=(V,E)$ be unweighted digraph. There exists a deterministic data structure supporting
  fully dynamic single-edge updates to $G$ and computing exact distances from an arbitrary query source $s\in V$
  to all other vertices in $G$
  in $O(n^{1.969})$ worst-case time. 

  If $\omega=2$, the bound becomes $\Ot(n^{2-1/30})$.
\end{theorem}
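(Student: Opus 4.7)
The plan is to run the data structure in phases of $\Delta$ single-edge updates and maintain, across phases, two global algebraic data structures on the current graph $G$: the inverse-submatrix data structure $\dsy$ of \Cref{t:h-bounded-submatrix-batch} (with dynamic set of interest $Y \subseteq V \times V$), and the path-reporting induced-subgraph data structure $\dstree$ obtained by combining \Cref{t:h-bounded-submatrix-batch} with Lemmas~\ref{l:blocks2}~and~\ref{l:induced-subgraph}. Every single-edge update to $G$ is forwarded to both of these as it arrives. All heavy preprocessing is scheduled at the start of each phase, and deamortization is done using the standard trick of running two synchronized copies of the data structure in parallel so that the super-$n^2$ preprocessing cost is spread over $\Delta$ updates.

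At the beginning of each phase I would invoke \Cref{t:collection} on the current graph $G_0$ (with parameters $h,\tau,b$) through $\dstree$ to obtain a congested subset $C$ of size $O(n/\tau)$ and a collection $\Pi$ of $h$-bounded shortest paths in $G_0-C$, one per source/target pair whenever connectivity holds. I would then run the folklore greedy algorithm of \Cref{l:hitting} on the subfamily of paths in $\Pi$ of length $\geq \lfloor h/2 \rfloor$ to deterministically compute a hitting set $H$ of size $\Ot(n/h)$, pick one representative $\beta(\pi_{s,t}) \in H \cap V(\pi_{s,t})$ per long path, and initialize the balanced-BST-indexed multisets $\toh{z}(v)$ and $\fromh{z}(v)$ tracking prefix/suffix lengths of long paths in $\Pi \setminus \Pi_D$ routed through $z \in H$. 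Finally I would reset the set of interest $Y$ in $\dsy$ to contain all pairs in $(C \times V) \cup (V \times C)$. While the phase runs, each single-edge update grows $D$ (and hence $\Pi_D$) by at most $O(h\tau n)$ paths in the worst case, and the corresponding pair insertions into $Y$ as well as the multiset deletions are pushed through $\dsy$ and the BSTs, relying on the small per-update growth of $Y$ guaranteed by \Cref{t:h-bounded-submatrix-batch}.

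For a query from a source $s$, I would temporarily add the $n$ pairs $\{s\} \times V$ to $Y$ in $\dsy$ (at cost $\Ot(h^2 n^{1+\nu})$), build the auxiliary weighted digraph $\xdet$ on $V$ consisting of: edges $uv$ weighted by $\dist^h_G(u,v)$ for $(u,v)$ in the augmented $Y$; edges $zt$ weighted by $\min \toh{z}(t)$ for $z \in H$; and edges $sz$ weighted by $\min \fromh{z}(s)$. Correctness then follows from \Cref{l:correct}: any shortest $s \to t$ path in $G$ decomposes at affected/congested vertices into short unaffected pieces and long unaffected pieces; the former are handled by $Y$-edges and the latter by pieces of surviving paths of $\Pi$, which by the choice of $H$ must contain a representative and therefore be representable as a two-hop path $s \to z \to t$ in $\xdet$. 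A single near-linear-time Dijkstra on $\xdet$ then recovers exact single-source distances. Afterwards the temporary additions to $Y$ are reverted.

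The main obstacle, and the only genuine balancing step, is parameter selection. Amortized across a phase, the cost is a sum of terms coming from (i) the \Cref{t:collection} preprocessing and hitting-set construction ($\Ot(n^2 h/\Delta + n^2 b/\Delta)$ plus the rectangular-MM terms), (ii) the reset of $Y$ ($\Ot(h^2 n^{2+\nu}/(\tau \Delta))$), (iii) the per-update cost of $\dsy$, which depends linearly on the current $|Y| = O(n^2/\tau + \Delta h \tau n)$, and (iv) the query-time term $\Ot(n^2/h)$ from $|H| \cdot n$ auxiliary edges. Plugging in $\mu := \rho$ with $\omega(1,\rho,1) = 1+\rho$ (so the rectangular-MM contribution of $\dstree$ simplifies), $\nu := \rho$, and then optimizing the remaining free parameters via the online balancer gives the claimed $O(n^{1.969})$ bound. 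To obtain the clean dependence on $\omega$, I would set $b = \Delta/h$ and $\tau = h^3$, balance $\Delta$ against the $|Y|$-driven cost to obtain $\Delta = n/h^7$, and finally balance the resulting dominant term $h^{14} n^{1+\rho}$ against $n^2/h$ to yield $h = n^{(1-\rho)/15}$, giving the $\Ot(n^{2 - (3-\omega)/(60-15\omega)})$ bound and $\Ot(n^{2-1/30})$ when $\omega = 2$. Deamortization by parallel copies completes the theorem.
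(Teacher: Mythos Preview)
Your proposal is correct and follows essentially the same approach as the paper's proof: phases of length $\Delta$, the two global data structures $\dsy$ and $\dstree$, the $(C,\Pi)$ preprocessing via \Cref{t:collection}, the deterministic hitting set $H$ on long paths, the multisets $\toh{z}$/$\fromh{z}$, the auxiliary graph with Dijkstra for queries (correct by \Cref{l:correct}), and the identical parameter balancing (including the $b=\Delta/h$, $\tau=h^3$, $\Delta=n/h^7$, $h=n^{(1-\rho)/15}$ choices for the $\omega$-dependent bound). One small notational slip: in your description of $\xdet$ the ``$sz$'' edges must range over all $s\in V$ (not just the query source), since the decomposition in \Cref{l:correct} needs $x_j\to z$ edges for arbitrary intermediate $x_j$; the paper writes this as ``for all $(s,z)\in V\times H$''.
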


\paragraph{SSSP tree reporting.} We now explain how the data structure of Theorem~\ref{t:ss-distances} can be extended to also
output a single-source shortest paths tree in subquadratic time, albeit at the cost of an additional polynomial overhead.

We apply Lemma~\ref{l:blocks} with $h:=n$ on top of the obtained data structure maintaining
single-source distances. In terms of Lemma~\ref{l:blocks}, we have $\qsource(n,n)=O(n^{1.969})$ by Theorem~\ref{t:ss-distances},
and we want to support queries of the form $S=\{s\}$, $Y=\{s\}\times V$,
for a query source $s\in V$.
As a result, for any $b\in [1,n]$, there exists a data structure supporting single-edge updates
to $G$ in $O(n^{2.969}/b)$ time and reporting predecessors of shortest $n$-bounded paths from any
source vertex $s$ in $O(n^{2.969}/b+nb)$ time.
Clearly, the predecessors yield an SSSP tree from $s$.
By setting $b=0.985$, we obtain a deterministic tree-reporting SSSP data structure
with $O(n^{1.985})$ worst-case update and query time.

\section{Fully dynamic path-reporting transitive closure and approximate APSP}\label{s:tc-reporting}
In this section, we consider the fully dynamic transitive closure and $(1+\eps)$-approximate
APSP problems (in unweighted digraphs)
under \emph{vertex updates} to~$G$.
Under the assumption that $\omega\geq 2.12$,
we describe a deterministic data structure with $O(n^{\omega-\gamma})$ worst-case update
time, for some $\gamma>0$ dependent on $\omega$, that explicitly maintains 
the transitive closure of a matrix of distance estimates.
In the case of approximate distances, the data structure also supports
reporting any $s\to t$ path of length at most $(1+\eps)\cdot \dist_G(s,t)$
in optimal $O(\dist_G(s,t))$ time.
For reachability, an even stronger reporting interface is achieved:
reachability trees in~$G$ from all possible sources $s\in V$
are explicitly stored after each update.
Clearly, this also allows for reporting paths in~$G$ between arbitrary pairs of vertices
in optimal time.

The overall approach is similar
as in the fully dynamic SSSP data structure from Section~\ref{s:sssp}.
Namely, we will operate in phases of length $\Delta$ and
use analogous global data structures shared among the phases.
Every phase starts by computing a pair $(C,\Pi)$ as given by Theorem~\ref{t:collection}.

\paragraph{The data structure~$\dstree$.}
Recall that the global data structures $\dsy$ and $\dstree$ in Section~\ref{s:sssp}, both implemented
as data structures of \Cref{t:h-bounded-submatrix-batch} (with possibly some extensions of Section~\ref{s:blocks}),
rely only on \emph{single-edge} updates (so that they reflect $G$ at all times) and vertex switch-offs.
Here, we will use the data structure $\dstree$ from Section~\ref{s:sssp}, required by Theorem~\ref{t:collection},
in a slightly different way.

Instead of passing the updates to $G$ to $\dstree$ immediately,
we will only do so at the end of a phase, that is, in a batch of at most $\Delta$ updates.
This is done with the vertex batch-update operation of \Cref{t:h-bounded-submatrix-batch}.
To see that this is sufficient,
recall that the only purpose of $\dstree$ 
was to construct the set $C$ and a collection of paths $\Pi$.
For this task, $\dstree$ only required supporting switching vertices
off interleaved with $h$-bounded-shortest paths queries from a single source.
We thus require $\dstree$ to reflect the current graph $G$ only once per $\Delta$ updates.
Since a batch update is performed on $\dstree$ only once per $\Delta=n^\delta$ updates,
this incurs an $\Ot\left(h^2 n^{\omega(1,\delta,1)+1}/(\Delta b)\right)$ worst-case cost per update.
As a result, the total per-update cost of initializing a phase
is now:
\begin{equation*}
  O\left(\frac{h^2 n^{\omega(1,\delta,1)+1}}{\Delta b}+\frac{h^2n^{2+\rho}}{b}\cdot\left(1+\frac{n}{\Delta\tau}\right)+\frac{n^2b}{\Delta}\right).
\end{equation*}
\newcommand{\dsyb}{\mathcal{B}}
\newcommand{\dsource}{\mathcal{S}}
\newcommand{\dpair}{\mathcal{P}}

\paragraph{The set $Y$.} We also redefine the set $Y\subseteq V\times V$, for a fixed phase, to only contain
the pairs $(s,t)$ such that $\pi_{s,t}\in \Pi_D$.
As a result, by Theorem~\ref{t:collection}, we have $|Y|=O(\Delta h\tau n)$ at all times.

\paragraph{The predecessor reporting data structure $\dsyb$.}
Let $\dsource$ and $\dpair$ be two data structures of Theorem~\ref{t:h-bounded-submatrix-batch}, for
the parameter $\mu$ to be set later, responsible for maintaining $h$-bounded distances
in $G$.
While the phase proceeds, these data structures will not accept edge/vertex updates.
Only at the end of the phase they are updated with a batch of $\Delta$ vertex updates
from the current phase.

The data structures $\dsource$ and $\dpair$ have different purposes.
$\dsource$ is initialized so that it can maintain $\dist_{G}^h(s,t)$ \emph{explicitly}
for all $n^2$ pairs $s,t\in V$ (i.e., $Y:=V\times V$ in terms of Theorem~\ref{t:h-bounded-submatrix-batch}),
whereas $\dpair$ is capable of answering arbitrary $\dist_{G}^h(s,t)$ queries
in $\Ot(h^2n^\mu)$ time.
This way, $\dsource$ allows very fast queries at the cost of $\Ot(h^2n^2)$ worst-case update time,
whereas $\dpair$ answers queries slower, but can handle updates
more efficiently, in $\Ot(h^2(n^{\omega(1,\mu,1)-\mu}+n^\mu))$ worst-case time.

We use data structures $\dsource$ and $\dpair$ to construct a predecessor reporting data structure $\dsyb$
of Lemma~\ref{l:blocks2} storing the graph $G-D$, for $b:=\beta$, where $\beta\in [1,n]$
is a parameter to be set later.
Specifically, we first apply Lemma~\ref{l:blocks2}, and then also Lemma~\ref{l:induced-subgraph}
on top of that to enable cheap insertions to $D$ (we model
$G-D$ as the induced subgraph $G[W]$ with $W:=V\setminus D$).
The general purpose of $\dsyb$ is to reconstruct $h$-bounded shortest paths in $G-D$.
Note that in terms of the quantities used in Lemma~\ref{l:blocks2}, we have
$\qsource(n,h)=O(n)$, $\usource(n,h)=\Ot(h^2n^2)$, $\qpair(n,h)=\Ot(h^2n^\mu)$ and $\upair(n,h)=\Ot(h^2(n^{\omega(1,\mu,1)-\mu}+n^{1+\mu}))$.
Therefore, by issuing a query with $S=V$ (which trivially guarantees $Y\subseteq S\times V$),
$\dsyb$ can compute a predecessor vertex $q_{s,t}$ on some shortest $h$-bounded $s\to t$ paths
for all $(s,t)\in Y$ such that $\dist_{G-D}^h(s,t)<\infty$ in time
\begin{equation*}
  \Ot(n^2+(h^2n^\mu+\beta)\cdot |Y|).
\end{equation*}
Adding new vertices to $D$ corresponds to removal of a vertex from the subset $W$ issued to the data structure $\dsyb$,
and thus costs
\begin{equation*}
  \Ot(h^2n^2+(n/\beta)\cdot h^2(n^{\omega(1,\mu,1)-\mu}+n^{1+\mu}))
\end{equation*}
time.
Similarly, the batch-vertex update to $\dsyb$ at the end of a phase
costs $\Ot\left(\frac{h^2n^{\omega(1,\delta,1)+1}}{\beta\Delta}\right)$ amortized time per update.

\paragraph{Computing all-pairs distance estimates upon update.}
After an update, we compute the respective predecessors $q_{s,t}$ for all $(s,t)\in Y$ using $\dsyb$.
We now introduce a certain collection of paths $\Pi'=(\pi')_{s,t\in V}$ derived from $\Pi$, $D$, and the computed predecessors.
For each $s,t\in V$, $\pi'_{s,t}$ is either
a path in $G-D$, or is left undefined (set to $\perp$).
Specifically, we set:
\begin{equation*}
  \pi'_{s,t}=\begin{cases}
    \pi_{s,t} & \text{ if }\pi_{s,t}\in \Pi\setminus \Pi_D,\\
    \pi'_{s,q_{s,t}} \cdot q_{s,t}t & \text{ if }\pi_{s,t}\in \Pi_D\text{ and }\dist_{G-D}^h(s,t)\neq\infty\text{ and }\pi'_{s,q_{s,t}}\neq\perp,\\
    \perp & \text{ otherwise. }
  \end{cases}
\end{equation*}
The definition is valid since $\pi_{s,t}\in \Pi_D$ and $\dist_{G-D}^h(s,t)\neq\infty$ implies
that $(s,t)\in Y$ and thus $\dist^h_{G-D}(s,q_{s,t})=\dist^h_{G-D}(s,t)-1$.
Note that the collection $\Pi'$ can be explicitly constructed from
the definition in $O(n^2h)$ time once the predecessors $q_{s,t}$ are computed.
For simplicity, define $|\perp|=\infty$.

\begin{lemma}\label{l:coll-prim}
  For any $s,t\in V$ such that no shortest $h$-bounded $s\to t$ path in $G$ goes
  through $C\cup D$, we have $|\pi_{s,t}'|=\dist_{G}^h(s,t)$.
\end{lemma}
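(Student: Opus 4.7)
The plan is to prove the lemma by induction on $\dist^h_G(s,t)$, handling the infinite case separately and applying the recursive definition of $\pi'_{s,t}$ carefully in the finite case. First, for $\dist^h_G(s,t)=\infty$ I would argue that the definition forces $\pi'_{s,t}=\perp$: the first clause would require $\pi_{s,t}\in \Pi\setminus \Pi_D$, giving $\pi'_{s,t}$ as an $s\to t$ path in $G-D\subseteq G$ of at most $h$ edges (recall that every path in $\Pi$ is $h$-bounded by the construction in Lemma~\ref{l:collection-thres}), contradicting $\dist^h_G(s,t)=\infty$; the second clause requires $\dist^h_{G-D}(s,t)\neq\infty$, impossible since $G-D\subseteq G$. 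Thus $|\pi'_{s,t}|=\infty=\dist^h_G(s,t)$ by convention.

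For the inductive step with $\dist^h_G(s,t)<\infty$, I would first use the hypothesis to pick a shortest $h$-bounded $s\to t$ path $P$ in $G$ lying entirely in $G-(C\cup D)\subseteq G_0-C$, which gives $\dist^h_{G_0-C}(s,t)\leq \dist^h_G(s,t)\leq h$. Theorem~\ref{t:collection}(a) then guarantees that $\pi_{s,t}\in \Pi$ exists with $|\pi_{s,t}|\leq \dist^h_G(s,t)$. In the easy case $\pi_{s,t}\in \Pi\setminus \Pi_D$, we have $\pi'_{s,t}=\pi_{s,t}\subseteq G-D\subseteq G$, which is an $s\to t$ path of at most $h$ edges, so $|\pi_{s,t}|\geq \dist^h_G(s,t)$, yielding equality.

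The hard case, and the main obstacle, is the recursive one where $\pi_{s,t}\in \Pi_D$. Here I would apply the inductive hypothesis to $(s,q_{s,t})$, but this requires first showing that $(s,q_{s,t})$ also satisfies the hypothesis of the lemma. To do so, I would use that $P\subseteq G-D$ implies $\dist^h_{G-D}(s,t)=\dist^h_G(s,t)<\infty$, so the recursive clause indeed activates; let $\widetilde{P}=P''\cdot q_{s,t}t$ be the shortest $h$-bounded $s\to t$ path in $G-D$ of which $q_{s,t}$ is the penultimate vertex, so $P''$ is a shortest $h$-bounded $s\to q_{s,t}$ path in $G-D$. The key observation is that if $P''$ contained any vertex $c\in C$, then $\widetilde{P}$ itself would be a shortest $h$-bounded $s\to t$ path in $G$ passing through $C\cup D$, contradicting the lemma's hypothesis. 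Hence $P''\subseteq G-(C\cup D)$, and a standard one-edge triangle-inequality argument gives $\dist^h_G(s,q_{s,t})=\dist^h_G(s,t)-1=|P''|$, so $P''$ certifies the hypothesis for $(s,q_{s,t})$. The inductive hypothesis then yields $|\pi'_{s,q_{s,t}}|=\dist^h_G(s,t)-1<\infty$, in particular $\pi'_{s,q_{s,t}}\neq \perp$, so the recursive clause gives $|\pi'_{s,t}|=|\pi'_{s,q_{s,t}}|+1=\dist^h_G(s,t)$, as required.
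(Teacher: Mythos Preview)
Your approach follows the same induction as the paper's proof, and both the infinite case and the case $\pi_{s,t}\in\Pi\setminus\Pi_D$ are handled correctly. In the recursive case, however, there is a small but genuine gap: to apply the inductive hypothesis to the pair $(s,q_{s,t})$, you must verify that \emph{no} shortest $h$-bounded $s\to q_{s,t}$ path in $G$ passes through $C\cup D$, whereas you only exhibit one specific such path $P''$ (the prefix of $\widetilde P$) that avoids $C\cup D$. Exhibiting one good path does not rule out the existence of another shortest $h$-bounded $s\to q_{s,t}$ path that does hit $C\cup D$, so ``$P''$ certifies the hypothesis'' is not justified as stated.

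The fix is immediate and is precisely what the paper does (without the detour through $\widetilde P$ and $P''$): for \emph{any} shortest $h$-bounded $s\to q_{s,t}$ path $Q$ in $G$, one has $|Q|=\dist^h_G(s,q_{s,t})=\dist^h_G(s,t)-1\le h-1$, so $Q\cdot q_{s,t}t$ is an $h$-bounded $s\to t$ path in $G$ of length $\dist^h_G(s,t)$, i.e., a shortest $h$-bounded $s\to t$ path; by the lemma's hypothesis on $(s,t)$ it avoids $C\cup D$, hence so does $Q$. With this adjustment your argument is complete and coincides with the paper's.
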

\begin{proof}
  Recall that we denote by $G_0$ the graph at the beginning of the current phase.

  First consider the case $\dist_G^h(s,t)=\infty$. Then either $\pi_{s,t}\notin \Pi$ which implies $\pi'_{s,t}=\perp$,
  or $\dist_{G_0-C}^h(s,t)<\infty$ and thus $\pi_{s,t}\in \Pi_D$ by $\dist_{G-D}^h(s,t)=\infty$.
  However, in such a case we set $\pi'_{s,t}:=\perp$ as well.

  Now let us assume that $\dist_G^h(s,t)$ is finite. We prove the claim by induction
  on $\dist_G^h(s,t)$.
  Since no shortest $s\to t$ path goes through $C\cup D$, $\dist_G^h(s,t)=\dist_{G-D}^h(s,t)=\dist_{G-(C\cup D)}^h(s,t)\geq \dist_{G_0-C}^h(s,t)$.
  Hence, we obtain that $\pi_{s,t}\in \Pi$ exists and has length at most $\dist_{G_0-C}^h(s,t)\leq \dist_G^h(s,t)$.
  If $\pi_{s,t}\in \Pi\setminus \Pi_D$, then $\pi'_{s,t}=\pi_{s,t}\subseteq G-D\subseteq G$,
  so indeed we have $|\pi'_{s,t}|=\dist_G^h(s,t)$.
  Otherwise, $\pi_{s,t}\in \Pi_D$.
  By $\dist_{G-D}^h(s,t)<\infty$, $q_{s,t}$ exists and $\dist_G^h(s,q_{s,t})=\dist_G^h(s,t)-1$.
  Since no shortest $h$-bounded $s\to t$ path in~$G$ goes through $C\cup D$, neither
  does any shortest $h$-bounded $s\to q_{s,t}$ path in $G$.
  By the inductive assumption we have that $|\pi'_{s,q_{s,t}}|=\dist_G^h(s,q_{s,t})$.
  As a result, the path $\pi'_{s,t}=\pi'_{s,q_{s,t}}\cdot q_{s,t}t$ indeed
  has length $\dist_G^h(s,t)$.
\end{proof}

With the collection $\Pi'$ in hand, in $O(n^2h)$ time we can deterministically compute an $\Ot(n/h)$-sized hitting set $H\subseteq V$
of all $\pi'_{s,t}$ satisfying $|\pi_{s,t}'|=h$ using Lemma~\ref{l:hitting}.

\begin{lemma}\label{l:path-decomp}
  Let $s\in V\setminus (C\cup D\cup H)$ and $t\in V$ be such that $\dist_G(s,t)<\infty$. Some shortest $s\to t$ path in $G$
  either:
  \begin{enumerate}[(a)]
    \item is equal to $\pi'_{s,t}$, or
    \item has a prefix $\pi'_{s,q}\cdot qr$ for some $r\in C\cup D\cup H$, where $q=r$ or $qr\in E(G)$ and $|\pi'_{s,q}\cdot qr|\geq 1$.
  \end{enumerate}
\end{lemma}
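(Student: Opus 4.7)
My plan is to proceed by strong induction on $d:=\dist_G(s,t)$. The base case $d=0$ is immediate: then $s=t\notin D$, so $\pi_{s,s}\in \Pi\setminus \Pi_D$, whence $\pi'_{s,s}$ is the trivial zero-length path and case~(a) holds. For the inductive step with $d\geq 1$, I split on whether any shortest $s\to t$ path in $G$ visits a vertex of $C\cup D\cup H$ other than $s$.

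In the ``visits'' case, fix such a path $P=v_0,\ldots,v_d$ and let $i^*\geq 1$ be the minimum index with $v_{i^*}\in C\cup D\cup H$. If $i^*=1$, the length-one prefix $\pi'_{s,s}\cdot sv_1$ (with $q=s$, $r=v_1$) immediately yields case~(b). Otherwise $v_{i^*-1}\notin C\cup D\cup H$, so I invoke induction on $(s,v_{i^*-1})$ at distance $i^*-1<d$. Under outcome~(a), $\pi'_{s,v_{i^*-1}}$ itself is a shortest $s\to v_{i^*-1}$ path in $G$; concatenating it with the edge $v_{i^*-1}v_{i^*}$ and the tail $v_{i^*}\ldots v_d$ of $P$ gives a walk of length exactly $d$, which must be a simple shortest $s\to t$ path by a standard shortcutting argument, and its prefix has the desired form with $q=v_{i^*-1}$, $r=v_{i^*}$. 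Under outcome~(b) for $(s,v_{i^*-1})$, the prefix $\pi'_{s,q'}\cdot q'r'$ delivered by induction is inherited into the concatenated path.

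In the complementary case, every shortest $s\to t$ path avoids $C\cup D\cup H$ entirely, so in particular $t\notin C\cup D\cup H$. Let $P$ be any such path; I claim $|P|\leq h$. If $|P|\leq h$, then $\dist_G^h(s,t)=d$ and every shortest $h$-bounded $s\to t$ path is itself a shortest $s\to t$ path in $G$, hence avoids $C\cup D$; Lemma~\ref{l:coll-prim} then forces $|\pi'_{s,t}|=d$, so $\pi'_{s,t}\subseteq G-D\subseteq G$ is a shortest $s\to t$ path, yielding case~(a). If instead $|P|>h$, I derive a contradiction by applying induction to $(s,v_h)$, noting $v_h\notin C\cup D\cup H$ (intermediate vertex of $P$) and $\dist_G(s,v_h)=h<d$. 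Outcome~(b) would produce, via concatenation with the tail of $P$, a shortest $s\to t$ path passing through $C\cup D\cup H$ at an intermediate vertex, contradicting the case hypothesis. Outcome~(a) asserts that $\pi'_{s,v_h}$ is a shortest $s\to v_h$ path of length exactly $h$, so by construction $H$ hits $\pi'_{s,v_h}$; concatenating $\pi'_{s,v_h}$ with the $v_h\to t$ tail of $P$ again yields a shortest $s\to t$ path meeting $H$ at an intermediate vertex, another contradiction. Hence $|P|>h$ is impossible and case~(a) applies.

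The main technical obstacle is the $|P|>h$ subcase: the hitting-set contradiction only works because the inductive outcome~(a) guarantees $|\pi'_{s,v_h}|=h$ exactly (not just $\leq h$), which in turn depends on $\dist_G(s,v_h)=h$ pinning down the length of the delivered shortest path. A minor but pervasive subtlety is ensuring that every concatenation of a $\pi'$-path with a tail of $P$ yields a genuine simple path; this is always justified by the observation that any non-simple $s\to t$ walk of length $\dist_G(s,t)$ could be shortcut to a strictly shorter $s\to t$ path, contradicting minimality.
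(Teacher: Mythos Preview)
Your proof is correct, but it is organized differently from the paper's.  The paper splits first on whether $\dist_G(s,t)\le h$ or $>h$.  In the $\le h$ case it argues directly: either no shortest $s\to t$ path meets $C\cup D$ (then Lemma~\ref{l:coll-prim} gives (a)), or some does, in which case the paper picks the first $r\in C\cup D$ on such a path, takes its predecessor $q$, and applies Lemma~\ref{l:coll-prim} to $(s,q)$ to get (b).  For $d>h$ the paper reduces to the $\le h$ case by looking at the vertex $x$ at distance $h$; if outcome~(a) holds for $(s,x)$, it then uses the hitting-set vertex $u\in H$ on $\pi'_{s,x}$ and applies the $\le h$ case once more to $(s,u)$, which in one subcase yields (b) with $q=r=u$.

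Your approach instead runs a single strong induction on $d$ and splits on whether any shortest path meets the \emph{full} set $C\cup D\cup H$.  This buys you two small simplifications: you never need the $q=r$ clause of (b), and the hitting-set argument appears only as a contradiction device in the ``doesn't visit'' branch rather than as a constructive step.  The price is one extra layer of bookkeeping (the shortcutting justification for each concatenation), which you handle correctly.  One stylistic point: under outcome~(b) in the ``visits'' case your phrase ``inherited into the concatenated path'' is terse---make explicit that you concatenate the shortest $s\to v_{i^*-1}$ path delivered by induction (not $\pi'_{s,v_{i^*-1}}$) with the tail $v_{i^*-1}\ldots v_d$ of $P$, and that this walk has length $d$ and is therefore simple.
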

\begin{proof}

  First suppose that $\dist_G(s,t)\leq h$, i.e., $\dist_G^h(s,t)\neq\infty$.

  If no shortest $s\to t$ path in $G$ goes through $C\cup D$, then by Lemma~\ref{l:coll-prim},
  $\pi'_{s,t}$ is a shortest path, i.e., item~(a) is satisfied.
  
  Let us thus assume that some shortest $s\to t$ path goes through $C\cup D$.
  There exists an $r\in C\cup D$ such that $\dist_G(s,t)=\dist_G(s,r)+\dist_G(r,t)$;
  if there are many such~$r$, pick one with minimal $\dist_G(s,r)$.
  By $s\notin C\cup D$, $\dist_G(s,r)\geq 1$.
  If $\dist_G(s,r)=1$, then some shortest $s\to t$ path starts with the edge $sr$,
  i.e., item~(b) is satisfied with $q=s$.
  Otherwise, if $\dist_G(s,r)\geq 2$,
  then by the choice of $r$, any $s\to r$ path has no vertices
  in $C\cup D$ except $r$. As a result, if $q\in V\setminus (C\cup D)$ is the penultimate
  vertex on one such $s\to r$ path, we obtain that no $s\to q$ path in~$G$ contains
  a vertex from $C\cup D$.
  By applying Lemma~\ref{l:coll-prim} to $(s,q)$, we obtain that $\pi'_{s,q}$
  is a shortest $s\to q$ path in $G$.
  Therefore, we conclude that some shortest $s\to t$ path in $G$
  has a prefix $\pi'_{s,q}\cdot qr$, where $r\in C\cup D$, as desired.

  Now consider the case $\dist_G(s,t)>h$.
  Let $x\in V$ be such that $\dist_G(s,t)=\dist_G(s,x)+\dist_G(x,t)$ and $\dist_G(s,x)=h$.
  We have already proved the lemma for the pair $(s,x)$.
  If there is a shortest $s\to x$ path satisfying (b), there is
  also clearly a shortest $s\to t$ path satisfying (b).
  It may also happen that $\pi'_{s,x}$ is a shortest $s\to x$ path in $G$.
  But then, there exists a vertex $u\in H$, $u\neq s$, on $\pi'_{s,x}$.
  By applying the lemma once again to the pair $(s,u)$ satisfying $\dist_G(s,u)\leq h$, we obtain
  that either (b) holds for $(s,u)$ and thus also for $(s,t)$,
  or (a) holds for $(s,u)$, which together with $s\notin H$ implies that $|\pi'_{s,u}|\geq 1$ and
  thus (b) holds for $(s,t)$ with $q=r=u\in H$.
\end{proof}
By proceeding symmetrically, one can prove a ``suffix'' version of Lemma~\ref{l:path-decomp}:
\begin{lemma}\label{l:path-decomp-suf}
  Let $s\in V$ and $t\in V\setminus (C\cup D\cup H)$ be such that $\dist_G(s,t)<\infty$. Some shortest $s\to t$ path in $G$
  either:
  \begin{enumerate}[(a)]
    \item is equal to $\pi'_{s,t}$, or
    \item has a suffix $rq\cdot \pi'_{q,t}$ for some $r\in C\cup D\cup H$, where $q=r$ or $rq\in E(G)$ and $|rq\cdot \pi'_{q,t}|\geq 1$.
  \end{enumerate}
\end{lemma}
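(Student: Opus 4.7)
The plan is to mirror the proof of Lemma~\ref{l:path-decomp} with the roles of ``prefix'' and ``suffix'' interchanged and with the target $t$ playing the role of the fixed vertex outside $C\cup D\cup H$. The crucial symmetric ingredient is Lemma~\ref{l:coll-prim}, whose statement refers to $s$ and $t$ only through the condition that no shortest $h$-bounded $s\to t$ path in $G$ meets $C\cup D$, and thus applies equally well when we want to use it to analyze the suffix of a shortest path rather than its prefix.

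First I would dispose of the case $\dist_G(s,t)\leq h$. If no shortest $s\to t$ path passes through $C\cup D$, then Lemma~\ref{l:coll-prim} gives $|\pi'_{s,t}|=\dist_G^h(s,t)$ and option (a) holds. Otherwise I pick $r\in C\cup D$ lying on some shortest $s\to t$ path and minimizing $\dist_G(r,t)$ (the mirror of the minimality choice in Lemma~\ref{l:path-decomp}). If $\dist_G(r,t)=1$ the suffix $rt$ witnesses option (b) with $q=t$, using the degenerate $\pi'_{t,t}$ of length zero. If $\dist_G(r,t)\geq 2$, I take the successor $q\notin C\cup D$ of $r$ on some shortest $r\to t$ path; by the minimality of $\dist_G(r,t)$ no vertex of $C\cup D$ lies on any shortest $q\to t$ path, so Lemma~\ref{l:coll-prim} applied to $(q,t)$ shows that $\pi'_{q,t}$ is a shortest $q\to t$ path in $G$, and the suffix $rq\cdot \pi'_{q,t}$ of a shortest $s\to t$ path gives option (b). For $\dist_G(s,t)>h$ I would reduce to the above: pick $x$ with $\dist_G(s,t)=\dist_G(s,x)+\dist_G(x,t)$ and $\dist_G(x,t)=h$ and apply the already-proved case to $(x,t)$, using $t\notin C\cup D\cup H$. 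Option (b) for $(x,t)$ immediately lifts to a suffix of a shortest $s\to t$ path. If instead option (a) holds, then $\pi'_{x,t}$ is a shortest $x\to t$ path of length exactly $h$, so, since $t\notin H$, the hitting-set property of $H$ on length-$h$ paths in $\Pi'$ produces some $u\in H$ with $u\neq t$ on $\pi'_{x,t}$; applying the lemma once more to $(u,t)$ then forces option (b) for $(s,t)$, either directly from the suffix produced, or, if $\pi'_{u,t}$ is itself shortest, by taking $q=r=u\in H$ and using $|u\cdot \pi'_{u,t}|=|\pi'_{u,t}|\geq 1$.

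I expect no genuine obstacle beyond careful bookkeeping of the symmetric case split. The one point worth double-checking is that the definition of $\Pi'$ is a priori \emph{asymmetric}: the recursion $\pi'_{s,t}=\pi'_{s,q_{s,t}}\cdot q_{s,t}t$ extends partial paths by appending edges at the target endpoint, so one might worry that prefixes and suffixes of paths in $\Pi'$ behave differently. However, Lemma~\ref{l:coll-prim} packages exactly the two-sided property needed, namely that $|\pi'_{s,t}|=\dist_G^h(s,t)$ whenever no shortest $h$-bounded $s\to t$ path meets $C\cup D$, irrespective of which endpoint we view as ``fixed.'' Consequently the symmetric argument above goes through without having to revisit the inductive construction of $\Pi'$.
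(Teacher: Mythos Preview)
Your proposal is correct and is exactly the symmetric argument the paper intends: the paper itself gives no separate proof of Lemma~\ref{l:path-decomp-suf}, merely stating ``by proceeding symmetrically, one can prove a `suffix' version of Lemma~\ref{l:path-decomp},'' and your write-up spells out precisely that mirroring. Your closing remark that the asymmetry in the recursive definition of $\Pi'$ is neutralized by Lemma~\ref{l:coll-prim} (whose hypothesis is symmetric in $s,t$) is the right observation and is the only place one might momentarily worry the symmetry breaks.
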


Let us now recall the definition of the min-plus product $\star$. If $A$ is an $a\times b$ matrix
and $B$ is an $b\times c$ matrix, then $A\star B$ is a $a\times c$ matrix such that
$(A\star B)_{i,j}=\min_{k=1}^b\{A_{i,k}+B_{k,j}\}$.

Define a matrix $A\in  \{0,\ldots,h,\infty\}^{n\times n}$ as follows:
\begin{equation*}
  A_{u,v}=\begin{cases}
    0 &\text{ if }u=v\\
    1 &\text{ if }u\neq v\text{ and }uv\in E(G),\\
    |\pi'_{u,v}| &\text{ if }u\neq v\text{ and }uv\notin E(G)\text{ and }\pi'_{u,v}\neq\perp,\\
    \infty &\text{ otherwise. }
  \end{cases}
\end{equation*}

Let $A[S,T]$ denote the submatrix of $A$ with rows $S$ and columns $T$, where $S,T\subseteq V$.
Let $X:=C\cup D\cup H$.
Define the matrix $A^*$ as:
\begin{equation}\label{eq:matrix}
  A^*=\min\left(A, (A\star A[V,X]) \star \left(A[X,V]\star A\star A[V,X]\right)^n \star (A[X,V]\star A)\right),
\end{equation}
where the matrix powering is using the min-plus matrix product $\star$,
and the minimum is element-wise.

\begin{lemma}\label{l:tc}
  For any $s,t\in V$, $A^*_{s,t}=\dist_G(s,t)$.
\end{lemma}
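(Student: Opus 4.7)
My plan is to prove both inequalities $A^*_{s,t}\geq \dist_G(s,t)$ and $A^*_{s,t}\leq \dist_G(s,t)$ separately.

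The direction $\dist_G(s,t)\leq A^*_{s,t}$ follows from a routine observation: every finite entry $A_{u,v}$ is realized by some $u\to v$ walk in $G$ of length exactly $A_{u,v}$ — the trivial walk when $u=v$, a single edge when $uv\in E(G)$, or the path $\pi'_{u,v}\subseteq G-D\subseteq G$ otherwise. Min-plus products concatenate such walks, so every summand defining $A^*_{s,t}$ is an upper bound on $\dist_G(s,t)$.

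For the direction $A^*_{s,t}\leq \dist_G(s,t)$, I consider only the nontrivial case $\dist_G(s,t)<\infty$ and exhibit a sequence of $A$-entries summing to $\dist_G(s,t)$ whose shape fits the rigid structure of the second term in \eqref{eq:matrix}: an initial two-entry leg $s\to v_0\to r_1$, some number of three-entry middle legs $r_i\to v_i\to v_i'\to r_{i+1}$, and a terminating two-entry leg $r_{n+1}\to v_{\mathrm{end}}\to t$, with each $r_j\in X$. Unused middle legs are to be padded with trivial loops at the same $X$-vertex; this costs nothing because $A$ (and hence $M:=A[X,V]\star A\star A[V,X]$, via the choice $v=w=r$) has zero diagonal, so $M^n\leq M^k$ element-wise for all $k\leq n$.

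I construct the decomposition inductively, iteratively invoking Lemma~\ref{l:path-decomp}. For the initial leg, if $s\in X$ I set $r_1:=s$ with two zero entries; otherwise I apply Lemma~\ref{l:path-decomp} at $(s,t)$, where case~(a) is handled directly by the first term $A$ inside $A^*=\min(A,\cdot)$ (since $A_{s,t}=\dist_G(s,t)$), and case~(b) delivers $r_1\in X$ along with the two entries $A_{s,q_1}+A_{q_1,r_1}$. For each middle step starting at the cursor $r_i\in X$, I pick any shortest $r_i\to t$ path, walk one edge to its next vertex $v$, and branch: if $v\in X$ set $r_{i+1}:=v$ (legs $1+0+0$); if $v=t$ finish immediately through the terminating leg $A_{r_i,t}+A_{t,t}$; and if $v\notin X$ I apply Lemma~\ref{l:path-decomp} at $(v,t)$, where its case~(a) closes the decomposition with the terminating leg $A_{r_i,v}+A_{v,t}$ and case~(b) extends the cursor to $r_{i+1}\in X$ through the three entries $A_{r_i,v}+A_{v,q_{i+1}}+A_{q_{i+1},r_{i+1}}$.

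Each iteration consumes at least one edge of the underlying shortest $s\to t$ path, so at most $n-1$ middle legs are needed, comfortably inside the budget of $n$ middle legs provided by $M^n$. The main obstacle I anticipate is reconciling the output format of Lemma~\ref{l:path-decomp} (which requires the source vertex to lie outside $X$) with the formula's structure, which iterates between vertices of $X$; my workaround is to always peel off the first edge $r_i\to v$ of the remaining shortest path before invoking Lemma~\ref{l:path-decomp} at $(v,t)$, and absorb that edge into the three-entry shape of a middle leg. The remaining subtleties — the boundary cases $s\in X$ or $t\in X$, and the early-termination bookkeeping — are uniformly dissolved by zero-cost padding exploiting $A_{u,u}=0$.
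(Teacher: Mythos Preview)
Your proof is correct and follows essentially the same mechanism as the paper's: both arguments repeatedly peel off one edge from the current $X$-vertex and then invoke Lemma~\ref{l:path-decomp} at the resulting vertex to either close out (case~(a)) or advance to the next $X$-vertex (case~(b)). The paper phrases this as an induction on $\dist_G(s,t)$ for $s,t\in X$ and then handles the outer legs separately, while you phrase it as an explicit iterative decomposition matching the product structure of~\eqref{eq:matrix}.

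The one genuine simplification in your version is that you never need Lemma~\ref{l:path-decomp-suf}. The paper invokes the suffix lemma to handle the terminating leg when $t\notin X$, whereas in your construction termination arises uniformly from case~(a) of Lemma~\ref{l:path-decomp} (or from $v=t$), yielding the two-entry leg $A_{r_i,v}+A_{v,t}$ directly in the shape $A[X,V]\star A$. This makes the argument slightly more uniform, at the cost of having to be explicit about zero-padding the unused middle blocks; the paper's inductive formulation hides that padding inside the monotonicity $(B^{i+1})_{s,t}\leq (B^i)_{s,t}$.
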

\begin{proof}
  Note that $A_{u,v}\geq \dist_G(u,v)$ for all $u,v\in V$.
Moreover, the min-plus product of any two matrices with rows and columns in $V$ satisfying this condition
also satisfies this condition.
  This proves that $A^*_{u,v}\geq \dist_G(u,v)$.

  Let us now prove that $A^*_{s,t}\leq \dist_G(s,t)$.
  We first consider the case $s,t\in X$.
  We proceed by induction on $d:=\dist_G(s,t)$ and prove that if
  $B:=A[X,V]\star A\star A[V,X]$, then $(B^d)_{s,t}\leq \dist_G(s,t)$ (where the powering is using the min-plus product $\star$).
  Let us first note that since $A$ has zeros on the diagonal,
  $(B^{i+1})_{s,t}\leq (B^i)_{s,t}$ for any $i\geq 0$ and $B_{s,t}\leq A_{s,t}$.

  The identity matrix $I$ for the min-plus product has zeros on the diagonal and $\infty$ everywhere else.
  If $d=0$, then $s=t$ and $B^0=I$, so indeed $(B^d)_{s,t}=I_{s,s}=0$.

  Suppose $d\geq 1$.
  Let $u\in V$ be any vertex such that $su\in E(G)$ and $\dist_G(u,t)=d-1$.
  We have $A_{s,u}=1$ and thus also $B_{s,u}\leq 1$.
  If $u\in X$, then both $B_{s,u}\leq 1$ and $(B^{d-1})_{u,t}\leq d-1$ hold by the inductive
  assumption, and thus indeed $(B^d)_{s,t}\leq d$.

  Now assume $u\notin X$.
  By Lemma~\ref{l:path-decomp} applied to $(u,t)$, we have two cases.
  If case (a) applies, $\pi'_{u,t}\neq\perp$, so $B_{u,t}\leq A_{u,t}\leq |\pi'_{u,t}|=\dist_G(u,t)\leq d-1$ and thus again we obtain $(B^{d-1})_{u,t}\leq d-1$
  and consequently $(B^d)_{s,t}\leq d$.
  If case (b) applies, then for some $q,r$, where $q\in V$,
  $r\in X$ and $\dist_G(r,t)\leq d-1$, we have
  $\dist_G(s,r)=|su\cdot \pi'_{u,q}\cdot qr|$,
  where possibly $q=r$ or $s=q$.
  But this implies 
  $B_{s,r}\leq (A[X,V])_{s,u}+A_{u,q}+(A[V,X])_{q,r})\leq \dist_G(s,r)$.
  By the inductive assumption, $(B^{d-1})_{r,t}\leq \dist_G(r,t)$, so $(B^d)_{s,t}\leq \dist_G(s,r)+\dist_G(r,t)=\dist_G(s,t)$ as well.

  We have therefore proved that for any $s,t\in X$, $\dist_G(s,t)<\infty$
  implies that $(B^n)_{s,t}\leq \dist_G(s,t)$. This in turn implies $A^*_{s,t}\leq \dist_G(s,t)$ for such pairs $(s,t)$.
  
  If $s\notin X$ and $t\in X$, then by Lemma~\ref{l:path-decomp}, either (a) $\pi'_{s,t}$ is a shortest
  $s\to t$ path, which implies $A_{s,t}\leq \dist_G(s,t)$, or (b) for some $q\in V$ and $r\in X$ with $\dist_G(r,t)<\dist_G(s,t)$,
  there is a shortest $s\to t$ path starting with $\pi'_{s,q}\cdot qr$,
  which implies $\dist_G(s,t)=\dist_G(s,r)+\dist_G(r,t)$ and $A_{s,q}+A_{q,r}\leq \dist_G(s,r)$.
  In both cases we conclude $\left((A\star A[V,X])\star B^n\right)_{s,t}\leq \dist_G(s,t)$.
  By Lemma~\ref{l:path-decomp-suf} we can symmetrically prove
  that if $s\in X$ and $t\notin X$, then $(B^n\star (A[X,V]\star A))_{s,t}\leq \dist_G(s,t)$.
  
  Finally, if $s\notin X$ and $t\notin Y$, then we analogously
  argue that either $A_{s,t}\leq \dist_G(s,t)$ or
  for some $r\in X$ we have
  $\dist_G(s,t)=\dist_G(s,r)+\dist_G(r,t)$, $(A\star A[V,X])_{s,r}\leq \dist_G(s,r)$ and 
  $(B^n\star (A[X,V]\star A))_{r,t}\leq \dist_G(r,t)$, which implies
  that  $\left((A\star A[V,X])\star B^n\star (A[X,V]\star A)\right)_{s,t}\leq \dist_G(s,t)$.
  Therefore, we conclude that for the general case ${s,t\in V}$,
  we have either $A_{s,t}\leq \dist_G(s,t)$ or
  \linebreak
  $\left((A\star A[V,X])\star B^n\star (A[X,V]\star A)\right)_{s,t}\leq \dist_G(s,t)$,
  as desired.
\end{proof}

Lemma~\ref{l:tc} yields a way of computing exact all-pairs distances in $G$ based on the collection $\Pi'$
provided
that an efficient algorithm computing rectangular min-plus products is available.
However, the known min-plus matrix product algorithms are not efficient enough
to give an improved fully dynamic exact APSP algorithm this way.
Nevertheless, we will argue that Lemma~\ref{l:tc} is enough for deterministically computing the matrix
of $(1+\eps)$-approximate distances in $G$ polynomially faster than in $O(n^\omega)$ time, for any $\omega>2$.
To this end, we will need the following $(1+\eps)$-approximate min-plus matrix product $\star_\eps$ of Zwick~\cite[Section 8]{Zwick02}:
\begin{theorem}\label{t:appr-dist-prod}{\upshape{\cite{Zwick02}}}
  Let $A,B$ be two $n\times n$ matrices with elements from $\{0,\infty\}\cup [1,N]$. Let $\eps>0$.
  Then, in $\Ot(n^\omega\log(N)/\eps)$ deterministic time one can compute a matrix $A\star_\eps B$ such that
  for all $i,j\in [n]$:
  \begin{equation*}
    (A\star B)_{i,j}\leq (A\star_\eps B)_{i,j}\leq (1+\eps)\cdot (A\star B)_{i,j}.
  \end{equation*}
\end{theorem}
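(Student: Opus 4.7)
The plan is to reduce $(1+\eps)$-approximate min-plus product to standard matrix multiplication via a scaling argument in the style of Zwick's original construction. First I would round all finite entries of $A$ and $B$ down to the nearest power of $(1+\eps')$ for some $\eps' = \Theta(\eps)$. This costs only a $(1+\eps')$ multiplicative factor of accuracy while restricting each entry to one of $O(\log(N)/\eps)$ canonical values.

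Next, I would organize the computation into scale levels indexed by $r = 0, 1, \dots, O(\log(N)/\eps)$ with threshold $M_r = (1+\eps')^r$. For each scale, I would form matrices $\widetilde A^{(r)}, \widetilde B^{(r)}$ by first discarding all entries exceeding $(1+\eps') M_r$ and then rescaling the surviving entries by $1/(\eps' M_r)$ and rounding to integers in $\{0, 1, \ldots, W\}$ with $W = O(1/\eps)$. A min-plus product of matrices with nonnegative integer entries bounded by $W$ reduces to standard matrix multiplication over a polynomial ring: encode each finite entry $\widetilde A^{(r)}_{i,k}$ as the monomial $x^{\widetilde A^{(r)}_{i,k}}$, do the same for $\widetilde B^{(r)}$, compute the standard product using fast matrix multiplication, and read off the smallest-degree nonzero monomial in each cell. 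This deterministically costs $\Ot(W \cdot n^\omega) = \Ot(n^\omega/\eps)$ per scale. For each cell $(i,j)$, I would output the estimate derived from the smallest scale~$r$ at which the scaled product returns a finite value, i.e.\ roughly $M_r \cdot (1 + \eps' d_r)$ where $d_r$ is the scaled min-plus value. A short case analysis shows that if the true product lies in the window $[M_r, M_{r+1})$, the scale-$r$ computation returns a finite witness matching the true value up to a compounded $(1+\eps')$ factor, and rescaling $\eps' := \Theta(\eps)$ delivers the required $(1+\eps)$ guarantee.

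The main obstacle, I anticipate, is shaving the cost from the naive $\Ot(n^\omega \log(N)/\eps^2)$ that arises from multiplying the $O(\log(N)/\eps)$ scales by the $\Ot(n^\omega/\eps)$ per-scale cost, down to the claimed $\Ot(n^\omega \log(N)/\eps)$. A clean way to absorb one factor of $1/\eps$ is to observe that within each scale, it suffices to round the surviving entries to integers in a \emph{constant} range, which degrades the per-scale accuracy to only a constant factor but keeps the per-scale cost at $\Ot(n^\omega)$; the overall $(1+\eps)$-accuracy is then recovered from the fine geometric grid of $O(\log(N)/\eps)$ scales, which is exactly what picking the smallest scale $r$ yielding a finite witness accomplishes. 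Since we insist on determinism, the polynomial-encoding trick must not resort to randomized identity testing; however, it suffices to work with nonnegative integer coefficients bounded polynomially in $n$, so that standard integer arithmetic absorbs into the $\Ot$-notation, completing the bound.
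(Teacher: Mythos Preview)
The paper does not prove this theorem; it is quoted from \cite{Zwick02} and used as a black box. So there is no ``paper's own proof'' to compare against. That said, your reconstruction has a genuine gap in the final step.

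Your overall architecture (geometric scaling, thresholding, polynomial encoding of bounded-integer min-plus via ordinary matrix multiplication) is the right one and matches Zwick's. You also correctly flag that the naive combination of an $(1+\eps')$-spaced grid with $O(1/\eps)$-range integers at each scale costs $\Ot(n^\omega \log(N)/\eps^2)$. The problem is your proposed fix: using \emph{constant}-range integers at each scale and hoping that the $(1+\eps')$-fine grid of thresholds restores $(1+\eps)$ accuracy. It does not. The smallest scale $r$ at which the product is finite is governed by the index $k$ minimizing $\max(A_{ik},B_{kj})$, not the index minimizing the \emph{sum} $A_{ik}+B_{kj}$, and these can differ by a factor close to $2$. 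Concretely, if $k_1$ has $(A,B)=(T/2,T/2)$ and $k_2$ has $(A,B)=(0.9T,0)$, the true min-plus value is $0.9T$, but $k_1$ becomes visible already at threshold $\approx T/2$, and with $O(1)$-range rounding the scale-$r$ estimate there is $\approx T$; no refinement of the threshold grid changes this. Even taking the minimum over all scales, constant-range rounding at any single scale incurs $\Theta(M_r)$ additive error, which is a constant multiplicative error for values $\Theta(M_r)$.

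The correct balancing (and what Zwick actually does) is the opposite trade: use only $O(\log N)$ scales spaced by factors of~$2$, and at each scale keep an integer range $R=\Theta(1/\eps)$. At the scale $r$ with $2^r$ just above the true value $T$, the optimal $k^*$ survives (since $\max(A_{ik^*},B_{k^*j})\le T\le 2^r$), and the rounding error is at most $2\cdot 2^r/R=O(\eps T)$. This gives $(1+\eps)$ accuracy per relevant scale and $O(\log N)\cdot \Ot(R\,n^\omega)=\Ot(n^\omega\log(N)/\eps)$ total, with the polynomial-encoding step done over nonnegative integers so that no randomization is needed.
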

\begin{remark}\label{r:appr-dist-prod}{\upshape{\cite{Zwick02}}}
  Theorem~\ref{t:appr-dist-prod} can be easily generalized to the rectangular case, i.e., if $A$ and~$B$ are
  $n^a\times n^b$ and $n^b\times n^c$ matrices, then the approximate min-plus product can be computed
  in $\Ot(n^{\omega(a,b,c)}\log(N)/\eps)$ time. Moreover, the approximate min-plus product can be extended
  to deterministically produce witnesses within a time bound that is a $\polylog(n)$ factor away.
\end{remark}

Let $\eps'>0$.
Suppose we evaluate the formula~\eqref{eq:matrix} by replacing the min-plus product $\star$
with $(1+\eps')$-approximate min-plus product $\star_{\eps'}$ of Theorem~\ref{t:appr-dist-prod} (and Remark~\ref{r:appr-dist-prod})
and using the folklore repeated-squaring algorithm for matrix powering.
Then, only $O(\log{n})$ matrix products will be performed and
each of the products involved 
will take $\Ot(n^{\omega(1,\alpha,1)}\log(N)/\eps')$ time,
where $\alpha\in [0,1]$ is such that $|C\cup D\cup H|\leq n^{\alpha}$, and
$N$ is a bound on the largest finite number arising in this computation.
Indeed, in all the matrix products involved, one of the matrix dimensions
dimensions is bounded by $n^\alpha$ (we will later set $\tau,h,\Delta$ so that $\alpha<1$).
Moreover, since there are $O(\log{n})$ matrix products, the approximation error
at the end is bounded by $(1+\eps')^{O(\log{n})}$.
This also implies the bound $N\leq n\cdot (1+\eps')^{O(\log{n})}$.
By using a sufficiently small $\eps'=\Theta(\eps/\log{n})$, we can guarantee
that the entire computation is $(1+\eps)$-approximate and $N=\Ot(n)$.
We conclude that the matrix of $(1+\eps)$-approximate distance estimates
can be computed based on $\Pi'$ in $\Ot(n^{\omega(1,\alpha,1)}/\eps)$ time.

It is worth noting that
if we wanted to compute the transitive closure of $G$ (which can be seen as any finite approximation
of APSP) instead, we could
use the standard boolean matrix product~$\cdot$ instead of the approximate min-plus product $\star_{\eps'}$ here.

\paragraph{Approximate shortest path reporting.}
We now turn to explaining how the data structure can be extended to enable efficient reporting of approximately shortest $s,t$-paths
while computing the matrix $A^*$ approximately. Recall that, by Remark~\ref{r:appr-dist-prod},
the approximate min-plus product of Theorem~\ref{t:appr-dist-prod} can also produce witnesses, i.e., such indices $w_{i,j}$
that
\begin{equation*}
  A_{i,w_{i,j}}+B_{w_{i,j},j}\leq (1+\eps')\cdot (A\star B)_{i,j}.
\end{equation*}
Suppose we record the witnesses for all the $\ell=O(\log{n})$ approximate
products involved when computing $A^*$ approximately using Equation~\ref{eq:matrix}.
More concretely, let the subsequent matrices obtained be $L_1,\ldots,L_\ell$,
where $\min(A,L_\ell)$ is a $(1+\eps)$-approximation of $A^*$. By Lemma~\ref{l:tc},
we have
\begin{equation*}
  \min(A_{s,t},(L_\ell)_{s,t})\leq (1+\eps)\dist_G(s,t).
\end{equation*}
Put $L_0:=A$ and a define paths $P_{0,s,t}$ corresponding to the entries of $A$:
\begin{equation*}
  P_{0,s,t}=\begin{cases}
    \emptyset &\text{if }s=t,\\
    st &\text{if }st\in E(G),\\
    \pi'_{st} &\text{if }s\neq t, st\notin E(G),\text{and }A_{s,t}\neq \infty,\\
    \perp &\text{otherwise.}
  \end{cases}
\end{equation*}
Put $|\perp|=\infty$. When computing the matrix $L_k$ as a product of a submatrix of $L_i$ and a submatrix of $L_j$ (along with witnesses $w_{\cdot,\cdot}$),
where $i,j<k$, for all $s,t$ such that $(L_k)_{s,t}$ is defined, we define the path $P_{k,s,t}$ as follows:
\begin{equation*}
  P_{k,s,t}=\begin{cases}
    P_{i,s,w_{s,t}}\cdot P_{j,w_{s,t},t} & \text{if }w_{s,t}\notin \{s,t\},\\
    P_{i,s,t} & \text{if }w_{s,t}=t,\\
    P_{j,s,t} &\text{if }w_{s,t}=s.
  \end{cases}
\end{equation*}
In the implementation, we do not explicitly copy or concatenate the paths define above, but rather
store the pointers to the previously represented paths. This costs only $\Ot(n^2)$ additional time.

It is easy to prove inductively that for all $k,s,t$ such that $(L_k)_{s,t}\neq\infty$,
$P_{k,s,t}$ is an $s\to t$ path in $G$ of length at most $(L_k)_{s,t}$.
Consequently, for all $s,t$ such that $\dist_G(s,t)\neq\infty$, we have $\min(|P_{0,s,t}|,|P_{\ell,s,t}|)\leq (1+\eps)\dist_G(s,t)$. 
As a result, using the stored pointers we can report an $(1+\eps)$-approximately
shortest $s\to t$ path for any $(s,t)\in V$ in optimal
$O(\dist_G(s,t))$ time.

\paragraph{Single-source reachability trees from all sources.}
While in the case of approximately shortest paths and a fixed source $s\in V$, we can report
an $s\to t$ path of length $(1+\eps)\dist_G(s,t)$ for any $t\in V$ efficiently, it is not
clear how to organize such paths into a tree of approximately shortest paths
rooted at $s$.
However, as we show next, this is possible in the case of transitive closure.
Specifically, 
we now show that one can
compute \emph{reachability} trees from all
possible sources $s\in V$ in $G$ using the approximate matrix $A^*$,
if any constant $\eps\in (0,1)$ is used.

We first compute in $O(n^2)$ time the strongly connected components (SCCs) of $G$,
using any classical linear-time algorithm, e.g., that of~\cite{Tarjan72}.
\newcommand{\gscc}{G_{\mathrm{SCC}}}

\begin{lemma}\label{l:gscc}
  A subgraph $\gscc\subseteq G$ with $O(n)$ edges such that $\gscc$ preserves the strongly
  connected components of $G$ can be computed in $O(n^2)$ time.
\end{lemma}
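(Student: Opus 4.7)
The plan is standard: first compute the strongly connected components of $G$ using Tarjan's algorithm in $O(n+m)\subseteq O(n^2)$ time, storing a label $c(v)$ with each vertex identifying which SCC it belongs to. Then, for each SCC $S$ of size at least two, pick an arbitrary root $r_S\in S$ and build two in-component spanning trees rooted at $r_S$: an \emph{out-tree} $T^+_S$ obtained by BFS in the induced subgraph $G[S]$ starting at $r_S$, and an \emph{in-tree} $T^-_S$ obtained by BFS in the reversal of $G[S]$ starting at $r_S$. The output is $\gscc:=\bigcup_S (T^+_S\cup T^-_S)$, with singleton SCCs contributing nothing.

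For correctness, note that inside each SCC $S$ the vertex $r_S$ reaches every $v\in S$ via $T^+_S$, and every $v\in S$ reaches $r_S$ via $T^-_S$, so the subgraph $\gscc$ restricted to $S$ is strongly connected. Conversely, no edge of $\gscc$ crosses between distinct SCCs of $G$, so $\gscc$ induces no new strong connections. Thus $\gscc$ and $G$ have exactly the same SCCs, and $|E(\gscc)|\leq \sum_S 2(|S|-1)\leq 2n\in O(n)$.

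For the time bound, the key point is that each BFS must be confined to $G[S]$. This is easy: when exploring out-edges of a vertex $u\in S$, we skip any edge $uv$ with $c(v)\neq c(u)$ in $O(1)$ per edge inspected, so one pair of BFS traversals for component $S$ takes $O(|S|+|E(G[S])|)$ time (plus $O(\deg(u))$ work at each $u\in S$ for out-of-component edges encountered). Summed over all SCCs, this is $O(n+m)\subseteq O(n^2)$, which is within budget.

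There is no real obstacle here; the only thing to verify is the linear edge count and the $O(n^2)$ bound on BFS bookkeeping, both of which follow immediately once we maintain the SCC labels $c(\cdot)$ computed in the first step. The construction is entirely deterministic, so it slots in directly wherever Lemma~\ref{l:gscc} is invoked.
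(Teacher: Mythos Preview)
Your proof is correct and takes essentially the same approach as the paper: compute the SCCs, then for each SCC pick a root and include an out-tree and an in-tree rooted there, yielding $O(n)$ edges total. You are a bit more explicit about filtering out-of-component edges during BFS, but the argument is otherwise identical.
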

\begin{proof}
  For any strongly connected component $S\subseteq V$ of $G$, include in $\gscc$
  a pair of trees in $G[S]$: a reachability tree from and to any $s\in S$ in the subgraph $G[S]$.
  These trees can be computed via graph search in time linear in the number of edges of $G[S]$.
  Since the subgraphs $G[S]$ are clearly disjoint for distinct $S$,
  computing $\gscc$ takes $O(n^2)$ time.
\end{proof}
Now let us define a matrix $A'$ such that $A'_{s,t}=0$ if
the vertices $s$ and $t$ are \emph{not} strongly connected in $G$
and either $A_{s,t}\neq \infty$ or $(A[V,X]\star_\eps A[X,V])_{s,t}\neq\infty$ holds,
and $A'_{s,t}=\infty$ otherwise.

Let us compute the matrix product
$F_V:=A^*[V,X]\star_\eps A'[X,V]$ along with the matrix $W_V$ of witnesses, that is,
for every $u,v$ such that $(F_V)_{u,v}\neq\infty$, $(W_V)_{u,v}$ is equal to some $z\in X$
such that $A^*_{u,z}\neq\infty$ and $A'_{z,v}\neq\infty$.
Similarly, we compute the matrix $W_X$ of witnesses of
the product $F_X:=A^*[V,V]\star_\eps A'[V,X]$.
Finally, let us compute the matrix $L$ of the witnesses of the product $A[V,X]\star_\eps A[X,V]$.
Again, in all the additional matrix products here, one of the dimension is bounded by $|X|\leq n^\alpha$,
so computing the required information takes $\Ot(n^{\omega(1,\alpha,1)}/\eps)$ time.

\begin{lemma}\label{l:reach-trees}
  For any $a,b\in V$, define $\pi^*_{a,b}$ to be either the edge 
  $ab$ if $ab\in E(G)$ or the path $\pi'_{a,b}$ if $\pi'_{a,b}\neq\perp$ (if both exist, pick an arbitrary one).
  If none of these exist, set $\pi^*_{a,b}:=\perp$.

    Let $s\in V$. Let $G_s\subseteq G$ be a subgraph containing:
  \begin{itemize}
    \item the path $\pi^*_{s,t}$ for all $t\in V$ such that $\pi^*_{s,t}\neq\perp$,
    \item for each $t\in V$ such that $(F_V)_{s,t}=1$, 
      the paths $\pi^*_{x,w}$ and $\pi^*_{w,t}$, where
      where $x=(W_V)_{s,t}$ and $w=L_{x,t}$.
    \item for each $x\in X$ such that $(F_X)_{s,x}=1$, the path $\pi^*_{w,x}$, where $w=W_{s,x}$.
    \item the subgraph $\gscc$.
  \end{itemize}
  Then a vertex is reachable from $s$ in $G$ if and only if it is reachable from $s$ in $G_s$.
\end{lemma}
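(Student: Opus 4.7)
The ``if'' direction is immediate since $G_s\subseteq G$ by construction. For the ``only if'' direction, I would fix $t$ reachable from $s$ in $G$ and exhibit an $s\to t$ walk in $G_s$ by case analysis on how reachability is witnessed in the formula~\eqref{eq:matrix} for $A^*$. Lemma~\ref{l:tc}, read in the boolean/transitive-closure semantics (which the preceding remark permits in place of $\star_\eps$), guarantees that $\dist_G(s,t)<\infty$ implies the $(s,t)$-entry of that expression is finite: hence either $A_{s,t}$ is finite, or the big product through $X$ is finite.

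Two cases close immediately. If $\pi^*_{s,t}\neq\perp$, the first bullet of $G_s$ already supplies an $s\to t$ path in $G$. If $s$ and $t$ lie in a common SCC of $G$, then Lemma~\ref{l:gscc} places an $s\to t$ path inside $\gscc\subseteq G_s$. I would therefore reduce to the situation $\pi^*_{s,t}=\perp$ with $s,t$ in distinct SCCs, where the big-product term is the source of reachability and, by Lemmas~\ref{l:path-decomp} and~\ref{l:path-decomp-suf}, any short walk is forced to route through $X$.

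The main argument then splits into two sub-claims that chain together. \emph{($s$-to-$X$.)} Every $x\in X$ reachable from $s$ in $G$ is reachable from $s$ in $G_s$; I would prove this by induction on the position of the SCC of $x$ in the condensation DAG of $G$, starting from the SCC of $s$. The base is handled by the first bullet together with $\gscc$; for the inductive step, finiteness of $A^*_{s,x}$ produces a witness $w=(W_X)_{s,x}$ lying in a strictly earlier SCC (after collapsing via $\gscc$), and the third bullet of $G_s$ contributes $\pi^*_{w,x}$, so the inductive hypothesis supplies $s\to w$ and then $\pi^*_{w,x}$ gives $w\to x$. \emph{($X$-to-$t$.)} If $(F_V)_{s,t}$ is finite, with witnesses $x=(W_V)_{s,t}$ and $w=L_{x,t}$, the second bullet of $G_s$ contributes $\pi^*_{x,w}$ and $\pi^*_{w,t}$, producing an $x\to w\to t$ walk that concatenates with the $s\to x$ walk from the first sub-claim to yield the desired $s\to t$ walk in $G_s$.

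The main obstacle is \emph{consistency} between the witness matrices and the $\pi^*$ paths physically stored in $G_s$: one must verify that whenever $A'_{w,x}=0$ is invoked to justify inclusion of $\pi^*_{w,x}$, the latter is actually non-$\perp$, and similarly that $L_{x,t}$ always points to a pair for which both $\pi^*_{x,w}$ and $\pi^*_{w,t}$ are defined. This will require a careful reading of the definition of $A'$ and of how witnesses are resolved by the (boolean or approximate) matrix product, with $\gscc$ absorbing the degenerate situations in which a witness would otherwise collapse inside an SCC; well-foundedness of the induction follows from the condensation being acyclic.
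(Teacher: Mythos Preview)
Your overall strategy---induction along the condensation DAG, using the witnesses from $F_X$, $F_V$, $L$ together with $\gscc$ to hop between SCCs---is the same as the paper's, but your decomposition into an ``$s$-to-$X$'' sub-claim followed by an ``$X$-to-$t$'' sub-claim misses the key combinatorial trick and leaves a genuine gap.

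The gap is in the inductive step of your $s$-to-$X$ sub-claim. You assert that finiteness of $A^*_{s,x}$ yields a witness $w=(W_X)_{s,x}$ in a strictly earlier SCC. But $(W_X)_{s,x}$ exists only when $(F_X)_{s,x}$ is finite, and mere reachability of $x$ from $s$ does \emph{not} imply this. Concretely, suppose the only edge entering the SCC $S\ni x$ lands at some $y\in S\setminus X$ with $\dist_{G[S]}(y,x)>h$ and $X\cap S=\{x\}$. Then for every $w\notin S$ one has $A_{w,x}=\infty$ and $(A[V,X]\star A[X,V])_{w,x}=\infty$, hence $A'_{w,x}=\infty$ and $(F_X)_{s,x}=\infty$; the third bullet contributes nothing for~$x$. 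A secondary issue is that even when $w=(W_X)_{s,x}$ does exist, it lies in $V$, not necessarily in $X$, so your inductive hypothesis---stated for $X$-vertices only---cannot be invoked on $w$.

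The paper avoids both problems by doing a single induction on SCCs for \emph{all} vertices and, crucially, by selecting within each target SCC $S$ the vertex $t$ that \emph{minimises} $\dist_G(s,\cdot)$. That choice forces every vertex on a shortest $s\to t$ path other than $t$ itself to lie outside $S$. In particular the predecessor $j$ of $t$ then satisfies $A'_{j,t}=0$ (case $t\in X$), and the vertex $r\in X$ supplied by Lemma~\ref{l:path-decomp-suf} satisfies $\dist_G(s,r)<\dist_G(s,t)$ and hence $r\notin S$ (case $t\notin X$). This is exactly what makes $(F_X)_{s,t}$ respectively $(F_V)_{s,t}$ finite and guarantees the resulting witness lives in a strictly earlier SCC; once one vertex of $S$ is reached in $G_s$, $\gscc$ supplies the rest. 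Your consistency worry in the last paragraph is legitimate but secondary; the missing idea is this closest-vertex selection.
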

\begin{proof}
  The ``$\impliedby$'' direction is trivial since $G_s$ is a subgraph of $G$.
  
  We will prove the ``$\implies$'' part by showing 
  that each SCC $S$ of $G$ that $s$ can reach is also reachable
  from $s$ in $G_s$.
  Since the SCCs of $G$ are preserved in $G$ by the inclusion of $\gscc$,
  this is clearly true for the SCC $S_s$ containing the vertex $s$.
  Since the graph obtained from $G_s$ by contracting the SCCs is a DAG,
  it is enough to argue that every other SCC $S'\neq S_s$ has in $G_s$
  an edge coming from another SCC $S''\neq S'$ reachable from $s$.

  Let $S\neq S_s$ be an SCC of $G$ such that $d=\min_{v\in S}\dist_G(s,v)\geq 1$.
  Let $t\in S$ be such that $\dist_G(s,t)=d$.
  Since every vertex $z$ strongly connected
  with $t$ has $\dist_G(s,z)\geq \dist_G(s,t)=d$,
  every vertex $y\neq t$ satisfying $\dist_G(s,t)=\dist_G(s,y)+\dist_G(y,t)$
  is not strongly connected with $t$, i.e., $y\notin S$.

  If $t\in X$, then let $jt$ be an arbitrary edge lying on
  some shortest $s\to t$ path in $G$.
  We have $\dist_G(s,j)=\dist_G(s,t)-1$, $A^*[s,j]<\infty$ by Lemma~\ref{l:tc},
  and $A'[j,t]<\infty$ since $j\notin S$.
  As a result, $(F_X)_{s,t}\leq A^*[s,j]+ A'[j,t]<\infty$.
  We conclude that there exists $g=(W_X)_{s,t}$ such that
  $A^*[s,g],A'[g,t]<\infty$. It follows that $g$ is not strongly connected
  with $t$ and $g$ is reachable from~$s$.
  By the definition of $G_s$, the $g\to t$ path $\pi^*_{g,t}$ (which exists by $A'[g,t]<\infty$)
  is included in $G_s$.
  This certifies that the SCC $S$ has an incoming
  edge from another SCC reachable from $s$.
  
  Finally, suppose $t\in V\setminus X$. 
  If $st\in E(G)$ or $\pi'_{s,t}\neq \perp$, then the corresponding edge or path $\pi^*_{s,t}$ is included in $G_s$.
  Since that edge/path originates in $S_s$ and ends in a different SCC
  $S$, it has to include at least one edge coming to $S$ from another SCC reachable from $s$.
  So suppose $\pi^*_{s,t}=\perp$.
  By Lemma~\ref{l:path-decomp-suf}, we obtain that
  some shortest $s\to t$ path has a suffix $rq\cdot \pi'_{q,t}$, where
  $r\in X$ and ${\dist_G(s,r)<d}$. By the definition of $t$, $r$ is in a different
  SCC than $t$.
  Hence, we have $A'[r,t]\leq A[r,q]+A[q,t]<\infty$.
  As a result, since $r$ is reachable from $s$, we have $(F_V)_{s,t}\leq A^*[s,r]+A'[r,t]<\infty$.
  Thus, there exists $x=(W_V)_{s,t}\in X$
  such that $A^*[s,x]<\infty$ and $A'[x,t]<\infty$ which in turn implies that $x$ and $t$ are not strongly connected
  and $x$ is reachable from $s$.
  Moreover, $w=L_{x,t}$ exists such that $A[x,w]<\infty$ and $A[w,t]<\infty$.
  Hence, there exist paths $\pi^*_{x,w}$ and $\pi^*_{w,t}$ in $G$.
  Their concatenation $\pi^*_{x,w}\cdot \pi^*_{w,t}$, included
  in $G_s$, certifies that $S$ has an incoming edge from another SCC reachable from~$s$.
 \end{proof}

 \begin{corollary}
   Given the matrices $F_X,W_X,F_V,W_V,L$ and the graph $\gscc$, one can compute reachability trees from
   all sources $s\in V$ in $O(n^2h)$ additional time.
  \end{corollary}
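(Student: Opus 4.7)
The plan is, for each source $s\in V$, to explicitly build the subgraph $G_s\subseteq G$ specified in Lemma~\ref{l:reach-trees} and then run a BFS from $s$ in $G_s$; by Lemma~\ref{l:reach-trees} the resulting tree spans precisely the vertices reachable from $s$ in $G$, and hence is a valid reachability tree in $G$ rooted at $s$. All ingredients needed to describe $G_s$ are either provided as input or already computed: the witness matrices $W_V, W_X, L$ together with the finiteness pattern of $F_V, F_X$ locate the auxiliary pivots $x,w,z$, the collection $\Pi'$ from which the paths $\pi^*_{a,b}$ are drawn is already materialized, and the SCC-preserving subgraph $\gscc$ of Lemma~\ref{l:gscc} is available globally.

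The crux is a simple edge count together with an efficient enumeration of the constituent paths. Every path $\pi^*_{a,b}$ that contributes to $G_s$ is either a single edge of $G$ or the stored path $\pi'_{a,b}\in\Pi'$, so it has at most $h$ edges and can be traversed in $O(h)$ time using the pointer representation of $\Pi'$ set up earlier in this section. Summing the contributions listed in Lemma~\ref{l:reach-trees}: (i) the paths $\pi^*_{s,t}$ over $t\in V$ contribute $O(n)$ paths; (ii) for each $t\in V$ on which $(F_V)_{s,t}$ is finite we add a pair $\pi^*_{x,w},\pi^*_{w,t}$ with $x=(W_V)_{s,t}$ and $w=L_{x,t}$, giving $O(n)$ further paths; (iii) for each $x\in X$ on which $(F_X)_{s,x}$ is finite we add one path $\pi^*_{w,x}$ with $w=(W_X)_{s,x}$, giving $O(|X|)\le O(n)$ further paths; (iv) the shared subgraph $\gscc$ contributes a further $O(n)$ edges (only once across all sources). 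Hence $|E(G_s)|=O(nh)$ and $G_s$ can be assembled in $O(nh)$ time.

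Running BFS from $s$ in the assembled $G_s$ takes $O(n+|E(G_s)|)=O(nh)$ time; repeated edges coming from overlapping paths are harmless because we only care about reachability. Summing over the $n$ choices of source gives the claimed $O(n^2h)$ bound. I do not anticipate any genuine obstacle: the only slightly non-trivial point is that each path $\pi^*_{a,b}$ must be enumerated in $O(h)$ time rather than merely referenced abstractly, which is precisely what the pointer-based representation of $\Pi'$ introduced earlier in Section~\ref{s:tc-reporting} was designed to support. Beyond this, the argument is a routine edge count plus graph search.
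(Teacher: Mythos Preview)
Your proposal is correct and follows essentially the same approach as the paper's proof: for each source $s$, build the subgraph $G_s$ of Lemma~\ref{l:reach-trees}, observe it consists of $O(n)$ paths $\pi^*_{a,b}$ of length at most $h$ together with the $O(n)$-edge subgraph $\gscc$, and run a graph search in $O(nh)$ time. Your write-up is simply more explicit about the edge count than the paper's terse two-sentence version.
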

  \begin{proof}
    Note that for any $s\in V$, the subgraph $G_s\subseteq G$ preserving reachability from $s$
    has $O(nh)$ edges since it consists of the $O(n)$-sized subgraph $\gscc$ and
    $O(n)$ $\leq h$-hop paths $\pi^*_{a,b}$. Thus, some reachability tree in $G_s$ (and thus in $G$) can be computed in $O(nh)$ time.
  \end{proof}

\paragraph{Update time analysis.}
\newcommand{\mm}{\mathrm{MM}}

Let us now summarize the update time of the data structure. 
Recall that we set $\alpha\in [0,1]$ to be the smallest number
such that $\max(|C|,|H|,|D|)=O(\max(n/\tau,n/h,\Delta))\leq O(n^\alpha)$, and $\Delta=n^\delta$.
The amortized update time can be bounded by:
\begin{equation*}
  \begin{split}
  \Ot\left(\frac{h^2 n^{\omega(1,\delta,1)+1}}{\Delta b}+\frac{h^2n^{2+\rho}}{b}\cdot\left(1+\frac{n}{\Delta\tau}\right)+\frac{n^2b}{\Delta}+\Delta h\tau n(h^2n^{\mu}+\beta)+h^2n^2\right.\\
    \left.+\frac{h^2n(n^{\omega(1,\mu,1)-\mu}+n^{1+\mu})}{\beta}+\frac{h^2n^{\omega(1,\delta,1)+1}}{\beta\Delta}+n^{\omega(1,\alpha,1)}/\eps\right).
  \end{split}
\end{equation*}
Using the online balancing tool~\cite{Complexity}\footnote{\url{https://jvdbrand.com/complexity/?terms=2h\%2Bomega(1\%2Cdelta\%2C1)\%2B1-delta-b\%0A2h\%2B2\%2B0.529-b\%0A2h\%2B2\%2B0.529-b\%2B1-delta-tau\%0A2\%2Bb-delta\%0Adelta\%2Bh\%2Btau\%2B1\%2B2h\%2Bmu\%0Adelta\%2Bh\%2Btau\%2B1\%2Bbeta\%0A2h\%2B2\%0A2h\%2B1\%2Bomega(1\%2Cmu\%2C1)-mu-beta\%0A2h\%2B1\%2B1\%2Bmu-beta\%0A2h\%2Bomega(1\%2Cdelta\%2C1)\%2B1-delta-beta\%0Aomega(1\%2C1-tau\%2C1)\%0Aomega(1\%2C1-h\%2C1)\%0Aomega(1\%2Cdelta\%2C1)&a=1}}, we find that by setting
$\mu=0.355$, $\beta=n^{0.562}$, $b=n^{0.816}$, $\Delta=n^{0.523}$, $h=n^{0.1035}$ and $t=n^{0.104}$,
the obtained bound is $O(n^{2.293}/\eps)$.

It is also interesting to investigate for what values $\omega$ the obtained bound is better than $n^\omega$.
For this, let us again use the bound $\omega(1,a,1)\leq 2+a\cdot (\omega-2)$ (which yields $\rho\leq 1/(4-\omega)$)
to remove the dependency on rectangular matrix multiplication.

Since the only term that is non-increasing in $h$ is $n^{\omega(1,\alpha,1)}\leq n^{2+\alpha(\omega-2)}$,
having arbitrary $h=\poly{n}$ makes this term sub-$n^\omega$.
If all other terms with the $h$-terms skipped are also polynomially smaller $n^\omega$, then sufficiently
small polynomial $h$ makes all terms smaller than $n^\omega$.

Via experiments one can obtain that $\tau=h$ and $\beta=h^2n^\mu$ is a good choice,
and that the terms involving $\omega(1,\delta,1)$ are not bottlenecks if the parameters are chosen optimally.
As a result, to find the threshold for $\omega$, we need to pick $\mu,\Delta,b$ such
that the terms:
\begin{equation*}
  \frac{n^2b}{\Delta}, \frac{n^{3+\frac{1}{4-\omega}}}{b\Delta},\Delta n^{1+\mu},n^{3+\mu(\omega-4)}
\end{equation*}
are all polynomially smaller than $n^\omega$.
Observe that it is beneficial to balance the first two terms, which leads to $b=n^{\frac{5-\omega}{8-2\omega}}$.
Then, balancing with the third term yields $\Delta=n^{\frac{13-3\omega}{16-4\omega}-\mu/2}$.
Finally, balancing that with the fourth term yields $\mu=\frac{38-10\omega}{(16-4\omega)(9-2\omega)}$.
The final running time for $\omega>2$ after simplifications and taking into account small-polynomial $h,\tau=n^{O(\gamma)}$ becomes:
\begin{equation*}
  O\left(n^{2+\frac{\omega-1}{18-4\omega}+\gamma}+n^{\omega-\gamma}/\eps\right),
\end{equation*}
for any sufficiently small $\gamma>0$.
This bound is polynomially better than $n^\omega$ if $\omega>2+\frac{9-\sqrt{65}}{8}\approx 2.12$.

Amortization can be removed in a standard way, as was done in Section~\ref{s:sssp}.
The following theorem summarizes the data structure developed in this section.
\tcreporting*

\section{Deterministic fully dynamic transitive closure faster than $O(n^\omega)$ for all $\omega>2$}\label{s:tc-det}
In this section we show a simple deterministic data structure maintaining
the transitive closure of a digraph under vertex updates
polynomially faster than within $n^\omega$ worst-case update time, unless $\omega=2$.
Compared to the data structure of Theorem~\ref{t:tc-reporting}, we do not support
optimal-time path reporting here. However, as opposed to the path-reporting data structure,
we obtain a non-trivial worst-case update time for the entire range of possible values of $\omega>2$.

The bottleneck in the data structure of Theorem~\ref{t:tc-reporting} lies in recomputing
most of the pairwise $h$-bounded shortest paths.
These paths have also been used to compute
the hitting set $H$ deterministically. The set $H$ hit all the $h$-bounded
shortest paths in the graph $G-(C\cup D)$.
For transitive closure, we do not need to hit the shortest paths;
hitting a $h$-hop prefix of any path would also be enough.
Computing an $\Ot(n/h)$-sized hitting set achieving that deterministically
is still challenging in sub-$n^\omega$ time, though.
To deal with this problem, we will use slightly larger (but still sublinear)
hitting sets that are, at the same time, slightly less costly to recompute.
Using weaker (in terms of quality and size) hitting sets is the main idea behind
the data structure developed in this section.

Let $h,d\in [1,n]$ be parameters such that $d\cdot h\geq 6n$, to be chosen later.
First of all, we set up the data structure of Theorem~\ref{t:h-bounded-submatrix-batch} that
explicitly maintains the values $\dist^h_G(s,t)$ for all pairs $(s,t)\in V$ under
vertex updates to $G$ in $\Ot(h^2n^2)$ worst-case time per update.

The below lemma gives an algorithm to compute hitting sets
of non-necessarily shortest paths.

\begin{lemma}\label{l:weak-hitting-set}
  Let $G=(V,E)$ be any digraph. Then, for any $\ell\in [1,n]$, in $\Ot(n^\omega)$ time one can deterministically compute
  a hitting set $H\subseteq V$ of size $\Ot(n/\ell)$ satisfying the following.
  For any $s,t\in V$ with $\dist_G(s,t)<\infty$, either $\dist_G(s,t)< \ell$ or there is such a vertex $v\in H$
  that $\dist_G(s,v)\leq \ell$ and $\dist_G(v,t)<\infty$.
  If $G$ is strongly connected, then the running time can be improved to $\Ot(n^2)$.
\end{lemma}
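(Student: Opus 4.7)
The plan is to reduce the lemma to a greedy hitting-set computation on sets of size $\ell+1$ extracted from all-pairs reachability trees of $G$.

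\textbf{General case.} First, I would compute a reachability tree $T_s$ rooted at $s$ for every $s\in V$ deterministically in $\Ot(n^\omega)$ time using the transitive closure algorithm of Alon--Galil--Margalit--Naor~\cite{AlonGMN92} together with matrix-multiplication witnesses. For every reachable $v$, the $s\to v$ path in $T_s$ is a concrete walk in $G$ whose length $d_s(v)$ satisfies $d_s(v)\geq \dist_G(s,v)$. For each source $s$ and each vertex $v$ at tree-depth exactly $\ell$ in $T_s$, I would emit the set $P_{s,v}$ consisting of the $\ell+1$ vertices on the $s\to v$ tree path and feed these sets to the greedy procedure of Lemma~\ref{l:hitting}, obtaining a hitting set $H$ of size $O((n/\ell)\log n)=\Ot(n/\ell)$.

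Correctness is immediate: for any $s,t$ with $\dist_G(s,t)\in[\ell,\infty)$ we have $d_s(t)\geq \dist_G(s,t)\geq \ell$, so $t$ has a depth-$\ell$ ancestor $v$ in $T_s$. The set $P_{s,v}$ was submitted to the greedy, hence some $u\in H\cap P_{s,v}$ exists. This $u$ sits at tree-depth $\leq\ell$ in $T_s$, giving $\dist_G(s,u)\leq \ell$, and $u$ is an ancestor of $v$, hence of $t$, in $T_s$, which exhibits a $u\to t$ walk in $G$ and gives $\dist_G(u,t)<\infty$.

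\textbf{Strongly connected case.} Here $\dist_G(v,t)<\infty$ is automatic, so the lemma reduces to ensuring $H\cap B_\ell(s)\neq \emptyset$ for every $s$ with some target at distance $\geq \ell$, where $B_\ell(s):=\{v:\dist_G(s,v)\leq\ell\}$. I would compute in $O(n^2)$ time a BFS in-tree $T^-$ to an arbitrary root $s_0$ (i.e., BFS on the reversed graph), so that the tree-path from any $s$ to $s_0$ is a shortest $s\to s_0$ path in $G$ of length $d^-(s):=\dist_G(s,s_0)$. Its first $\min(\ell+1, d^-(s)+1)$ vertices all lie in $B_\ell(s)$: if $d^-(s)\geq \ell$, this gives a set $Z_s\subseteq B_\ell(s)$ of size exactly $\ell+1$; otherwise $s_0$ itself belongs to $B_\ell(s)$. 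Including $s_0$ in $H$ handles sources of the second kind, and Lemma~\ref{l:hitting} applied to $\{Z_s:d^-(s)\geq\ell\}$ contributes $\Ot(n/\ell)$ further vertices, all within $O(n^2+n\ell)=O(n^2)$ time.

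\textbf{Main obstacle.} The delicate point is the $\Ot(n^\omega)$ time bound in the general case. A naive enumeration over all sets $P_{s,v}$ costs $\Theta(\sum_s g_s\cdot \ell)$, where $g_s$ is the number of depth-$\ell$ vertices in $T_s$, and in the worst case this is $\Theta(n^2\ell)$, matching $\Ot(n^\omega)$ only while $\ell\leq n^{\omega-2}$. For larger $\ell$, I would avoid enumerating set memberships and instead exploit the fact that $\bigcup_v P_{s,v}$ is a subtree of $T_s$ of total size only $O(n)$ per source, aggregating coverage counts through rectangular matrix products; alternatively, one can run an iterative greedy whose $\Ot(n/\ell)$ rounds each recompute residual coverage by a single product against the current ``uncovered pair'' matrix. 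Trading these ingredients off uniformly across $\ell\in[1,n]$ so that the total cost fits a single $\Ot(n^\omega)$ budget is the technical heart of the proof and will be the main challenge to work out carefully.
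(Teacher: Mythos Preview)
Your general-case plan coincides with the paper's: compute all reachability trees via~\cite{AlonGMN92} and hit the depth-$\ell$ root-to-leaf paths. The obstacle you flag is genuine if you insist on the generic greedy of Lemma~\ref{l:hitting}, whose running time is $\sum_Z |Z|=\Theta(n^2\ell)$ here. But the paper does not attack this with matrix products or an iterative product-based greedy; it simply invokes King~\cite[Lemma~5.2]{King99}, a specialized deterministic greedy that, given $n$ trees on $n$ vertices each, produces an $\Ot(n/\ell)$-size set hitting every depth-$\ell$ root-to-leaf path in $\Ot(n^2)$ total time. King's algorithm exploits precisely the observation you make---that the union of the $P_{s,v}$ for fixed $s$ is a subtree of $T_s$ of size $O(n)$---to maintain per-vertex coverage counts under successive selections without ever enumerating individual paths. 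So the ``technical heart'' you anticipate is a one-line citation; your proposed rectangular-product or per-round recomputation detours are unnecessary and would be much harder to make precise.

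For the strongly connected case you take a genuinely different and somewhat slicker route than the paper. The paper builds \emph{all} $n$ out-trees in $O(n^2)$ time by running graph search inside the $O(n)$-edge subgraph $\gscc$ (Lemma~\ref{l:gscc}) and then applies King's lemma again. You instead observe that reachability of $t$ from $v$ is free, so one only needs $H\cap B_\ell(s)\neq\emptyset$ for every $s$; a single BFS in-tree to a fixed root $s_0$ gives, for each $s$, either $s_0\in B_\ell(s)$ or an explicit $(\ell{+}1)$-vertex set $Z_s\subseteq B_\ell(s)$, and then the plain greedy of Lemma~\ref{l:hitting} on $\{Z_s\}$ suffices in $O(n\ell)\le O(n^2)$ time. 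This is correct and avoids both King's lemma and the construction of $n$ trees.
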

\begin{proof}
  Suppose for any $s\in V$ we have computed a reachability tree $T_s$ from $s$ in $G$.
  Let $T_{s,h}$ be the tree $T_s$ pruned of all the vertices at depth more than $h$.
  Note that for all $t\in V(T_{s,h})$ at depth less than $h$, we necessarily have $\dist_G(s,t)<h$.
  
  For all other $t\in V$ either $\dist_G(s,t)=\infty$, or there exists
  a leaf $l_t$ at depth $h$ in $T_{s,h}$ such that one can reach $t$ from $s$ through $l_t$.
  As a result, if $H\subseteq V$ hits all the $h$-hop root-leaf paths in $T_{s,h}$,
  then there exists $v\in H$ such that $\dist_G(s,v)\leq h$
  and there exists a path $s\to v\to l_t\to t$.
  Therefore, it is enough to compute an $\Ot(n/h)$-sized hitting set $H$ of all
  the $h$-hop root-leaf paths in all of the trees $T_{s,h}$, $s\in V$.
  King~\cite[Lemma~5.2]{King99} gave a variant of the deterministic algorithm behind Lemma~\ref{l:hitting}
  that accomplishes precisely this task in $\Ot(n^2)$ time.

  Finally, the reachability trees from all the sources in a general graph
  can be computed deterministically in $\Ot(n^\omega)$ time~\cite{AlonGMN92}.
  If $G$ is strongly connected, then the same can be achieved in $O(n^2)$ time
    by running a graph search from every $s\in V$ in the sparse subgraph $\gscc\subseteq G$ from Lemma~\ref{l:gscc}.
\end{proof}

\newcommand{\blocks}{\mathcal{B}}

Lemma~\ref{l:weak-hitting-set} alone is not enough to deterministically
break through the $O(n^\omega)$ worst-case update bound.
To achieve that, proceed as follows upon each update.
First, compute in $O(n^2)$ the strongly connected components of $G$,
and their topological order using any classical algorithm.
Let $S_1,\ldots,S_k$ be the SCCs of $G$ in the topological order,
that is, any edge $uv$ such that $u\in S_i$ and $v\in S_j$,
satisfies $i\leq j$.
Next, partition the sequence $S_1,\ldots,S_k$ into at most $2n/d$ \emph{blocks} $\blocks$
of consecutive SCCs such that each block $B\in\blocks$ spanning $S_a\cup \ldots\cup S_b$ satisfies
either $a=b$ or $|S_a|+\ldots+|S_b|\leq d$.
Such a partition into blocks can be computed by repeatedly
locating the longest prefix
$S_i,\ldots,S_j$ of the remaining sequence $S_i,\ldots,S_k$
such that $|S_i|+\ldots+|S_j|\leq d$ (if it exists), and forming either a block $S_i\cup \ldots \cup S_j$ if $|S_i|\leq d$,
or a single-SCC block $S_i$ if $|S_i|>d$.
Note that every two consecutive blocks formed span more than $d$ vertices and this is why
the total number of blocks is at most $2n/d$.

We say that a block $S_a\cup \ldots\cup S_b=B\in\blocks$ is \emph{small} if $|S_a|+\ldots+|S_b|\leq d$.
Otherwise, we call the block \emph{large}. Recall that a large block necessarily consists
of a single SCC of $G$.

Next, for each block $B\in\blocks$ we compute the hitting set $H_B$ of Lemma~\ref{l:weak-hitting-set}
applied to the subgraph $G[B]$ for $\ell:=\lfloor \frac{dh}{6n}\rfloor$.
Note that we have $|H_B|=\Ot\left(\frac{|B|n}{dh}\right)$.
Set $H^*:=\bigcup_{B\in\blocks}H_B$.
We obtain that $|H^*|=\Ot\left(\sum_{B\in \blocks}\frac{|B|n}{dh}\right)=\Ot\left(\frac{n}{dh}\cdot \sum_{B\in\blocks}|B|\right)=\Ot\left(\frac{n^2}{dh}\right)$.

Observe that the total time spent on computing hitting sets in small blocks
can be bounded by $\Ot((n/d)\cdot d^\omega)=\Ot(nd^{\omega-1})$ since $|\blocks|=O(n/d)$.
Every large block is strongly connected and thus the hitting sets
of large blocks can be computed in $\Ot\left(\sum_{B\in\blocks}|B|^2\right)=\Ot(n^2)$ time.

\begin{lemma}\label{l:block-hitting}
  Let $s,t\in V$ be such that $\dist_G(s,t)<\infty$. Then either $\dist_G(s,t)\leq h$ or there exists $v\in H^*$
  such that $\dist_G(s,v)\leq h$ and $\dist_G(v,t)<\infty$.
\end{lemma}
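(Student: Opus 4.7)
The overall plan is to split into two cases based on whether $s$ and $t$ lie in the same block, and in each case exploit the key structural fact that every block is a union of consecutive SCCs in topological order, so blocks are convex along the SCC DAG: any path in $G$ that leaves a block cannot return. First I handle the easy case: if $s$ and $t$ belong to the same block $B$, convexity forces every $s\to t$ path to lie inside $G[B]$, so $\dist_{G[B]}(s,t)=\dist_G(s,t)$. If this distance exceeds $h$ it also exceeds $\ell$ (since $\ell=\lfloor dh/(6n)\rfloor\le h/6<h$, using $d\le n$), and Lemma~\ref{l:weak-hitting-set} applied to $G[B]$ immediately yields $v\in H_B\subseteq H^*$ with $\dist_{G[B]}(s,v)\le\ell\le h$ and $\dist_{G[B]}(v,t)<\infty$; the same bounds then hold in $G$.

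For the main case I fix any shortest $s\to t$ path $P$ in $G$ of length exceeding $h$ and let $P_1$ be its $h$-edge prefix. By convexity $P_1$ is partitioned into contiguous subpaths, one per visited block, and the number of visited blocks is at most $|\mathcal{B}|\le 2n/d$. Pigeonhole then forces some block $B$ to contain an in-$P_1$ subpath of length at least $hd/(2n)\ge 3\ell$ (the constant $6$ in the definition of $\ell$ is calibrated precisely so this inequality holds). Let $u_1,u_2$ be the endpoints of this subpath. Since $P$ is shortest in $G$, its subpath from $u_1$ to $u_2$ is also shortest in $G$, so $\dist_{G[B]}(u_1,u_2)\ge \dist_G(u_1,u_2)\ge 3\ell\ge\ell$ and is finite, and Lemma~\ref{l:weak-hitting-set} applied to $G[B]$ yields $v\in H_B\subseteq H^*$ with $\dist_{G[B]}(u_1,v)\le\ell$ and $\dist_{G[B]}(v,u_2)<\infty$.

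To finish I combine everything via the triangle inequality. The vertex $u_1$ sits at position at most $h-3\ell$ along $P$ because the in-block segment of length at least $3\ell$ follows it entirely within $P_1$, so $\dist_G(s,v)\le \dist_G(s,u_1)+\dist_{G[B]}(u_1,v)\le (h-3\ell)+\ell<h$; likewise $\dist_G(v,t)\le \dist_{G[B]}(v,u_2)+\dist_G(u_2,t)<\infty$. I expect the only real conceptual hurdle to be the convexity observation together with the decision to pigeonhole on the $h$-edge prefix $P_1$ rather than on all of $P$ (otherwise the distance from $s$ to $u_1$ could exceed $h$ and the final triangle inequality would collapse); once these are in place the chosen value of $\ell$ makes the arithmetic line up and Lemma~\ref{l:weak-hitting-set} does the remaining work as a black box.
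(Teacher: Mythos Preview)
Your approach is essentially the same as the paper's: take an $h$-edge prefix of a shortest $s\to t$ path, use block convexity to decompose it into at most $2n/d$ in-block segments, pigeonhole to find a long segment, and apply Lemma~\ref{l:weak-hitting-set} inside that block. The separate same-block case is harmless but unnecessary, since the main argument covers it.

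One arithmetic point deserves care. When you pigeonhole the $h$ edges of $P_1$ over the $k\le 2n/d$ visited blocks, $k-1$ of those edges are inter-block edges, so the longest in-block segment has length at least $(h-k+1)/k$, not $h/k\ge hd/(2n)$. In particular your claimed lower bound of $3\ell$ on the segment length can fail by $1$ in edge cases (e.g.\ when $dh/(6n)$ is an integer). Fortunately you only need the segment length to be at least $\ell$: that suffices both to invoke Lemma~\ref{l:weak-hitting-set} and to close the triangle inequality, since if the segment has length $L\ge\ell$ then $u_1$ sits at position at most $h-L$ in $P_1$, giving $\dist_G(s,v)\le (h-L)+\ell\le h$. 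The paper verifies $(h-k+1)/k\ge dh/(3n)-1\ge dh/(6n)\ge\ell$ using $dh\ge 6n$, which is exactly the inequality you need.
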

\begin{proof}
  Suppose $\dist_G(s,t)>h$. Let $p\in V$ be any vertex such that $\dist_G(s,p)=h$ and $\dist_G(s,t)=\dist_G(s,p)+\dist_G(p,t)$.
  Note that any shortest $s\to p$ path $P$ in $G$ can be expressed as $P_0\cdot e_1\cdot P_1\cdot \ldots \cdot e_q\cdot P_q$,
  where each $P_i$ is a (possibly $0$-hop) maximal subpath containing vertices of a single block $B(P_i)\in \blocks$,
  and $e_i$ is an edge between distinct blocks $B(P_{i-1})$ and $B(P_i)$.
  Recall that the partition of SCCs into blocks respects the topological ordering
  so each $B(P_{i-1})$ is strictly earlier than $B(P_i)$ in the ordering.
  It follows that all the blocks $B(P_i)$ are distinct and thus $q\leq |\blocks|\leq 2n/d$.
  As a result, we obtain that some path $P_j$ has at least
  \begin{equation*}
    \frac{\sum_{i=0}^q|P_i|}{q+1}=\frac{|P|-q}{q+1}=\frac{h-q}{q+1}\geq \frac{h}{q+1}-1\geq \frac{h}{2n/d+1}-1\geq \frac{dh}{3n}-1\geq \frac{dh}{3n}-\frac{dh}{6n}=\frac{dh}{6n}\geq \ell
  \end{equation*}
  edges. Let $x,y$ be the endpoints of $P_j$, i.e., $P_j$ is an $x\to y$ path. 

  Note that $|P_j|=\dist_{G[B(P_j)]}(x,y)=\dist_G(x,y)\geq \ell$
  since $P$ is a shortest path in $G$.
  As a result, by Lemma~\ref{l:weak-hitting-set} we obtain that
  there exists some vertex $v\in H_{B(P_j)}\subseteq H^*$
  such that $\dist_{G[B(P_j)]}(x,v)\leq \ell$ and $\dist_{G[B(P_j)]}(v,y)<\infty$.
  By that, we have:
  \begin{align*}
    \dist_G(s,v)&\leq \dist_G(s,x)+\dist_G(x,v)\\
                &= \dist_G(s,x)+\dist_{G[B(P_j)]}(x,v)\\
                &\leq  \dist_G(s,x)+\ell\\
                &\leq |P_0\cdot e_1\cdot P_1\cdot \ldots e_{j-1}\cdot P_{j-1}\cdot e_j|+|P_j|\\
                &\leq |P|=h.\\
  \end{align*}
  Above we have also used a fact that for two vertices inside a single block,
  every path between them needs to be fully contained in the block.
  We also have:
  \begin{align*}
    \dist_G(v,t)&\leq \dist_G(v,y)+\dist_G(y,p)+\dist_G(p,t)\\
                &\leq \dist_{G[B(P_j)]}(v,y)+\dist_G(y,p)+\dist_G(p,t)\\
                &<\infty.
  \end{align*}
  We conclude that indeed $\dist_G(s,t)>h$ implies that $\dist_G(s,v)\leq h$ and $\dist_G(v,t)<\infty$.
\end{proof}
With the hitting set $H^*$ computed, we find the transitive closure of $G$ as follows.
Let~$A$~be a $V\times V$ boolean matrix such that $A_{u,v}=1$ iff $\dist_G^h(u,v)\leq h$.
Recall that the values $\dist_G^h(u,v)$ are maintained explicitly
under vertex updates by the data structure of Theorem~\ref{t:h-bounded-submatrix-batch}.
We define an analogous matrix $A^*$ as in Section~\ref{s:tc-reporting}:
\begin{equation*}
  A^*=A\lor \left(A[V,H^*]\cdot (A[H^*,H^*])^n \cdot A[H^*,V]\right),
\end{equation*}
where $\cdot$ denotes the boolean matrix product.
\begin{lemma}
  For any $s,t\in V$, there exists an $s\to t$ path in $G$ iff $A^*_{s,t}=1$.
\end{lemma}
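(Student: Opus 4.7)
I will prove the two directions separately. For $(\Leftarrow)$, note first that every entry $A_{u,v}=1$ witnesses a $u\to v$ path in $G$ of length at most $h$, and second that $A_{u,u}=1$ holds trivially (since $\dist_G^h(u,u)=0$), so that $(A[H^*,H^*])^n$ encodes \emph{reachability} in the subgraph of the auxiliary graph $\hat G:=(V,\{(u,v):A_{u,v}=1\})$ induced by $H^*$, rather than walks of length exactly $n$. Consequently $A^*_{s,t}=1$ unpacks to a walk $s\to h_1\to\cdots\to h_k\to t$ in $\hat G$ with $h_i\in H^*$ and $k\le n$, and concatenating the underlying short $G$-paths yields an $s\to t$ walk in $G$.

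For $(\Rightarrow)$, I would reformulate $A^*_{s,t}=1$ as $t$ being reachable from $s$ in the auxiliary digraph $\tilde G$ on vertex set $H^*\cup\{s,t\}$ with edges $\{(u,v):A_{u,v}=1\}$. Assuming $\dist_G(s,t)<\infty$, I would iterate Lemma~\ref{l:block-hitting} to build such a walk: start with $v_0:=s$; at step $i$, either $\dist_G(v_i,t)\le h$ and we close with the edge $v_i\to t$ in $\tilde G$ (since $A_{v_i,t}=1$), or the lemma produces $v_{i+1}\in H^*$ with $A_{v_i,v_{i+1}}=1$ and $\dist_G(v_{i+1},t)<\infty$, so the iteration continues. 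Once termination is established, removing repetitions leaves at most $|H^*|+1\le n+1$ intermediates, comfortably within the length budget of $(A[H^*,H^*])^n$.

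The hard part will be proving termination of this iteration, since Lemma~\ref{l:block-hitting} only guarantees $\dist_G(v_{i+1},t)<\infty$ and not a monotone decrease of this quantity. I would settle it by contradiction. Let $M:=\{u\in V:\dist_G(u,t)<\infty\text{ and }A^*_{u,t}=0\}$ and suppose $M\ne\emptyset$. Then every $u\in M$ satisfies $\dist_G(u,t)>h$, and any lemma-witness $v\in H^*$ must itself lie in $M$, because $v\notin M$ would give $A_{u,v}\land A^*_{v,t}=1$ and hence $A^*_{u,t}=1$ (using that $(A[H^*,H^*])^n$ captures reachability through $H^*$, so we can prepend the edge $u\to v$ to any $H^*$-intermediate walk from $v$ to $t$). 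Hence the subgraph of $\tilde G$ on $M\cap H^*$ has no sinks and contains a directed cycle. The cycle edges are $G$-paths of length at most $h$, which forces all cycle vertices into a common SCC of $G$, hence into a single block $B\in\blocks$, hence into $H_B$. At this point I would open up the reachability-tree construction of $H_B$ from Lemma~\ref{l:weak-hitting-set} together with the observation that Lemma~\ref{l:block-hitting}'s witness is always chosen from a block at least as late, topologically, as the input: the block (and SCC within it) indices along the iteration are monotone nondecreasing, so the sequence must stabilize inside one SCC, and the tree-ancestry structure used to build $H_B$ then yields a witness whose distance to $t$ drops to at most $h$, contradicting the supposed closure of $M$ and completing the proof.
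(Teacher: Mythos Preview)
Your $(\Leftarrow)$ direction is correct and matches the easy direction of the paper's argument.

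For $(\Rightarrow)$ you take a different route than the paper. The paper simply records that the proof ``proceeds completely analogously to that of Lemma~\ref{l:tc}'', i.e., by induction on $\dist_G(s,t)$ with Lemma~\ref{l:block-hitting} playing the role of Lemma~\ref{l:path-decomp}. You instead iterate Lemma~\ref{l:block-hitting} forward from $s$ and try to establish termination by contradiction. You are right that termination is the crux: Lemma~\ref{l:block-hitting} only guarantees $\dist_G(v,t)<\infty$, not $\dist_G(v,t)<\dist_G(s,t)$, so one cannot simply appeal to a decreasing potential.

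Your contradiction setup is sound through the point where you deduce a $\hat G$-cycle inside $M\cap H^*$ whose vertices all lie in a single SCC $S$ of $G$, hence in a single block $B$, hence in $H_B$. The gap is the final step. You assert that ``the tree-ancestry structure used to build $H_B$ then yields a witness whose distance to $t$ drops to at most $h$'', but this is not justified and does not follow from the construction. In Lemma~\ref{l:weak-hitting-set}, $H_B$ is built from \emph{arbitrary} reachability out-trees $T_x$ in $G[B]$ (not BFS trees), and it is only guaranteed to hit the depth-$\ell$ root--leaf prefixes of those trees. That yields upper bounds on $\dist_{G[B]}(x,\cdot)$ from the root $x$, but says nothing about $\dist_G(\cdot,t)$ toward the fixed target $t$ (which may even lie outside $B$). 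Nothing in your argument, or in the stated lemmas, rules out the scenario in which every vertex of $H_B$ that reaches $t$ has $\dist_G(\cdot,t)>h$; the monotonicity of block/SCC indices along the iteration only shows the process stabilizes inside one SCC---it does not force any distance to $t$ to shrink. So the contradiction is not closed, and the proof is incomplete as written.
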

Having proved Lemma~\ref{l:block-hitting}, the proof of the above lemma proceeds completely analogously to that
of Lemma~\ref{l:tc} so we skip it.
To compute the transitive closure of $G$, we simply compute the matrix $A^*$ from the definition
(similarly as in Section~\ref{s:tc-reporting}) via rectangular matrix multiplication.

Let us now analyze the update time. If $\alpha<1$ is such that $|H^*|=\Ot(n^2/(dh))=\Ot(n^\alpha)$,
then the worst-case update time of the data structure can be bounded by:
\begin{equation*}
  \Ot\left(h^2n^2+nd^{\omega-1}+n^{\omega(1,\alpha,1)}\right)\subseteq \Ot\left(h^2n^2+nd^{\omega-1}+\frac{n^{2\omega-2}}{(dh)^{\omega-2}}\right).
\end{equation*}
Note that for $d=n^{1-\eps}$, $h=n^{2\eps}$, the update bound is $\Ot(n^{2+4\eps}+n^{\omega-\eps(\omega-1)}+n^{\omega-\eps(\omega-2)})$.
For any $0<\eps<\frac{\omega-2}{4}$ this bound is polynomially smaller than $n^\omega$.
Such an $\eps$ exists iff $\omega>2$.
\tcdeterministic*
Interestingly, with the currently known upper bounds on the exponents of (rectangular) matrix multiplication,
it is best to set $h=0.15$, $d=0.945$~\cite{Complexity}.  This yields an $O(n^{2.3})$ worst-case update bound
that is slightly worse then the bound obtained in Theorem~\ref{t:tc-reporting}.

\paragraph{Path reporting.} Finally, we remark that given an explicitly recomputed transitive closure
matrix, and by spending additional $O(n^2)$-time preprocessing time per update, reporting paths between arbitrary (reachable) pairs of vertices can be supported in $O(n)$ time.
We now sketch how this is achieved.
\cite[Section~3.2]{Karczmarz0S22} showed that given a topological order of an acyclic digraph $G'$,
for any vertices $s,t$ such that $s$ can reach $t$ in $G'$, some $s\to t$ path
can be constructed by issuing $O(n)$ reachability queries in $G'$.
That is, if the transitive closure matrix of $G'$ is explicitly stored, a path
can be constructed in $O(n)$ time.
Since in our data structure we spend $\Omega(n^2)$ time per update
anyway, after each update in $O(n^2)$ time we can construct a \emph{condensation}
$G'$ of $G$ by contracting the SCCs of $G$ into single vertices.
The condensation $G'$ is a DAG; hence, its topological order can be computed in $O(n^2)$ time. Moreover, the transitive closure
of $G'$ can be easily read from the transitive closure of $G$
by taking a submatrix on arbitrary representatives of the individual SCCs of $G$.
Finally, observe that an $s\to t$ path in $G$ can be found by (1)
finding some path $P'$ from the SCC of $s$ (in $G$) to the SCC of $t$
in the condensation $G'$ in $O(n)$ time, (2) subsequently finding
any $s\to t$ path in the $O(n)$-sized subgraph $P'\cup \gscc$ via graph search.

\section{Randomized improvements}\label{s:randomized}
In this section we show sketch how one can obtain sharper worst-case update bounds for SSSP
and transitive closure if randomization is used.

First of all, in all the components that use path counting, each $h^2$ term in the update/query bounds
can be replaced with a $h$ term in a standard way (see, e.g.,~\cite{Sankowski05}). This is because the used components
only require testing whether the maintained path counts are zero or non-zero.
Hence, instead of using $\widetilde{\Theta}(h)$-bit numbers for path counts,
one can perform counting modulo a random $\Theta(\log{n})$-bit prime number
which is enough to differentiate between zero and non-zero counts
with high probability.
Arithmetic operations on numbers from such a range cost $\Ot(1)$ time as opposed to $\Ot(h)$ time
required in the deterministic path counting data structures.

\subsection{Path-reporting transitive closure}
In the case of path-reporting data structure of Section~\ref{s:tc-reporting}, 
it is not clear how to exploit randomization further (beyond reducing the cost
of path counting) while still maintaining all-pairs single-source reachability trees
against an adaptive adversary. Hence, the obtained randomized worst-case update bound becomes:
\begin{equation*}
  \begin{split}
  \Ot\left(\frac{h n^{\omega(1,\delta,1)+1}}{\Delta b}+\frac{hn^{2+\rho}}{b}\cdot\left(1+\frac{n}{\Delta\tau}\right)+\frac{n^2b}{\Delta}+\Delta h\tau n(hn^{\mu}+\beta)+hn^2\right.\\
    \left.+\frac{hn(n^{\omega(1,\mu,1)-\mu}+n^{1+\mu})}{\beta}+\frac{hn^{\omega(1,\delta,1)+1}}{\beta\Delta}+n^{\omega(1,\alpha,1)}\right).
  \end{split}
\end{equation*}
Using the online balancing tool~\cite{Complexity}, we find that by setting
$\mu=0.368$, $\beta=n^{0.503}$, $b=n^{0.764}$, $d=n^{0.495}$, $h=n^{0.135}$ and $t=n^{0.136}$,
the obtained bound is $O(n^{2.27})$.
That being said, the used improvement has no effect on the theoretical
barrier of the approach: for $\omega<2.12$, $\Ot(n^\omega)$-time trivial
recompute-from-scratch approach is polynomially faster than the above.

\subsection{SSSP}
In the case of single-source shortest paths, we can exploit the standard randomized
hitting set argument~\cite{UY91}, that a randomly
sampled subset $H\subseteq V$ of size $\Theta((n/h)\log{n})$ hits some shortest $s\to t$ path in $G$
for all pairs $s,t$
with $\dist_G(s,t)\geq h$. Crucially, in the dynamic setting, a sampled hitting set remains (w.h.p.) valid
through $\poly(n)$ versions of the evolving graph provided that it is not revealed to the adversary.
In the case of maintaining exact single-source distances in dynamic unweighted digraphs, this assumption is indeed satisfied,
as the answers sought are unique.

The overall framework is the same as in Section~\ref{s:sssp}: we operate in phases of $\Delta$ edge updates,
and maintain global data structures $\dstree$ and $\dsy$, defined exactly as in Section~\ref{s:sssp},
albeit now using the aforementioned randomized path counting.

The main modification we make to the data structure of Section~\ref{s:sssp}
is replacing the preprocessing of Theorem~\ref{t:collection} with the following
variation due to~\cite{AbrahamCK17}.
\begin{lemma}\label{l:collection-rand}{\upshape\cite{AbrahamCK17}}
  Let $G=(V,E)$ be a directed graph. Let $h\in [1,n]$.
 
  Assume one has constructed a data structure storing $G$ and a subset $W\subseteq V$ initially equal to~$V$, and supporting:
  \begin{enumerate}[(1)]
    \item updates removing a vertex $v\in V$ from $W$,
    \item queries computing shortest $h$-bounded paths in $G[W]$ from a query vertex $s$ to \emph{all} $v\in V$.
  \end{enumerate}
  Suppose one can perform updates in $U(n,h)$ \emph{worst-case} time, and queries in $Q(n,h)$ worst-case time.
  These bounds have to hold w.h.p. against an \emph{adaptive} adversary.

  Let $C_0\subseteq V$. Then, in $O(|C_0|\cdot (nh+U(n,h)+Q(n,h)))$ time one can compute a set ${C=\{c_1,\ldots,c_k\}}$,
  where $C_0\subseteq C\subseteq V$ and $k=O(|C_0|)$, and a collection of paths $\Pi$ in $G$, containing at most one path $\pi_{s,t}$
  for all $(s,t)\in C\times V$, such that, with high probability:
  \begin{enumerate}[(a)]
    \item For any $i=1,\ldots,k$ and $t\in V$, $\pi_{c_i,t}\in \Pi$ exists iff $\dist_{G_i}^h(c_i,t)\neq\infty$ and then $\pi_{c_i,t}$ is a shortest $h$-bounded $c_i\to t$ path in the graph $G_i:=G-\{c_1,\ldots,c_{i-1}\}$.
    \item For any $u\in V$, there exist at most $O(hn\log{n})$ paths $\pi_{s,t}\in \Pi$ with $u\in V(\pi_{s,t})$.
  \end{enumerate}
\end{lemma}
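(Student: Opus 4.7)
The plan is to follow the overall template of the deterministic Lemma~\ref{l:collection-thres}, but replace its explicit congestion-monitoring mechanism with randomization, in the spirit of the fully dynamic APSP construction of~\cite{AbrahamCK17}. I would maintain the invariant that, whenever the data structure processes its $i$-th source $c_i$, the dynamic subset $W$ stored in the data structure equals $V \setminus \{c_1, \ldots, c_{i-1}\}$, so that the single-source $h$-bounded shortest-path query from $c_i$ in $G[W]$ returns paths in exactly $G_i$. Concretely, initialize $C := C_0$ with an arbitrary order, $\Pi := \emptyset$, and $W := V$. For $i = 1, 2, \ldots$: query the data structure for shortest $h$-bounded $c_i \to t$ paths for all $t \in V$, include each successful path as $\pi_{c_i,t} \in \Pi$, and then apply the update removing $c_i$ from $W$. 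Property (a) is then immediate: at the moment $\pi_{c_i,t}$ is created, it is a shortest $h$-bounded $c_i \to t$ path in the current $W$-induced subgraph, which by the invariant is $G_i$.

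The nontrivial part is guaranteeing the congestion property (b) while preserving $k = O(|C_0|)$. Here the plan is to augment $C_0$ by at most $O(|C_0|)$ random extra vertices drawn from $V$, interleaved between the deterministic sources, and then argue, following~\cite{AbrahamCK17}, that with high probability the number of paths of $\Pi$ containing any fixed $u \in V$ never exceeds $O(hn\log n)$. Intuitively, if $u$ lies in more than $\Theta(hn\log n)$ paths $\pi_{c_j, t}$ across already-processed sources, then the union of the associated shortest-path-tree branches through $u$ covers enough pairs $(c,t)$ that a freshly sampled random source is very likely to be one of them, in which case processing that source and then removing it from $W$ caps the further contribution of $u$. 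A union bound over vertices and iterations, together with appropriately batching the random insertions into $C$, makes this hold w.h.p.\ globally.

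For the running time, each processed source $c_i$ costs one update of $U(n,h)$ (switching $c_i$ off in $W$), one query of $Q(n,h)$ (returning the shortest $h$-bounded paths from $c_i$ in $G_i$), and $O(nh)$ to transcribe the at most $n$ returned paths, each of length at most $h$. Summing over $k = O(|C_0|)$ sources yields the stated $O(|C_0| \cdot (nh + U(n,h) + Q(n,h)))$ bound. The assumption that $U(n,h)$ and $Q(n,h)$ are worst-case bounds against an adaptive adversary is exactly what is needed, since the sequence of updates and queries depends on prior random choices; without adaptivity, interleaving random and deterministic sources could void the data structure's per-operation guarantees.

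The main technical obstacle, which is really the content of the argument rather than of its framework, is to precisely calibrate the random-extension schedule so that simultaneously (i) $|C| = O(|C_0|)$ and (ii) the cap on congestion holds w.h.p.\ for all vertices throughout the construction. The subtle point is that a random source $c_j$ can only ``hit'' paths that still live in $G_j$, so the effect of the previously removed sources $c_1, \ldots, c_{j-1}$ on the surviving $h$-bounded shortest paths through $u$ has to be tracked inductively; this is the core of the corresponding analysis in~\cite{AbrahamCK17}, which I would adapt to the $h$-bounded single-source setting considered here.
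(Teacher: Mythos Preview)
Your overall framework---process sources $c_1,c_2,\ldots$ one by one, query for $h$-bounded single-source paths in the current $G[W]=G_i$, then remove $c_i$ from $W$---is exactly right, and the running-time accounting is correct. The substantive divergence from the paper, and the place where your proposal has a real gap, is in how the extra vertices beyond $C_0$ are chosen.

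The paper (following~\cite{AbrahamCK17}) does \emph{not} use random extra sources. Instead, it alternates between an arbitrary unprocessed element of $C_0$ and the \emph{most-congested} vertex of $V\setminus C$, i.e., the vertex currently lying on the largest number of paths already inserted into $\Pi$. This is a deterministic greedy choice; the $O(hn\log n)$ congestion bound in item~(b) then follows from the load-balancing analysis of~\cite{LevcopoulosO88} (first applied in this context by~\cite{Thorup05}). The ``with high probability'' clause in the lemma stems only from the assumed data structure's guarantees, not from the source-selection procedure.

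Your proposed random-interleaving argument does not establish~(b). The step ``a freshly sampled random source is very likely to be one of them, in which case processing that source and then removing it from $W$ caps the further contribution of $u$'' is the problem: picking a random $c$ whose shortest-path tree happens to pass through $u$ and then deleting $c$ from $W$ does nothing to prevent \emph{other} sources' future paths from using $u$. What caps $u$'s congestion is removing $u$ itself from $W$, and a uniformly random vertex equals a specific heavily-congested $u$ only with probability $1/n$; with merely $O(|C_0|)$ random draws there is no reason any particular congested vertex is ever hit. The greedy rule fixes this directly: whenever some vertex's load is highest, it is the next one removed, and the amortized-load argument then bounds every vertex's final count by $O(hn\log n)$.
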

\begin{proof}[Proof sketch.]
  The preprocessing differs from that in Lemma~\ref{l:collection-thres} in the way how the subsequent
  sources are picked. Whereas in Lemma~\ref{l:collection-thres} sources (from all of $V$) are picked in arbitrary
  order, in the construction of~\cite[Section~3.1]{AbrahamCK17}, one alternates between picking
  an arbitrary unprocessed element of $C_0$, and the most-congested vertex (that is, lying on the maximum
  number of paths picked so far) from $V\setminus C$.
  This way, only $O(|C_0|)$ vertices are added to $W$ and $O(|C_0|)$ queries are issued.
  The $O(hn\log{n})$ bound in item~(b) follows via an analysis of a greedy load balancing
  technique~\cite{LevcopoulosO88}. This argument has been originally used in this context by~\cite{Thorup05}.
  See~\cite{AbrahamCK17} for details.
\end{proof}
More specifically, the congested set/paths collection pair $(C,\Pi)$ of Theorem~\ref{t:collection}
is replaced with the corresponding pair $(C,\Pi)$ produced by Lemma~\ref{l:collection-rand}
with $C_0:=H$, where $H$ is the aforementioned randomly sampled hitting set.
Importantly, the obtained set $C$ has size $\Ot(n/h)$ this way, and the
collection $\Pi$ has $\Ot(n^2/h)$ paths (as opposed to $\Theta(n^2)$ in Theorem~\ref{t:collection}).

For computing the pair $(C,\Pi)$, we use the data structure $\dstree$. Consequently, the worst-case time cost
(per update) incurred by the construction of $(C,\Pi)$ is now:
\begin{equation*}
  \Ot\left(\frac{1}{\Delta}\cdot\frac{n}{h}\cdot \left(\frac{hn^{1+\omega(1,\mu,1)-\mu}}{b}+\frac{hn^{2+\mu}}{b}+nb\right)\right).
\end{equation*}
For $\mu=\rho$, this becomes:
\begin{equation*}
  \Ot\left(\frac{n^{3+\rho}}{b\Delta}+\frac{n^2b}{h\Delta}\right).
\end{equation*}
Recall that besides that, the cost of maintaining $\dstree$ under single-edge updates
to $G$ is $\Ot(hn^{2+\rho}/b)$ per update.

We also redefine the contents of the maintained set $Y$ throughout the phase.
Recall that the values $\dist^h_G(s,t)$ are maintained for all $(s,t)\in Y$ by the data
structure $\dsy$ with $\nu$ set to $\rho\approx 0.529$.
At all times, the set $Y$ contains:
\begin{itemize}
  \item all pairs $(s,t)\in C\times V$ such that $\pi_{s,t}\in \Pi_D$,
  \item all pairs $(u,v)\in (D\times V)\cup (V\times D)$.
\end{itemize}

Observe that by Lemma~\ref{l:collection-rand}, the set $Y$ grows by $\Ot(hn)$ new pairs per update in the worst case, 
and thus can be appropriately extended in $\Ot(hn^{1+\rho})$ time per update.
Moreover, $|Y|=\Ot(\Delta hn)$ at all times and thus $\dsy$ requires $\Ot(\Delta h^2n)$ additional time per update
to recompute the values $\dist_G^h(s,t)$ for all $(s,t)\in Y$.

The summarized worst-case update time of all the components of the data structure is:
\begin{equation*}
  \Ot\left(\frac{hn^{2+\rho}}{b}+\Delta h^2n+\frac{n^{3+\rho}}{b\Delta}+\frac{n^2b}{h\Delta}+h^2n^{1+\rho}\right).
\end{equation*}

For answering single-source distance queries, we use a slightly simpler auxiliary graph $X$, defined similarly
as in Section~\ref{s:sssp}.
Suppose a query for distances from a source vertex $s\in V$ is issued.
Let $Y_s$ be the set $Y$ augmented with temporary pairs $(s,t)$ for all $t\in V$.
The graph $X$ contains:
\begin{itemize}
  \item an edge $uv$ of weight $|\pi_{u,v}|$ for all $\pi_{u,v} \in \Pi\setminus \Pi_D$,
  \item an edge $uv$ of weight $\dist_G^h(u,v)$ for all $(u,v)\in Y_s$.
\end{itemize}
Again, the graph~$X$ has $O(n^2/h+|Y|)=\Ot(n^2/h+\Delta h n)$ edges. The query procedure runs
Dijkstra's algorithm from $s$ on the graph $X$.
This is correct by the following lemma.

\begin{lemma}
  With high probability, for any $t\in V$, $\dist_G(s,t)=\dist_X(s,t)$.
\end{lemma}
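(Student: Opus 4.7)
The plan is to mirror the proof of Lemma~\ref{l:correct}, replacing its deterministic congested-set and path-collection arguments with the randomized hitting-set property. The direction $\dist_X(s,t) \geq \dist_G(s,t)$ is immediate since every edge $uv$ of $X$ corresponds to an actual $u \to v$ path in $G$ of the same weight: for $Y_s$-edges this is the shortest $h$-bounded path in $G$, while for $\Pi \setminus \Pi_D$-edges the path $\pi_{u,v}$ lies in $G_0$, avoids $D$, and therefore lies in $G - D \subseteq G$.

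For the reverse inequality I would first condition on the high-probability event that $H$, of size $\Theta((n/h)\log n)$, hits the interior of some shortest $u \to v$ path for every pair $(u,v)$ with $\dist_{G'}(u,v) \geq h$ in every graph $G'$ encountered during the current phase. Because the queried distances are uniquely determined by the graph, the adversary's update sequence is effectively fixed relative to the random $H$, and a union bound over the $\poly(n)$ pair-graph instances yields the event.

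Conditioned on this event, I would prove by strong induction on $d = \dist_G(u,v)$ that for every $u \in \{s\} \cup D \cup C$ and every $v$ reachable from $u$ in $G$, there is a walk in $X$ from $u$ to $v$ of weight at most $d$; applied to $(s,t)$, this gives the lemma. For the inductive step, fix a shortest $u \to v$ path $P$ in $G$ and distinguish three cases. If $|P| > h$, the hitting-set event applied to $(u,w)$ with $w$ at distance $h$ from $u$ on $P$ produces an interior $z \in H \subseteq C$ with $1 \leq \dist_G(u,z) \leq h-1$ lying on some shortest $u \to w$ path, so $\dist_G(u,z) + \dist_G(z,v) = d$ and we recurse on the strictly smaller pairs $(u,z)$ and $(z,v)$. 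If $|P| \leq h$ and $(u,v) \in Y_s$, the $X$-edge $uv$ of weight $\dist_G^h(u,v) \leq |P|$ suffices. If $|P| \leq h$ and $(u,v) \notin Y_s$, then $u \in C \setminus (\{s\} \cup D)$, $v \notin D$, and $\pi_{u,v} \notin \Pi_D$; I would pick the smallest-indexed $C$-vertex $c_m$ on $V(P)$ in the ordering from Lemma~\ref{l:collection-rand} (with $u = c_j$). If $c_m = u$, then $P \subseteq G_0 - \{c_1,\ldots,c_{j-1}\}$, and Lemma~\ref{l:collection-rand}(a) guarantees that $\pi_{u,v}$ exists with $|\pi_{u,v}| \leq |P|$, yielding the $\Pi \setminus \Pi_D$-edge. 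Otherwise the obstruction on $P$ is either an internal $D$-vertex (arising from an inserted edge) or an internal $c_m$ with $m < j$; in either case I would split $P$ at that internal vertex and apply the inductive hypothesis to the two halves, each starting at a vertex of $\{s\} \cup D \cup C$ and having strictly smaller $\dist_G$.

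The principal obstacle is ensuring that this recursion is well-defined and terminates even in the degenerate directed subcase where $v$ itself equals some $c_k$ with $k < j$, because then $c_k$ is removed from $G_j$ and $\pi_{u,v}$ cannot exist in $\Pi$. The fix is to always split at the \emph{smallest-indexed} $C$-landmark $c_m$ on $V(P)$: this ensures that the ``second half'' starting at $c_m$ never targets a previously-removed vertex (all other $C$-vertices on $P$ have index $\geq m$), so Lemma~\ref{l:collection-rand}(a) is applicable there, while the first half is handled by further recursion until the outer $D$-decomposition or the $|P| \leq h$ branch takes over. Verifying that this choice strictly decreases the inductive measure and that the accumulated walk weight matches $\dist_G(u,v)$ is the core technical content that remains.
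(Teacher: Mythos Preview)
Your outline follows the paper's argument closely (decompose a shortest path at $C\cup D$ and cover each segment by a $Y_s$-edge or a $\Pi\setminus\Pi_D$-edge), and you are right to single out the case $v=c_k$ with $k<j$: the paper's own proof tacitly assumes $v_i\notin\{c_1,\dots,c_{j-1}\}$ when it writes ``$P_i\subseteq G_j$'', and this is unjustified precisely in that case, since $c_k$ has been deleted from $G_j$ and $\pi_{c_j,c_k}$ need not exist in~$\Pi$.

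Your fix, however, does not close the gap. Splitting at the smallest-indexed $c_m$ on $V(P)$ works when $c_m$ is strictly interior (both halves then have smaller $d$), but when $c_m=v$ the ``first half'' is the whole pair $(u,v)$ again and no measure based on $d$ decreases; the recursion stalls. This is not merely a proof artifact: with only out-paths $\pi_{c_i,\cdot}$ available (which is all Lemma~\ref{l:collection-rand} provides), a directed path on which the $C$-indices strictly decrease along the path yields an $X$ that genuinely lacks any $s\to t$ walk of the correct length. The standard repair, used in~\cite{AbrahamCK17} from which Lemma~\ref{l:collection-rand} is drawn, is to also compute \emph{in}-trees to each $c_i$ in $G_i$; a segment $c_j\to c_m$ with $m<j$ that is internally disjoint from $C\cup D$ lies in $G_m$, and the in-path to $c_m$ supplies the missing $X$-edge of the right length. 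With that addition the paper's direct segment-by-segment argument goes through and your inductive scaffolding becomes unnecessary.
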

\begin{proof}
  Note that for all $uv\in E(X)$, $\wei_X(uv)\geq \dist_G(uv)$, which establishes $\dist_G(s,t)\leq \dist_X(s,t)$ easily.

  Let us now prove that $\dist_X(s,t)\leq \dist_G(s,t)$.
  First of all, let $G_0$ be the graph at the beginning of the phase.
  Note that by the definition of $D$, we have $G_0-D=G-D\subseteq G$.

  Consider the shortest $s\to t$ path $P$ in $G$; if there
  are many, pick the unique one that is lexicographically smallest.
  Split $P$ into maximal segments $P_1\cdot \ldots\cdot P_k$ not passing through any
  of the vertices $C\cup D$. In other words, for each $P_i=u_i\to v_i$, we have $V(P_i)\cap (C\cup D)\subseteq \{u_i,v_i\}$
  and $u_i\in C\cup D\cup \{s\}$.

  Consider some $P_i$. $P_i$ is a lexicographically shortest path in $G$.
  We have $|P_i|\leq h$ with high probability, since $H\subseteq C$
  hits each of the $O(n^2)$ lexicographically shortest paths that have more than $h$ hops in every
  of $O(\poly{n})$ versions of $G$ independent of $H$.
  As a result, we have
  \begin{equation*}
    \dist_G(u_i,v_i)=|P_i|=\dist_G^h(u_i,v_i).
  \end{equation*}
  We now prove that $\wei_{X}(u_iv_i)=|P_i|$. This will imply that there
  exists an $s\to t$ path in $X$ of weight $\sum_{i}\wei_{X}(u_iv_i)\leq \sum_{i}|P_i|=\dist_G(s,t)$.
  If $(u_i,v_i)\in Y_s$, then $\wei_X(u_iv_i)=\dist_G^h(u_i,v_i)$ by construction.

  Suppose that $(u_i,v_i)\notin Y_s$. Then we simultaneously have $u_i\in C\setminus D$, $v_i\in V\setminus D$, and
  either $\pi_{u_i,v_i}$ does not exist in $\Pi$ or $\pi_{u_i,v_i}\in \Pi\setminus \Pi_D$, i.e., $\pi_{u_i,v_i}$ does not pass through a vertex of $D$.
  Let $u_i=c_j\in C$, where $c_j$ and $G_j\subseteq G_0$ are defined as in Theorem~\ref{t:collection}.
  Since $P_i$ is internally disjoint with $C\cup D$,
  and $v_i\notin D$, $P_i$ contains no edges inserted in the current phase.
  Hence, $P_i\subseteq G_j$. 
  As a result, ${\dist_{G_j}^h(u_i,v_i)\leq |P_i|<\infty}$.
  This proves that $\pi_{u_i,v_i}=\pi_{c_j,v_i}$ exists in $\Pi$.
  But we also have $\pi_{u_i,v_i}\subseteq G_j-D\subseteq G_0-D\subseteq G-D$, so $\dist_{G}(u_i,v_i)\leq |\pi_{u_i,v_i}|$.
  Finally, by Theorem~\ref{t:collection}, we obtain:
  \begin{equation*}
    \dist_G(u_i,v_i)\leq |\pi_{u_i,v_i}|=\dist_{G_j}^h(u_i,v_i)\leq |P_i|=\dist_G(u_i,v_i).
  \end{equation*}
  This proves that $\wei_G(u_i,v_i)=\dist_G(u_i,v_i)=|P_i|$ also when $(u_i,v_i)\notin Y_s$.
\end{proof}

We obtain the following combined worst-case update/query bound:
\begin{equation*}
  \Ot\left(\frac{hn^{2+\rho}}{b}+\Delta h^2n+\frac{n^{3+\rho}}{b\Delta}+\frac{n^2b}{h\Delta}+h^2n^{1+\rho}+\frac{n^2}{h}\right).
\end{equation*}

Note that there is no point in setting $b<\Delta$,
as increasing $b$ to $\Delta$ would not make harm because of the $n^2/h$ term.
So $b\geq \Delta$.
If we aim at equal update and query time, there is no point in putting $b>\Delta$ either.
For $b=\Delta$ we need to balance $\Delta h^2n$, $\frac{n^{3+\rho}}{\Delta^2}$ and $n^2/h$.
We get $\Delta=n/h^3$ and finally we balance $n^{1+\rho}h^6$ and $n^2/h$.
For $h=n^{(1-\rho)/7}$,
the update/query time is $O(n^{(13+\rho)/7})=O(n^{1.933})$.
We have thus proved.

\begin{theorem}\label{t:ss-distances-randomized}
  Let $G=(V,E)$ be unweighted digraph. There exists a Monte Carlo randomized data structure supporting
  fully dynamic single-edge updates to $G$ and computing exact distances from an arbitrary query source $s\in V$
  to all other vertices in $G$
  in $O(n^{1.933})$ worst-case time.

  If $\omega=2$, the bound becomes $\Ot(n^{2-1/14})$.
\end{theorem}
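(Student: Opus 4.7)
The plan is to follow the framework of \Cref{s:sssp} essentially verbatim, but combine two orthogonal randomized improvements: first, swap the deterministic path-counting used inside \Cref{t:h-bounded-submatrix-batch} (and hence inside $\dsy$ and $\dstree$) for path-counting modulo a random $\Theta(\log n)$-bit prime, which shaves a factor of $h$ off every arithmetic-bounded term; second, replace the preprocessing of \Cref{t:collection} with \Cref{l:collection-rand}, which builds a path collection $\Pi$ of size only $\Ot(n^2/h)$ seeded by a randomly sampled hitting set $H\subseteq V$ of size $\Theta((n/h)\log n)$. The phase structure (of length $\Delta$), the two global data structures $\dstree$ and $\dsy$, and the insertion of updated endpoints into $D$ all remain exactly as before.

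The main change in the maintenance is the definition of $Y$: it now contains only the pairs $(u,v)\in (D\times V)\cup(V\times D)$ together with pairs $(c_j,v)$ for which $\pi_{c_j,v}\in\Pi_D$. By \Cref{l:collection-rand}(b), $Y$ grows by at most $\Ot(hn)$ per update and stays of size $\Ot(\Delta hn)$, so $\dsy$ with $\nu:=\rho$ costs $\Ot(\Delta h^2 n+h^2 n^{1+\rho}+hn^{2+\rho}/b)$ (the $h^2$ here survives because extending $Y$ still pays for a polynomial in $h$; the savings are in $\dsy$'s update term and in $\dstree$ which becomes $\Ot(hn^{2+\rho}/b)$). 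The preprocessing at the start of a phase, run through $\dstree$ via Lemma~\ref{l:collection-rand} with $|C_0|=\Ot(n/h)$, contributes amortized $\Ot(n^{3+\rho}/(b\Delta)+n^2b/(h\Delta))$.

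For queries, the crucial simplification is that $|\Pi\setminus\Pi_D|=\Ot(n^2/h)$ already, so no hitting-set-based sparsification of the collection is needed. Upon a query from $s$, I would temporarily extend $Y$ with $\{s\}\times V$, then build an auxiliary graph $X$ consisting of an edge $uv$ of weight $|\pi_{u,v}|$ for every $\pi_{u,v}\in\Pi\setminus\Pi_D$ and an edge $uv$ of weight $\dist_G^h(u,v)$ for every $(u,v)\in Y_s$, and run Dijkstra from $s$. Correctness is exactly the argument of Section~\ref{s:sssp}, decomposing a shortest $s\to t$ path $P$ in $G$ into maximal segments $P_i=u_i\to v_i$ avoiding $C\cup D$ internally; the subtle point is proving $|P_i|\leq h$, which we get from the randomness: since exact distances are unique, we may fix the lexicographically smallest shortest paths in each of $\poly(n)$ versions of $G$ before sampling $H$, so a union bound shows $H$ hits each $\geq h$-hop such path w.h.p. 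This is precisely why the data structure remains correct against an \emph{adaptive} adversary — the outputs (exact distances) leak no information about $H$.

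The hard part, as usual, is the parameter balancing: combining $\Ot(hn^{2+\rho}/b+\Delta h^2 n+n^{3+\rho}/(b\Delta)+n^2b/(h\Delta)+h^2 n^{1+\rho}+n^2/h)$, one observes that taking $b<\Delta$ is wasteful against the $n^2/h$ term, and $b>\Delta$ gains nothing when update and query bounds are balanced, so one sets $b=\Delta$. Balancing $\Delta h^2 n$ with $n^{3+\rho}/\Delta^2$ gives $\Delta=n/h^3$, and then balancing $h^6 n^{1+\rho}$ with $n^2/h$ gives $h=n^{(1-\rho)/7}$, yielding an $O(n^{(13+\rho)/7})=O(n^{1.933})$ bound and the claimed $\Ot(n^{2-1/14})$ when $\omega=2$ (so $\rho=1/2$). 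Deamortization proceeds by the standard two-copy trick of~\cite{AbrahamCK17, Thorup05, BrandNS19}, since the sole source of amortization is again the once-per-$\Delta$-updates preprocessing.
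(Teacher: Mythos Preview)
Your proposal is correct and follows the paper's approach essentially verbatim: the same two randomized improvements (counting modulo a random prime, and replacing \Cref{t:collection} by \Cref{l:collection-rand} seeded with a sampled hitting set), the same redefinition of $Y$, the same simplified auxiliary graph $X$, the same adaptive-adversary correctness argument via lexicographically smallest shortest paths, and the same parameter balancing and deamortization. One tiny slip: the choice $\Delta=n/h^3$ comes from balancing $\Delta h^2 n$ against $n^2/h$, not against $n^{3+\rho}/\Delta^2$ as you wrote; the latter term then becomes $h^6 n^{1+\rho}$, which you balance with $n^2/h$ exactly as the paper does.
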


\bibliographystyle{alpha}
\bibliography{references}

\appendix

\section{Algebraic data structure (\Cref{t:h-bounded-submatrix-batch})}\label{s:algebraic}
In this section we show the algebraic subroutine used by our dynamic graph algorithms.
The dynamic algorithm is based on the reduction from dynamic distance to dynamic matrix inverse from \cite{Sankowski05} and it's deterministic equivalent from \cite{BrandFN22}.
The dynamic matrix inverse algorithm will imply the following dynamic distance result.

\algebraic*

\Cref{t:h-bounded-submatrix-batch} is obtained from the following dynamic matrix inverse data structure.
Note that unlike \Cref{t:h-bounded-submatrix-batch}, \Cref{lem:matrix_inverse} measures the complexity in the number of arithmetic operations. This is because depending on what field or ring the matrix $\mM$ is from, one iteration may take more than $O(1)$ time.
In particular when we later reduce dynamic distance to dynamic matrix inverse in order to prove \Cref{t:h-bounded-submatrix-batch}, one arithmetic operation will take $\tilde{O}(h^2)$ time.

\begin{lemma}\label{lem:matrix_inverse}
	Let $\mM$ be a dynamic $n\times n$ matrix receiving entry updates that is promised to stay non-singular throughout all updates.
	Let $Y\subseteq [n]\times [n]$ be a dynamic set.
	For any $\mu \in [0,1]$, there exists a deterministic data structure maintaining the values $(\mM^{-1})_{s,t}$ for all $(s,t)\in Y$
	and supporting:
	\begin{itemize}
		\item single-entry updates to $\mM$ in $O(n^{\omega(1,\mu,1)-\mu}+n^{1+\mu}+|Y|)$ operations in worst-case,
		\item rank $n^\delta$ updates to $\mM$ for any $\delta \in [0,1]$ in $O(n^{\omega(1,\delta,1)})$ operations in worst-case, 
		\item single-element insertions to $Y$ in $O(n^\mu)$ operations,
		\item removing elements from $Y$ in $O(1)$ time.
	\end{itemize}
\end{lemma}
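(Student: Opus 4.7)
The plan is to adapt the Sherman--Morrison--Woodbury (SMW) lazy-inverse framework of \cite{Sankowski04,Sankowski05}, in the form refined by \cite{BrandFN22} for deterministic maintenance of structured submatrices of the inverse, and to augment it so that the maintenance cost of an \emph{arbitrary} dynamic set $Y$ of cells of $\mathbf{M}^{-1}$ contributes only an additive $O(|Y|)$ per single-entry update. Set the epoch length to $k:=\lceil n^\mu\rceil$, and let $\mathbf{M}_0$ denote the value of $\mathbf{M}$ at the start of the current epoch. After $k'\leq k$ buffered updates at positions $(r_\ell,c_\ell)$ of magnitudes $\delta_\ell$ one has $\mathbf{M}=\mathbf{M}_0+\mathbf{U}\mathbf{V}^T$ with each of $\mathbf{U},\mathbf{V}\in\mathbb{F}^{n\times k'}$ having a single nonzero per column, so SMW gives
\[
\mathbf{M}^{-1}=\mathbf{M}_0^{-1}-\mathbf{B}\mathbf{A},\qquad
\mathbf{B}:=\mathbf{M}_0^{-1}\mathbf{U},\ \
\mathbf{A}:=\mathbf{K}^{-1}\mathbf{V}^T\mathbf{M}_0^{-1},\ \
\mathbf{K}:=\mathbf{I}+\mathbf{V}^T\mathbf{M}_0^{-1}\mathbf{U}.
\]
I would store the full $\mathbf{M}_0^{-1}$ (refreshed only at resets), the three auxiliary objects $\mathbf{A}\in\mathbb{F}^{k'\times n}$, $\mathbf{B}\in\mathbb{F}^{n\times k'}$, and $\mathbf{K}^{-1}\in\mathbb{F}^{k'\times k'}$, together with the explicit value of $(\mathbf{M}^{-1})_{s,t}$ for each $(s,t)\in Y$.

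To process a single-entry update at $(r_{k'+1},c_{k'+1})$ with magnitude $\delta_{k'+1}$, the first step is to read the current column $(\mathbf{M}^{-1})_{:,\,r_{k'+1}}=(\mathbf{M}_0^{-1})_{:,\,r_{k'+1}}-\mathbf{B}\cdot \mathbf{A}_{:,\,r_{k'+1}}$ and, analogously, the current row indexed by $c_{k'+1}$, each in $O(nk')=O(n^{1+\mu})$ operations. Since the update induces a rank-1 correction to $\mathbf{M}^{-1}$ whose outer-product factors are scalar multiples of precisely these two vectors, every one of the $|Y|$ stored cells is refreshed by a single multiply--add, contributing $O(|Y|)$. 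Next I would extend $\mathbf{K}^{-1}$ by one row and one column via the block-inverse formula in $O(k'^2)$ operations, and use the resulting identities to update $\mathbf{A}$ and $\mathbf{B}$ incrementally by a rank-1 correction plus one appended row/column, whose data are slices of the stored $\mathbf{M}_0^{-1}$, for a total auxiliary cost of $O(nk')=O(n^{1+\mu})$. Epoch resets refresh $\mathbf{M}_0^{-1}$ by adding the correction $-\mathbf{B}\mathbf{A}$, computed by one rectangular multiplication of an $n\times k$ by a $k\times n$ matrix in $O(n^{\omega(1,\mu,1)})$ operations; charged over the $k$ updates of the epoch, this yields the $O(n^{\omega(1,\mu,1)-\mu})$ amortized term (the trivial $O(n^2)$ matrix addition is absorbed since $\omega(1,\mu,1)\geq 2$). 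A rank-$n^\delta$ batch update is handled identically to a reset with $k$ replaced by $n^\delta$: first flush the current buffer, then form the correction via rectangular multiplication in $O(n^{\omega(1,\delta,1)})$ operations, and finally refresh the $|Y|\leq n^2\leq n^{\omega(1,\delta,1)}$ stored cells directly from the new $\mathbf{M}_0^{-1}$. Insertion of $(s,t)$ into $Y$ initializes its value as $(\mathbf{M}_0^{-1})_{s,t}-\mathbf{B}_{s,:}\mathbf{A}_{:,t}$, a length-$k'$ inner product evaluated in $O(n^\mu)$ operations; deletion is trivial.

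The conceptual departure from \cite{BrandFN22} is quantitative rather than structural: the latter maintains only a ``square'' $k\times k$ submatrix of the inverse, where the analog of refreshing the tracked cells is a dense outer-product update of cost $O(k^2)$. For an unstructured $Y$, the obvious generalization---maintaining $Y$-restricted slices of $\mathbf{A},\mathbf{B}$---would cost a prohibitive $\Theta(|Y|\cdot k)$ per update. The hardest step of the proof will be to separate the maintenance cost of $(\mathbf{A},\mathbf{B},\mathbf{K}^{-1})$, which depends only on $n$ and $k$, from the cost of propagating into the $Y$-cells; this is accomplished by paying the $O(n^{1+\mu})$ term once per update to extract the two full vectors driving the rank-1 correction to $\mathbf{M}^{-1}$, after which the $Y$-refresh collapses to $O(|Y|)$ scalar multiply--adds. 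Standard deamortization (parallel copies with staggered resets) then converts the amortized reset cost into a worst-case one, completing the stated bounds.
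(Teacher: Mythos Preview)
Your proposal is essentially the same approach as the paper's: represent $\mathbf{M}^{-1}$ as a stored base inverse minus a product of two thin $n\times k'$ matrices with $k'\le n^\mu$, extract the two length-$n$ Sherman--Morrison vectors per entry update in $O(n^{1+\mu})$ via this representation, refresh the $Y$-cells in $O(|Y|)$ from the resulting rank-1 outer product, reset every $n^\mu$ updates by one $n\times n^\mu$ by $n^\mu\times n$ product, and deamortize. The paper's bookkeeping is slightly leaner---it applies Sherman--Morrison iteratively against the \emph{current} represented inverse and simply appends the resulting $u',v'$ to $\mathbf{L},\mathbf{R}$, so no explicit $\mathbf{K}^{-1}$ is ever maintained---but this is cosmetics, not substance.

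There is one genuine (if easily fixed) slip in your handling of rank-$n^\delta$ updates. You write ``first flush the current buffer'', but a flush costs $O(n^{\omega(1,\mu,1)})$, which exceeds the claimed $O(n^{\omega(1,\delta,1)})$ bound whenever $\delta<\mu$. The paper sidesteps this by \emph{not} flushing: it applies Woodbury directly with the current representation $\mathbf{N}-\mathbf{L}\mathbf{R}^\top$ in place of $\mathbf{M}^{-1}$, computing $(\mathbf{N}-\mathbf{L}\mathbf{R}^\top)\mathbf{U}$ and $\mathbf{V}^\top(\mathbf{N}-\mathbf{L}\mathbf{R}^\top)$ in $O(n^{\omega(1,\delta,1)})$ (each intermediate product has one dimension at most $n$ and another at most $n^\delta$ or $n^\mu$), and subtracts the resulting rank-$n^\delta$ correction from $\mathbf{N}$ alone, leaving $\mathbf{L},\mathbf{R}$ and hence the epoch counter untouched. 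With that change your argument matches the paper's.
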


\begin{proof}
	The data structure internally maintains $n\times n$ matrices $\mN$ and $n \times k$ matrices $\mL,\mR$ (with $k \le n^\mu$) such that
	\begin{align}\label{eq:inverse}
	\mM^{-1} = \mN - \mL \mR^\top.
	\end{align}
	Further, the data structure explicitly maintains the values $\mM^{-1}_{s,t}$ for all $s,t \in Y$.
	\paragraph{Initialization.}
	During initialization, the data structure computes $\mM^{-1}$ in $O(n^\omega)$ operations and stores this inverse as $\mN$. The matrices $\mL,\mR$ are initialized as all-0 matrices.
	In particular, since we compute $\mM^{-1}$ explicitly, we know $\mM^{-1}_{s,t}$ for all $s,t \in Y$.
	\paragraph{Update to set $Y$.}
	When an index pair $(s,t)$ is added to set $Y$, we must compute $\mM^{-1}_{s,t}$.
	By representation \eqref{eq:inverse} and matrices $\mL,\mR$ consisting of at most $n^\mu$ columns, this takes $O(n^\mu)$ operations.
	When deleting an index pair from $Y$, we do not need to compute anything so such an update takes $O(1)$ time.
	\paragraph{Entry update to $\mM$.}
	When matrix $\mM$ changes, we must update the representation \eqref{eq:inverse}.
	By the Sherman-Morrison identity we have for any two vectors $u,v$ that
	\begin{align}
	(\mM+uv^\top)^{-1} = \mM^{-1} - \frac{\mM^{-1}u~v^\top\mM^{-1}}{1+v^\top\mM^{-1}u}. \label{eq:shermanmorrison}
	\end{align}
	When $\mM$ changes in a single entry, then vectors $u$, $v$ have only one non-zero entry so the products $\mM^{-1}u$ and $v^\top\mM^{-1}$ consist of reading one row and column of $\mM^{-1}$.
	By the representation \eqref{eq:inverse} and $\mR,\mL$ having at most $n^\mu$ columns, this takes $O(n^{1+\mu})$ operations.
	
	Let $u' := \mM^{-1}u / (1+v^\top\mM^{-1}u)$ and $v' = (v^\top\mM^{-1})^\top$, then we can set $\mL \leftarrow [\mL|u']$, $\mR \leftarrow [\mR|v']$ (i.e.~append these vectors to matrices $\mL,\mR$) and thus have
	\begin{align*}
	(\mM+uv^\top)^{-1} 
	=&~
	\mM^{-1} - \frac{\mM^{-1}u~v^\top\mM^{-1}}{1+v^\top\mM^{-1}u} \\
	=&~
	\mM^{-1} - u' v'^\top \\
	=&~
	\mN - \mL \mR^\top - u' v'^\top \\
	=&~
	\mN - [\mL|u'] [\mR|v']^\top
	\end{align*}
	so we again satisfy \eqref{eq:inverse}.
	
	Note that every $n^\mu$ updates, the matrices $\mL,\mR$ have $n^\mu$ columns.
	To guarantee these matrices never have more columns than that, we set $\mN \leftarrow \mN + \mL \mR^\top$ using $O(n^{\omega(1,1,\mu)})$ operations every $n^\mu$ updates, which is $O(n^{\omega(1,1,\mu)-\mu})$ amortized time per update.
	
	At last, note that we must maintain $\mM^{-1}_{s,t}$ for all $s,t \in Y$. This can be done via
	\begin{align*}
		(\mM+uv^\top)^{-1}_{s,t}
		=&~
		\mM^{-1}_{s,t} - u'_s v'_t.
	\end{align*}
	Which takes $O(|Y|)$ time per update.
	
	\paragraph{Low rank update to $\mM$.}
	Consider a rank $n^\delta$ update to $\mM$, i.e.,~we are given two matrices $\mU,\mV$ of size $n\times n^\delta$ and want to update $\mM \leftarrow \mM + \mU \mV^\top$. For these updates we update matrix $\mN$ in \eqref{eq:inverse} but not $\mL$ or $\mR$.
	The Woodbury identity states that
	\begin{align}
		(\mM + \mU \mV^\top)^{-1} = \mM^{-1} - \mM^{-1}\mU (\mI + \mV^\top \mM^{-1} \mU)^{-1} \mV^\top \mM^{-1}. \label{eq:woodbury}
	\end{align}
	Here we can compute $\mM^{-1}\mU = (\mN - \mL \mR^\top) \mU$ and $\mV^\top\mM^{-1} = \mV^\top (\mN - \mL\mR^\top)$ in $O(n^{\omega(1,1,\delta)})$ operations.
	In total, computing $\mM^{-1}\mU (\mI + \mV^\top \mM^{-1} \mU)^{-1} \mV^\top \mM^{-1}$ takes $O(n^{\omega(1,1,\delta)} + n^{\omega(1,\delta,\delta)} + n^{\delta \omega}) = O(n^{\omega(1,1,\delta)})$ operations.
	We subtract the result from $\mN$ in $O(n^2)$ operations, which is subsumed by the previous complexity terms. After this update, we still satisfy \eqref{eq:inverse} because
	\begin{align*}
		(\mM + \mU \mV^\top)^{-1} 
		=&~
		\mM^{-1} - \mM^{-1}\mU (\mI + \mV^\top \mM^{-1} \mU)^{-1} \mV^\top \mM^{-1} \\
		=&~
		\underbrace{\mN - \mM^{-1}\mU (\mI + \mV^\top \mM^{-1} \mU)^{-1} \mV^\top \mM^{-1}}_{\text{new }\mN} - \mL \mR^\top.
	\end{align*}
	\paragraph{Worst-case update time}
	Note that the rank $n^\delta$ update does not affect the size of $\mL$ or $\mR$, so the reset happens after exactly every $n^\mu$ element updates. 
	So the amortized $O(n^{\omega(1,1,\mu)-\mu}+n^{1+\mu}+|Y|)$ update time
	can be made worst-case via standard technique, see e.g.~\cite[Appendix B]{BrandNS19}.
\end{proof}

\Cref{lem:matrix_inverse} can be extended to dynamic matrix inverse for polynomial matrices.
Note that for a polynomial matrix, the inverse might not exist since it's a matrix over a ring instead of a field.
The following \Cref{cor:poly_matrix_inverse} verifies that all the required inverses are well-defined.

\begin{corollary}\label{cor:poly_matrix_inverse}
  Let $\mA$ be a dynamic $n\times n$ matrix receiving entry updates, $h\in[1,n]$, and $\mM := (\mI-X\mA) \in (\F[X]/\langle X^{h+1}\rangle)^{n\times n}$.
	Let $Y\subseteq [n]\times [n]$ be a dynamic set.
	For any $\mu \in [0,1]$, there exists a deterministic data structure maintaining the values $(\mM^{-1})_{s,t}$ for all $(s,t)\in Y$
	and supporting:
	\begin{itemize}
    \item single-entry updates to $\mA$ in $\Ot(h\cdot(n^{\omega(1,\mu,1)-\mu}+n^{1+\mu}+|Y|))$ operations in worst-case,
    \item rank $n^\delta$ updates to $\mA$ for any $\delta \in [0,1]$ in $\Ot(h\cdot n^{\omega(1,\delta,1)})$ operations in worst-case, 
		\item single-element insertions to $Y$ in $\Ot(h\cdot n^\mu)$ operations,
		\item removing elements from $Y$ in $O(1)$ time.
	\end{itemize}
\end{corollary}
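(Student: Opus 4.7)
The plan is to instantiate Lemma~\ref{lem:matrix_inverse} with $\mM := \mI - X\mA$ viewed as a matrix over the ring $R := \F[X]/\langle X^{h+1}\rangle$, and then pay a factor of $\Ot(h)$ to simulate every arithmetic operation of $R$ using FFT-based multiplication on the degree-$\le h$ polynomial representatives. If this goes through, the per-update operation counts of Lemma~\ref{lem:matrix_inverse} multiplied by $\Ot(h)$ give exactly the time bounds claimed in the corollary; the stored submatrix values $(\mM^{-1})_{s,t}$ are then the polynomials that encode $h$-bounded path counts in the usual way.

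The main obstacle is that $R$ is not a field, so one has to check that all the matrix/scalar inversions invoked by the Sherman--Morrison and Woodbury steps of Lemma~\ref{lem:matrix_inverse} actually yield elements of $R$ (respectively invertible matrices over $R$) throughout. The key observation is that every update to $\mA$ produces an update of the form $\mM \to \mM - X\,\Delta\mA$, i.e.\ the correction always carries an explicit factor of $X$. Writing the rank-$r$ correction as $\mU\mV^\top$ with the $X$ absorbed into $\mU$, i.e.\ $\mU = X\tilde\mU$, gives that every entry of $\mV^\top\mM^{-1}\mU$ lies in the ideal $\langle X\rangle$. Hence the Sherman--Morrison denominator $1 + v^\top\mM^{-1}u$ has constant term $1$, and the Woodbury matrix $\mI + \mV^\top\mM^{-1}\mU$ has constant term equal to the identity; both are units in $R$ (resp.\ in $\mathrm{Mat}_r(R)$), since an element of $R$ is a unit iff its reduction modulo $X$ is nonzero, and a matrix over $R$ is invertible iff its reduction modulo $X$ is. The matrix $\mM$ itself is invertible in $R^{n\times n}$ by the same reasoning, with explicit Neumann-series inverse $\sum_{k=0}^h X^k\mA^k$.

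With invertibility established, the algorithm of Lemma~\ref{lem:matrix_inverse} runs verbatim over $R$. Ring operations (addition, multiplication) cost $\Ot(h)$ each; the scalar inverse $1/(1+v^\top\mM^{-1}u)$ needed by Sherman--Morrison is computed by Newton iteration doubling the precision at each step, in $\Ot(h)$ time; the $r \times r$ matrix inverse $(\mI + \mV^\top\mM^{-1}\mU)^{-1}$ for $r \le n^\delta$ inside Woodbury is computed analogously by matrix Newton iteration ($\mM_{i+1} := \mM_i(2\mI - \mM\mM_i)$), using $O(\log h)$ products of $r\times r$ matrices over $R$, for a total cost of $\Ot(h\cdot n^{\delta\omega})$, which is absorbed by the $\Ot(h\cdot n^{\omega(1,\delta,1)})$ budget. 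Every remaining operation count from Lemma~\ref{lem:matrix_inverse} (including the periodic reset $\mN \leftarrow \mN + \mL\mR^\top$) scales by the $\Ot(h)$ per-operation overhead, and the standard deamortization trick of~\cite{BrandNS19} preserves worst-case time, giving the four bounds stated in Corollary~\ref{cor:poly_matrix_inverse}.
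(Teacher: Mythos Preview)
Your proposal is correct and follows essentially the same approach as the paper: run Lemma~\ref{lem:matrix_inverse} over the ring $\F[X]/\langle X^{h+1}\rangle$, pay an $\Ot(h)$ overhead per arithmetic operation, and verify that the Sherman--Morrison denominator and the Woodbury matrix are invertible because every update to $\mM$ carries a factor of $X$, making them of the form $1 + X\cdot P(X)$ and $\mI + X\mB$ respectively. Your additional remarks on computing these inverses via Newton iteration are more explicit than the paper's proof (which simply cites the Neumann-series formula), but the underlying argument is identical.
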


\begin{proof}
	The proof follows the common technique of extending dynamic matrix inverse to polynomial matrices \cite{Sankowski04,BrandNS19,BrandFN22}.
	First, note that the inverse of $\mM$ exists and is given by
  \begin{equation*}
    (\mI-X\mA)^{-1} = \sum_{k=0}^{h} X^k \mA^k.
  \end{equation*}
	This is because
  \begin{equation*}
    (\mI - X\mA) \left(\sum_{k=0}^{h} X^k \mA^k\right) = \sum_{k=0}^{h} X^k \mA^k - \sum_{k=1}^{h+1} X^k \mA^k = \mI - X^{h+1} \mA^{h+1} = \mI,
  \end{equation*}
  where the last step uses that the entries of the matrix come from $\F[X]/\langle X^{h+1}\rangle$, i.e.~all arithmetic of the matrix entries is performed modulo $X^{h+1}$, so $X^{h+1}\mA^{h+1}$ is an all-zero matrix.
	
	We obtain \Cref{cor:poly_matrix_inverse} by running \Cref{lem:matrix_inverse} on the matrix $\mM := (\mI - X\mA)$. Note that the complexity increases by a factor of $\Ot(h)$ because the arithmetic operations are now performed on polynomials of degree at most $h$.
	
	The algorithm from \Cref{lem:matrix_inverse} works on polynomial matrices, because the division in \eqref{eq:shermanmorrison} and inversion of \eqref{eq:woodbury} are well defined.
	To see this, note that an update to $\mM = (\mI - X\mA)$ is given by an update to $\mA$.
	So we can assume that vector $u$ in \eqref{eq:shermanmorrison} (and likewise matrix $\mU$ in \eqref{eq:woodbury}) is a multiple of~$X$.
	Thus, the denominator in \eqref{eq:shermanmorrison} is a polynomial of form $(1+X\cdot P(x))$ (and the to be inverted matrix in \eqref{eq:woodbury} is of form $\mI+X\mB$, where $\mB$ is a polynomial matrix) whose inverse is given by
	$\sum_{k=0}^h X^k (-P(X))^k$ (and $\sum_{k=0}^h X^k (-\mB)^k$).
\end{proof}

We can now prove \Cref{t:h-bounded-submatrix-batch} via the reduction from dynamic distance to dynamic matrix inverse by \cite{Sankowski05}.

\begin{proof}[Proof of \Cref{t:h-bounded-submatrix-batch}]
	We maintain the distances in a dynamic graph via a reduction by Sankowski \cite{Sankowski05} (and its deterministic variant from \cite{BrandFN22}).
	Note that 
  \begin{equation*}
    (\mI-X\mA)^{-1} = \sum_{k=0}^h X^k \mA^k \in (\F[X]/\langle X^{h+1} \rangle)^{n\times n}
  \end{equation*}
	and when $\mA$ is an adjacency matrix, we have that $(\mA^k)_{s,t}$ is the number of (not necessarily simple) paths from $s$ to $t$ of length $k$.
	In particular, the smallest $k$ such that $\mA^k_{s,t} \neq 0$ is the $st$-distance.
	So by maintaining the inverse of $(\mI-X\mA)$ for adjacency matrix $\mA$ via data structure \Cref{lem:matrix_inverse} we can maintain $h$-bounded distances by just checking the entries of $(\mI-X\mA)^{-1}$.
	
	The maximum number of paths is bounded by $n^h$, so we can pick as field $\F = \Z_p$ for some prime of size $n^h \le p \le 2n^h$. Here every arithmetic operation in $\Z_p$ takes $O(h)$ time in the Word-RAM model with word size $O(\log n)$.
	Thus the time complexity increases by a factor $O(h)$ compared to \Cref{cor:poly_matrix_inverse}.
\end{proof}

\section{Omitted proofs}\label{s:omitted}

\lblocks*
\begin{proof}
  Suppose we identify $V$ with $\{1,\ldots,n\}$. Let us partition $V$ into $q=O(n/b)$ blocks $V_1,\ldots,V_q$ of~$\leq b$ consecutive
  integers. Moreover, let $V'$ be a copy of $V$, $V\cap V'=\emptyset$, such that each $v\in V$ has a unique corresponding vertex $v'\in V'$.
  Set $V_{\leq i}:=\bigcup_{j=1}^i V_j$ for $i\geq 0$.
  For each $i=1,\ldots,q$, define:
  \begin{equation*}
    G_i:=\left(V\cup V',E(G)\cup \left\{uv':uv\in E(G)\text{ and }u\in V_{\leq i}\right\}\right).
  \end{equation*}
  Note that with graphs $G_i$ defined as above, a single-edge update to $G$ leads to at most two single-edge
  updates in each $G_i$. Similarly, if vertex updates are supported, then a vertex update to $G$
  can be reflected in each $G_i$ using $O(1)$ vertex updates therein.
  Moreover, we have $G_q=G$.
  
  Observe that for any $u,v\in V$, $u\neq v$, there is a 1-1 correspondence between $u\to v$ paths $P=P'\cdot xv$ in $G$
  with $x\in V_{\leq i}$ and paths $u\to v'=P'\cdot xv'$ in $G_i$.
  As a result, we have $\dist_G^h(u,v)=\dist_{G_i}^h(u,v')$ iff some shortest $h$-bounded path in
  $G$ has the penultimate vertex in $V_{\leq i}$.
  Moreover, every path in $G$ is preserved in $G_i[V]$, so $\dist_G^h(u,v)=\dist_{G_i}^h(u,v)$ for all $i$.

  We construct and maintain data structures $\mathcal{D}(G_1),\ldots,\mathcal{D}(G_q)$.
  Since the $q=O(n/b)$ graphs $G_i$ are at most twice as large as $G$, the initialization
  time is $O\left(\frac{n}{b}\cdot \isource(n,h)\right)$ and the update time is
  $O\left(\frac{n}{b}\cdot \usource(n,h)\right)$.

  Upon query $(S,Y)$, for each $s\in S$ we issue a single-source query to each $\mathcal{D}(G_i)$,
  so that the values $\dist_{G_i}^h(s,t)$ for all $(s,t)\in S\times V$ and $i$ are computed.
  This takes $O((n/b)\cdot |S|\cdot \qsource(n,h))$ time.
  Within the same bound, for each $(s,t)\in S\times V$, we (naively) locate the smallest index $i_{s,t}$ such that $\dist_{G_{i_{s,t}}}^h(s,t')=\dist_{G_q}^h(s,t')$.

  For each $(s,t)\in Y$, we find the predecessor on some shortest $h$-bounded $s\to t$ path in $G$ as follows.
  Find the first vertex $p_{s,t}\in V_{i_{s,t}}$ such that $p_{s,t}t\in E(G)$ and $\dist_{G_q}^h(s,t')=\dist_{G_{q}}^h(s,p_{s,t})+1$, and report
  $p_{s,t}$ as the sought predecessor. Since $|V_{i_{s,t}}|\leq b$, through all $(s,t)$ this takes $O(|Y|b)$ time.
  
  We now argue that the above is correct. By the definition
  of $i_{s,t}$, we have either $i_{s,t}=1$ or $\dist_{G_{i_{s,t}-1}}^h(s,t)>\dist_{G_q}^h(s,t)$.
  As a result, (1) no shortest $h$-bounded $s\to t$ path in $G=G_q$ has the penultimate vertex in $V_{\leq (i_{s,t}-1)}$,
  (2) at least one such path has the penultimate vertex in $V_{\leq i_{s,t}}$.
  But $V_{\leq i_{s,t}}\setminus  V_{\leq (i_{s,t}-1)}=V_{i_{s,t}}$ and thus the penultimate vertex $z$ on each such path
  has to satisfy $z\in V_{i_{s,t}}$, $zt\in E(G)$ and $\dist_{G}^h(s,z)+1=\dist_G^h(s,t)$.
  On the other hand, any $z$ satisfying this condition is a valid predecessor vertex.
  By $\dist_{G_q}^h(s,t')=\dist_G^h(s,t)$ and $\dist_{G_q}^h(s,z)=\dist_G^h(s,z)$ we obtain
  that the found vertex $p_{s,t}$ is a valid predecessor.
\end{proof}

\lblockstwo*
\begin{proof}
  We will only describe the changes needed to make to the construction
  of Lemma~\ref{l:blocks}. We use only one data structure $\mathcal{D}(G)$;
  for the $q=O(n/b)$ auxiliary graphs $G_i$, we use data structures
  $\mathcal{D}'(G_1),\ldots,\mathcal{D}'(G_q)$.
  The update and initialization bounds follow easily.

  Recall that in the proof of Lemma~\ref{l:blocks}, once we have computed,
  for each $(s,t)\in Y$, the smallest index $i_{s,t}$ such that $\dist^h_{G_{i_{s,t}}}(s,t')=\dist_{G_q}^h(s,t')=\dist_G^h(s,t)$, locating the desired predecessors
  required $O(|Y|b)$ additional time.
  The values $\dist_G^h(s,t)$ for all $(s,t)\in Y$ can be clearly computed
  by issuing $|S|$ single-source queries to $\mathcal{D}(G)$, i.e., in
  $O(|S|\cdot \qsource(n,h))$ time.
  By $G_1\subseteq G_2\subseteq \ldots\subseteq G_q$,
  we have $\dist^h_{G_1}(s,t')\geq \dist^h_{G_2}(s,t')\geq \ldots \geq \dist^h_{G_q}(s,t')=\dist^h_{G}(s,t)$.
  As a result, the desired smallest $i_{s,t}$ such that $\dist_{G_{i_{s,t}}}^h(s,t')=\dist_G^h(s,t)$ can be located via binary search using $O(\log{n})$
  queries about $\dist_{G_i}^h(s,t')$ issued to $\mathcal{D}'({G_i})$.
  Each of these queries costs $\qpair(n,h)$ time.
\end{proof}

\linducedsubgraph*
\begin{proof}
  Let a graph $G_W=(V',E')$ be obtained as follows. For each $v\in V$, there
  are two copies $v^+$ and $v^-$ of $v$ in $V'$, and an auxiliary edge $e_v=v^+v^-\in E'$ iff $v\in W$.
  Moreover, each edge $uv\in E$ gives rise to the edge $u^-v^+$ in $E'$.
  This way, $|V'|=2|V|$ and $|E'|=|E|+|W|$.

  Clearly, an edge update to $G$ can be simulated using a single edge update to $G_W$.
  Similarly, a vertex update to $G$ can be simulated using $O(1)$ vertex updates to $G_W$.
  Finally, an element update to $W$ translates to a single auxiliary edge update to $G_W$.
  We set $\mathcal{D}'(G):=\mathcal{D}(G_W)$.

  Observe a 1-1 correspondence of $k$-edge paths $P=s\to t$ in $G[W]$ and $2k$-edge
  paths $s^+\to t^+$ in $G_W$.
  Moreover, no odd-length $s^+\to t^+$ path exists in $G_W$.
  As a result, we have $\dist_G^h(s,t)=\frac{1}{2}\cdot \dist_G^{2h}(s^+,t^+)$.
  The submatrix predecessor query $(S,Y)$ in $G[W]$ is translated
  to the query $\left(\{s^+:s\in S\},\{(s^+,t^+):(s,t)\in Y\}\right)$
  issued to $G_W$.
  By the path correspondence and the fact that all incoming edges of $t^+$ in $G_W$ correspond
  to original incoming edges of $t$ in $G$, we can indeed answer a submatrix predecessor
  query on $G[W]$ in $O(Q(|S|,|Y|))$ time.
\end{proof}

\end{document}